\theoremstyle{plain}\newtheorem{theorem}{Theorem}
\theoremstyle{plain}\newtheorem{proposition}{Proposition}
\theoremstyle{plain}\newtheorem{corollary}{Corollary}
\theoremstyle{plain}\newtheorem{lemma}{Lemma}
\theoremstyle{plain}
\theoremstyle{definition}
\newtheorem{definition}{Definition}
\newtheorem{remark}{Remark}
\newtheorem{example}{Example}
\newcommand{\supp}[1]{\text{supp}(#1)}
\newcommand{\argmax}{\mathop{\rm arg~max}\limits}
\let\@makefntextOring\@makefntext
\def\@makefntext#1{\@makefntextOring{\baselineskip=15pt#1}}
\definecolor{navyblue}{rgb}{0.0, 0.0, 0.5}
\definecolor{green}{rgb}{0.2, 0.5, 0.2}
\def\citeapos#1{\citeauthor{#1}\textcolor{navyblue}{'s} (\citeyear{#1})}
\patchcmd{\thanks}{#1}{\protect\doublespacing}{}{}
\DeclareFontShape{OT1}{cmr}{m}{n}{<->cmr10}{}
\title{Value of Information in Social Learning\thanks{\protect \onehalfspacing %\doublespacing 
We thank the editor, an associate editor, the anonymous referees, Yu Awaya, Makoto Hanazono, Michihiro Kandori, Satoshi Kasamatsu, Daiki Kishishita, Akihiko Matsui, Satoshi Nakada, Shinpei Noguchi, Daisuke Oyama, Tadashi Sekiguchi, Ryo Shirakawa, Wataru Tamura, Yuichi Yamamoto, and seminar participants at Tokyo University of Science, Nagoya University, Keio University, the 2025 Japanese Economic Association Autumn Meeting, and Economic Theory Workshop in Morioka.
All remaining errors are our own.
Sato acknowledges the financial support from the JSPS KAKENHI Grant 24KJ0100.
Shimizu acknowledges the financial support from the JSPS KAKENHI Grant 23KJ0667.
}}
\author{
\Large Hiroto Sato\thanks{\protect \onehalfspacing %\doublespacing 
Department of Economics, Nagoya University, Furo-cho, Chikusa-ku, Nagoya 464-8601, Japan. 
Email: \url{sato.hiroto.s9@f.mail.nagoya-u.ac.jp}.
}
\and
\Large Konan Shimizu\thanks{\protect \onehalfspacing %\doublespacing 
Faculty of Economics, Keio University, 2-15-45 Mita, Minato-ku, Tokyo 108-8345, Japan.
Email: \url{shimizu-konan@keio.jp}.}
} 
\date{\today}
\begin{document}

\begin{titlepage}
\maketitle
\begin{abstract}
    This study extends \citeapos{blackwell1953equivalent} comparison of information to a sequential social learning model in which agents make decisions sequentially based on both private signals and observed actions of others.
    In this context, we introduce a binary relation over information structures: 
    an information structure is {\it more socially valuable} than another if it yields higher expected payoffs for {\it all} agents, regardless of their preferences and equilibrium realizations.
    First, we establish that this binary relation is strictly stronger than the Blackwell order.
    Next, we provide a necessary and sufficient condition for our binary relation and propose a simpler sufficient condition that is easier to verify.
    We further explore comparisons of information structures in terms of long-run payoffs, limit welfare, and canonical binary environments.
\end{abstract}
Keywords: Comparison of experiments; Social learning; Herding.
% \vspace{0in}

\noindent 
JEL Codes: D83 \\
\bigskip

\setcounter{page}{0}
\thispagestyle{empty}
\end{titlepage}
%%%%%%%%%%%%%%%%%%%%%%%%%%%%%%%%%%%%%%%%%%%%%%%%%%%%%%%%%%%%%%%%%%%%%%%%%%%%%%%%%%%%%%%%%%%%%%%%%%
%%%%%%%%%%%%%%%%%%%%%%%%%%%%%%%%%%%%%%%%%%%%%%%%%%%%%%%%%%%%%%%%%%%%%%%%%%%%%%%%%%%%%%%%%%%%%%%%%%%%%%%

\newcolumntype{C}[1]{>{\centering\arraybackslash}p{#1}}

%===================================================
\setlength{\abovedisplayskip}{5pt}
\setlength{\belowdisplayskip}{5pt}
%==================================================

%%%%%%%%%%%%%%%%%%%%%%%%%%%%%%%%%%%%%%%%%%%%%%%%%%%%%%%%%%%%%%%%%%%%%%%%%%%%%%%%%%%%%%%%%%%%%%%%%%%%%%
%%%%%%%%%%%%%%%%%%%%%%%%%%%%%%%%%%%%%%%%%%%%%%%%%%%%%%%%%%%%%%%%%%%%%%%%%%%%%%
%%%%%%%%%%%%%%%%%%%%%%%%%%%%%%%%%%%%%%%%%%%%%%%%%%%%%%%%%%%%%%%%%%%%%%%%%%%%%%
\section{Introduction} \label{sec:introduction}
% Blackwell
In classical decision theory, an information source is considered more valuable than another if it enables an individual decision maker to make better choices under uncertainty.
This has been established by \citeapos{blackwell1953equivalent} comparison of information structures, which evaluates information structures based on whether a single agent would always prefer one over another, regardless of preferences.
Thus, the Blackwell order provides a preference-independent criterion for comparing the value of information sources for a single decision maker.

% Information externalities
However, in many real-world settings, decision makers do not rely solely on their private signals; they also acquire information from the observed actions of others.
This creates an information externality: an individual’s decision not only influences their own outcome but also transmits information to future decision makers.
Because of this externality, simply comparing information structures based on their value for individual decision-making is no longer sufficient to evaluate the value of information in society.

% Real-life example: e-commerce platform
A motivating example arises in online shopping.\footnote{A similar example is provided in \citet{arieli2023herd}.}
Buyers can obtain information about product quality from advertisements or product descriptions on each product page.
In addition to these signals, buyers often have access to information about the volume of past purchases.
When deciding whether to purchase a product, buyers use both their private signal and the information based on the popularity of previous buyers.
As a result, when assessing and comparing advertisements or product pages for their informational efficiency, it is essential to consider the information externalities created by the observability of past purchasing behavior.

% Model
To address the value of information in the society, we extend \citeapos{blackwell1953equivalent} comparison of information structures to the classical sequential social learning model \citep{banerjee1992simple,bikhchandani1992theory,smith2000pathological}.
In this model, homogeneous agents make decisions sequentially based on the past actions of others (referred to as history) and their own private signals.
These private signals are independently drawn from an identical information structure.\footnote{We discuss extensions to correlated and heterogeneous private signals, as well as heterogeneous decision problems, in Online Appendix \ref{sec:extension}.}
% A key feature is that each agent’s action reflects their private signal, and thus, observing past actions is an additional information beyond their private signals.
Within this framework, we introduce a binary relation over information structures:
an information structure is {\it more socially valuable} than another if it yields higher expected payoffs for {\it all} agents, regardless of their preferences and equilibrium realizations, in the presence of social learning.

% Result 1: more socially valuable is stronger than Blackwell
Within this model, we pose a benchmark question: {\it Does a Blackwell more informative source always make all agents better off in a social learning environment?}
The answer is no.
We observe that our binary relation is strictly stronger than the Blackwell order (Proposition \ref{prop: observation_strict_inclusion}). 
This follows intuitively because the history garbles signal realizations depending on the underlying decision problem.
Consequently, our binary relation requires a sufficiently informative signal to ensure that the joint value of history and the private signal increases.  
This highlights an essential feature of the observability assumption.
If agents could observe past signal realizations instead of actions, then a Blackwell more informative signal would always be more socially valuable.
Thus, this strict gap between our binary relation and the Blackwell order ultimately arises from whether agents can observe past signal realizations or only past actions, and consequently, from the resulting information externalities.

% Result 2: a necessary and sufficient condition
We then ask the main question: {\it When is one information structure more socially valuable than another?}
Theorem \ref{thm_characterization} provides a necessary and sufficient condition for our binary relation.
Specifically, one information structure is more socially valuable than another if and only if it yields higher expected payoffs for all agents across all decision problems and equilibria, even in settings where past signals (rather than actions) are observable under the alternative information structure.
The necessary condition, combined with classical results, indicates that an information structure is more socially valuable than another only if it induces unbounded (private) beliefs.
Thus, if an information structure induces an information cascade, then it is no longer more socially valuable than any other information structure.
Equivalently, the existence of conclusive signals, or asymptotic learning, is necessary for all agents to be better off.

% Result 3: a simple sufficient condition
Given the restrictive necessary condition, it is natural to ask: {\it Which pairs of information structures can be compared within our binary relation?}
This question naturally directs our focus to the sufficiency part of Theorem \ref{thm_characterization}, but verifying this condition is challenging, as it depends on the underlying decision problem.
To address this, we provide a clear and simple sufficient condition.
Specifically, Theorem \ref{thm_sufficient_condition} states that an information structure is more socially valuable than another if there exists a mixture of full and no information between them in the Blackwell order.
To verify the existence of such a mixture, Proposition \ref{prop_equivalent_sufficient_condition} provides an equivalent condition.
By combining these results, we show that an information structure is more socially valuable than another if it assigns a sufficiently high probability of disclosing conclusive signals about each state.
Therefore, a relatively higher degree of conclusiveness is sufficient for all agents to be better off.

% Proof sketch
This sufficient condition follows from the intrinsic properties of mixtures of two extreme information structures. 
Under any such mixture, the expected payoffs for all agents match those in a setting in which agents observe past signals rather than actions for any equilibrium and any decision problem.  
Moreover, any such mixture respects the Blackwell order:
If the mixture is Blackwell more informative than another information structure, then this mixture is more socially valuable.
Conversely, if an information structure is Blackwell more informative than the mixture, it is also more socially valuable.
Thus, if a mixture of full and no information exists between two information structures in the Blackwell order, they are also comparable in our binary relation.
Therefore, our proof finds and utilizes the unique property of mixtures of full and no information.
This avoids the complicated formula of expected payoffs for all agents, which is the main technical challenge of our analysis.

% Discussion
In Section \ref{sec:discussion}, we discuss several variants of our binary relation and their implications.
% Asymptotic Comparison
First, Section \ref{sec:eventual} introduces a binary relation based only on sufficiently later agents.
Formally, we say that one information structure is {\it eventually more socially valuable} than another if it yields weakly higher expected payoffs than another for all agents after some cutoff for all decision problems.
Proposition \ref{thm: asymptotic comparison} provides a characterization and highlights the distinction between this binary relation and the eventual Blackwell order by \citet{azrieli2014comment} and \citet{mu2021blackwell}, as shown in Proposition \ref{prop: large sample order and long run comparison}.
Moreover, by utilizing the eventual Blackwell order, Proposition \ref{thm: sufficient condition for long run comparison} provides a sufficient condition for this long-run comparison, which is weaker than the condition in Theorem~\ref{thm_sufficient_condition}.

% limit comparison
We further introduce a binary relation comparing only the limit welfare, which is one of the central objectives in the social learning literature.
Specifically, we say that one information structure is {\it eventually more socially valuable in the limit} than another if it yields weakly higher expected payoffs than another in the limit for all decision problems.
Theorem \ref{thm: limit comparison} shows that one information structure is eventually more socially valuable in the limit than another if and only if the former induces unbounded beliefs or the latter is no information.
Thus, the necessary condition for the more socially valuable order becomes sufficient in this limit welfare comparison.

% Equilibrium multiplicity
Next, our original binary relation requires the comparison to be robust to equilibrium multiplicity, that is, to hold for all equilibria, and therefore is not a partial order.
To relax this all-equilibria requirement, Section \ref{sec:weak_order} introduces a weaker binary relation.
We say that one information structure is {\it weakly more socially valuable} than another if it yields weakly higher payoffs for all agents under some equilibrium (rather than any equilibrium) than another.
This weaker version is a partial order.
Example \ref{example_1} also shows that this weaker relation is strictly stronger than the Blackwell order.
Furthermore, in Example \ref{example_weak}, we show that our original binary relation is strictly stronger than this weaker relation by providing a specific sufficient condition (Proposition \ref{observation_weak_order_3support}) that is independent of Theorem \ref{thm_sufficient_condition}.

% Binary Environment
Relatedly, in Section \ref{sec:restricted decision problem}, we relax the requirement of a universal domain of decision problems and focus on canonical binary environments that have been extensively studied in the social learning literature. 
Theorem \ref{prop: binary binary} shows that these environments exhibit a knife-edge property: for arbitrary binary information structures and generic binary decision problems, a Blackwell more informative signal weakly increases the expected payoff of every agent. 
Thus, the usual welfare intuition behind the Blackwell order survives social learning in this canonical setting, even though it fails more generally, as illustrated by Example \ref{example_1}. The reason is that, with binary signals and binary actions, an observed action either perfectly reveals the realized signal or conveys no additional information, precluding the adverse deterioration in the informational content of histories that underlies our negative examples.

% % Extensions
% % Dynamic info, Heterogeneous
% Finally, we extend our results to more general settings in Section \ref{sec:extension}.
% Section \ref{sec:adaptive_info} demonstrates that our main results naturally extend to dynamic information structures, allowing for correlation or heterogeneity across agents.
% Proposition \ref{prop_characterization_adaptive} provides a characterization, and Proposition \ref{prop_sufficient_condition_adaptive} derives a sufficient condition by building on the characterizations developed by \citet{renou2024comparing} and \citet{whitmeyer2024comparisons}.
% Section \ref{sec:heterogeneous} then considers models in which agents may face heterogeneous decision problems.
% Proposition \ref{prop_characterization_hetero} shows that our characterization continues to hold, while Example \ref{example_hetero} illustrates that our sufficient condition may fail in such settings.

%%%%%%%%%%%%%%%%%%%%%%%%%%%%%%%%%%%%%%%%%%%%%%%%%%%%%%%%%%%%%%%%%%%%%%%%%%%%%%
\subsection{Related Literature} \label{sec:literature}
% Blackwell: game theoretic setting
Pioneered by \citet{blackwell1951comparison,blackwell1953equivalent}, numerous studies have extended Blackwell’s comparison of experiments.\footnote{Some studies examine comparisons of experiments within a restricted domain of decision problems or a limited class of experiments \citep{lehmann1988comparing, persico2000information, athey2018value, ben2024new}.} 
Our study investigates comparisons in a game-theoretic setting, similar to \citet{lehrer2010signaling}, \citet{lehrer2013garbling}, \citet{gossner2000comparison}, \citet{pkeski2008comparison}, \citet{cherry2012strategically}, \citet{bergemann2016bayes}, and \citet[Section 6]{de2018blackwell}, but we focus specifically on the social learning model, where strategic interaction arises from information externalities rather than payoff externalities.

% Blackwell: large samples
Beyond the game-theoretic setting, our study is closely related to two strands of literature on comparisons of experiments.
The first strand examines comparisons involving repeated samples \citep{stein1951notes, torgersen1970comparison, moscarini2002law, azrieli2014comment, mu2021blackwell}.
Although each agent in our model receives a private signal independently drawn from the identical information structure, they cannot observe past signal realizations. 
Thus, our analysis departs from the standard learning model, and our binary relation becomes strictly stronger than the Blackwell order.
This aspect highlights the observability assumption inherent in the social learning model.
We discuss the relation between (a weaker version of) our binary relation and the eventual Blackwell order in \citet{azrieli2014comment} and \citet{mu2021blackwell} in Section \ref{sec:eventual}.

% Blackwell: dynamic setting
The second strand explores comparisons of dynamic information structures in sequential decision problems, as studied by \citet{greenshtein1996comparison}, \citet[Section 5]{de2018blackwell}, \citet{renou2024comparing}, and \citet{whitmeyer2024comparisons}.\footnote{\citet{whitmeyer2024dynamic} also analyze comparisons in dynamic decision problems in the presence of additional information, following \citet{brooks2024comparisons}.}
Similar to these studies, the information observed by agents is correlated across periods, but in our model this correlation arises from the correlation of past actions.
However, unlike in previous studies, the information they observe crucially depends on past actions and the underlying decision problem.

% Social learning
Broadly, this study contributes to the literature on social learning.
Since, for example, \citet{banerjee1992simple}, \citet{bikhchandani1992theory}, and \citet{smith2000pathological},\footnote{For a recent comprehensive survey, see \citet{bikhchandani2024information}.} a fundamental question has been whether agents can eventually learn the true state in various settings.\footnote{Examples include cases with limited observations of past actions \citep{ccelen2004observational, acemoglu2011bayesian, lobel2015information, arieli2019multidimensional, arieli2021general, kartik2024beyond, xu2025social} (see also \citet{gale2003bayesian,banerjee2004word,callander2009wisdom,smith2013rational}), as well as cases where observing past actions is costly \citep{kultti2006herding, kultti2007herding, song2016social}, and cases involving the costly acquisition of private signals \citep{mueller2016social,ali2018herding}.}
Some recent studies, such as \citet{arieli2023herd} and \citet{arielipositive}, examine how information structures can be optimally designed or regulated to influence herding or asymptotic behavior (see also \citet{lorecchio2022persuading} and \citet{parakhonyak2023information}).\footnote{Other important questions include social learning with correlated signals \citep{liang2020complementary, awaya2025social}, the speed and efficiency of learning \citep{hann2018speed, rosenberg2019efficiency}, and learning about the informativeness \citep{huang2024learning}.}
To the best of our knowledge, the comparison of experiments, which is the focus of this study, remains largely unexplored in the literature.
The primary technical challenge arises from the complexity of expected payoffs when analyzing all agents, which is even more difficult than focusing solely on asymptotic agents.
Our approach addresses this issue by leveraging the properties of mixtures of full and no information.

%%%%%%%%%%%%%%%%%%%%%%%%%%%%%%%%%%%%%%%%%%%%%%%%%%%%%%%%%%%%%%%%%%%%%%%%%%%%%%
\section{Model} \label{sec:model}
%%%%%%%%%%%%%%%%%%%%%%%%%%%%%%%%%%%%%%%%%%%%%%%%%%%%%%%%%%%%%%%%%%%%%%%%%%%%%%
%%%%%%%%%%%%%%%%%%%%%%%%%%%%%%%%%%%%%%%%%%%%%%%%%%%%%%%%
% The base game
There is an infinite sequence of ordered, homogeneous agents indexed by $i=1,2,\dots$, who make decisions sequentially. 
% State space
The state space is binary, $\Omega=\{L,H\}$ with a common prior.\footnote{For simplicity, we assume a binary state space, but our results can be extended to a finite state space.
} 
Let $\mu_{0}\in (0,1)$ be the prior of $\omega=H$.   
% Time
The periods are discrete ($t=0,1,\dots$), and each agent $i$ takes an action at period $i$ from a finite action set $A$.
% Payoff
A common payoff function $u: A\times \Omega\to \mathbb{R}$ determines each agent's payoff.\footnote{We discuss extensions to a heterogeneous-agent model in Online Appendix \ref{sec:heterogeneous}.}
The payoff of agent $i$ depends solely on their own action and the state, independent of actions taken by other agents.
% Decision problem $\mathcal{D}$ is a pair of $u$ and $A$ that satisfies following condition:
% \begin{align*}
%     \max_{a\in A} zu(a,H)+(1-z)u(a,L)
% \end{align*}
% has a solution for all $z\in[0,1]$.

%%%%%%%%%%%%%%%%%%%%%%%%%%%%%%%%%%%%%%%%%%%%%%%%%%%%%%%%
% Timing
The timing of this game is as follows:
At period $0$, nature first determines the true state, which remains unchanged throughout the game. 
In each period $i$, agent $i$ first observes the entire history, which consists of the actions of all preceding agents ($1,2,\dots, i-1$). 
Additionally, agent $i$ receives a private signal $s\in S$, drawn independently from an identical information structure $\pi: \Omega\to\Delta(S)$.\footnote{We discuss relaxations of the assumptions of independent and identically distributed private signals in Online Appendix \ref{sec:extension}.}
For simplicity, we assume that $S$ is finite.\footnote{Although the proof holds even when both $A$ and $S$ are countable, we impose this assumption to simplify notation.}
Following these observations, agent $i$ selects an action from the action set $A$.

%%%%%%%%%%%%%%%%%%%%%%%%%%%%%%%%%%%%%%%%%%%%%%%%%%%%%%%%
% Strategy, Payoff, Equilibrium
Given the decision problem $\mathcal{D}=(A,u)$ and the information structure $\pi:\Omega\to\Delta(S)$, the strategy of agent $i$ is denoted by $\sigma_{i}: A^{i-1}\times S \to \Delta(A)$.
%  For each $a^{*}\in A$ and $z\in[0,1]$, define
% \begin{align*}
%     &B(a^{*})=\{z\in[0,1] \mid a^{*}\in \argmax_{a\in A} \ [zu(a,H)+(1-z)u(a,L)]\}\\
%     &B^{-1}(z)=\{a\in A \mid z\in B(a)\}=\argmax_{a\in A} \ [zu(a,H)+(1-z)u(a,L)].
% \end{align*}
Given $\mathcal{D}=(A,u)$, $\pi$, and the strategy profile $\bm{\sigma}=(\sigma_{i})_{i\in \mathbb{N}}$, let $\alpha_{\leq i}^{\omega}(\pi,\bm{\sigma})\in \Delta(A^{i})$ denote the distribution of actions taken by agents $1,2,\dots,i$ when the state is $\omega$, that is,
\begin{align*}
 \alpha_{\leq i}^{\omega}(\bm{a}|\pi,\bm{\sigma})=\sum_{(s_{1},\dots,s_{i}) \in S^{i}}\prod_{k=1}^{i}\sigma_{k}(a_{k}|a_{1},\dots,a_{k-1},s_{k})\pi(s_{k}|\omega).
\end{align*}
Similarly, let $\alpha_{i}^{\omega}(\pi,\bm{\sigma})\in \Delta(A)$ be the distribution of actions taken by agent $i$ when the state is $\omega$, that is,
\begin{align*}
 \alpha_{i}^{\omega}(a|\pi,\bm{\sigma})=\sum_{(a_{1}',\dots,a_{i-1}')\in A^{i-1}}\alpha_{\leq i}^{\omega} (a_{1}',\dots,a_{i-1}',a | \pi,\bm{\sigma}).
\end{align*}
Note that $\alpha_{i}^{\omega}(\pi,\bm{\sigma})$ does not depend on the strategies of agents after $i$.
Let $V_{i}^{\mathcal{D}}(\pi,\bm{\sigma})$ be the ex-ante expected payoff for agent $i$. 
Precisely,
\begin{align*}
    V_{i}^{\mathcal{D}}(\pi,\bm{\sigma})=\mathbb{E}_\omega\left[\sum_{a\in A} \alpha_{i}^{\omega}(a|\pi,\bm{\sigma})u(a,\omega)\right].
\end{align*}
We say that the strategy profile $\bm{\sigma}^{*}$ is a Bayes-Nash equilibrium (hereafter referred to simply as an equilibrium) under $(\mathcal{D},\pi)$ if
\begin{align*}
    V_{i}^{\mathcal{D}}(\pi,\bm{\sigma}^{*})\geq V_{i}^{\mathcal{D}}(\pi,(\sigma_{i},\bm{\sigma}_{-i}^{*}))
\end{align*}
for all $\sigma_{i}$ and $i$.

%%%%%%%%%%%%%%%%%%%%%%%%%%%%%%%%%%%%%%%%%%%%%%%%%%%%%%%%
% Product
For two information structures $\pi:\Omega\to\Delta(S)$ and $\pi':\Omega\to\Delta(S')$, define their product $\pi \otimes\pi':\Omega\to\Delta(S\times S')$ as
\begin{align*}
    (\pi\otimes \pi')((s,s')|\omega)=\pi(s|\omega)\pi'(s'|\omega)
\end{align*}
for all $s\in S$, $s'\in S'$, and $\omega\in\Omega$.
We denote
\[
\pi^{\otimes i} = \pi \otimes \dots \otimes \pi
\]
as the information structure generated by $i$ conditionally independent observations from $\pi$.
% Observable signal
Define $\overline{V}_{i}^{\mathcal{D}}(\pi)$ as
\begin{align*}
  \overline{V}_{i}^{\mathcal{D}}(\pi)=\max_{\sigma_{i}:S^{i}\to\Delta(A)} \mathbb{E}_\omega\left[\sum_{a\in A}\sum_{\bm{s}\in S^{i}}\sigma_{i}(a|\bm{s}) \pi^{\otimes i}(\bm{s}|\omega)u(a,\omega)\right]. 
\end{align*}
In other words, this represents the maximized expected payoff when agent $i$ independently observes the signal drawn from $\pi$ for $i$ times.

%%%%%%%%%%%%%%%%%%%%%%%%%%%%%%%%%%%%%%%%%%%%%%%%%%%%%%%%
% Private belief, Public belief, Posterior belief
Given the information structure $\pi:\Omega\to\Delta(S)$, define $\mu\in \Delta[0,1]$ as the 
{\it private belief distribution} induced by $\pi$, which represents the distribution over private beliefs about the state being $H$ after observing private signals from $\pi$.
Specifically, for $x\in[0,1]$, 
% \begin{align*}
%     \mu(x)=\sum_{s\in S(x)}[\mu_{0}\pi(s|H)+(1-\mu_{0})\pi(s|L)],
% \end{align*}
\begin{align*}
    \mu(x)=\mathbb{E}_{\omega}\left[\sum_{s\in S(x)}\pi(s|\omega)\right],
\end{align*}
where $S(x)=\{s\in S\mid \frac{\mu_{0}\pi(s|H)}{\mu_{0}\pi(s|H)+(1-\mu_{0})\pi(s|L)}=x\}$.
% Conclusive signal
With a slight abuse of notation, we define $\pi(\mu=x|\omega)=\sum_{s\in S(x)}\pi(s|\omega)$ for each $\omega\in \Omega$.
% \[
% \pi(\mu=x|H)=\sum_{s\in S(x)}\pi(s|H) \quad \text{ and } \quad  \pi(\mu=x|L)=\sum_{s\in S(x) }\pi(s|L).
% \]
We say that a signal $s$ is a {\it conclusive signal about} $\omega=H$ (resp. $\omega=L$) if $s\in S(1)$ (resp. $s\in S(0)$).
% Full and no information
Additionally, we say that an information structure $\pi$ is {\it no information} if $\supp{\mu}=\{\mu_{0}\}$, and that $\pi$ is {\it full information} if $\supp{\mu}=\{0,1\}$.
Slightly abusing the terminology, $\pi$ is a {\it mixture of full and no information} if $\supp{\mu}=\{0,\mu_{0},1\}$.\footnote{
This definition differs slightly from the usual definition of a mixture of information structures.
Formally, we say that $\pi$ is a {\it mixture} of $\pi'$ and $\pi''$ if $\mu=\lambda \mu' + (1-\lambda)\mu''$ for some $\lambda\in(0,1)$.
When $\supp{\pi'}\cap \supp{\pi''}=\emptyset$, this definition coincides with the standard one.
}
Given $\pi$, $\bm{\sigma}$, and $i\geq 2$, define $\rho_{i}\in \Delta[0,1]$ as the {\it public belief distribution}, which is the distribution over public beliefs regarding the state being $H$ induced by observing actions taken by agents $1,\dots,i-1$.
Formally, for each $x\in[0,1]$, 
% \begin{align*}
%     \rho_{i}(x)=\sum_{\bm{a}\in A^{i-1}(x)}\left[\mu_{0} \alpha_{\leq i-1}^{H}(\bm{a}|\pi,\bm{\sigma})+(1-\mu_{0})\alpha_{\leq i-1}^{L}(\bm{a}|\pi,\bm{\sigma})\right],
% \end{align*}
\begin{align*}
    \rho_{i}(x)=\mathbb{E}_{\omega}\left[\sum_{\bm{a}\in A^{i-1}(x)}\alpha_{\leq i-1}^{\omega}(\bm{a}|\pi,\bm{\sigma})\right],
\end{align*}
where $A^{i-1}(x)=\{\bm{a}\in A^{i-1} \mid \frac{\mu_{0} \alpha_{\leq i-1}^{H}(\bm{a}|\pi,\bm{\sigma})}{\mu_{0} \alpha_{\leq i-1}^{H}(\bm{a}|\pi,\bm{\sigma})+(1-\mu_{0}) \alpha_{\leq i-1}^{L}(\bm{a}|\pi,\bm{\sigma})}=x\}$.
Then, based on their private and public beliefs, agents update their {\it posterior beliefs} about $\omega=H$ according to Bayes' rule.\footnote{
Given a public belief $x$ and a private belief $y$, the posterior belief is 
$
\frac{xy}{xy+\frac{\mu_{0}}{1-\mu_{0}}(1-x)(1-y)}.
$
}

%%%%%%%%%%%%%%%%%%%%%%%%%%%%%%%%%%%%%%%%%%%%%%%%%%%%%%%%%%%%%%%%%%%%%%%%%%%%%%
\section{Results} \label{sec:main}
%%%%%%%%%%%%%%%%%%%%%%%%%%%%%%%%%%%%%%%%%%%%%%%%%%%%%%%%%%%%%%%%%%%%%%%%%%%%%%
% Blackwell order
For comparison, we introduce the standard Blackwell order, denoted $\pi\succsim_{B} \pi'$, when $\pi$ is Blackwell more informative than $\pi'$, that is, when $\pi'$ is a garbling of $\pi$.\footnote{Formally, $\pi'$ is a garbling of $\pi$ if there exists a Markov kernel $\gamma:S\to \Delta(S')$ such that $\pi'(s'|\omega)=\sum_{s\in S}\gamma(s'|s)\pi(s|\omega)$.}
By the Blackwell's classical theorem, $\pi \succsim_{B} \pi'$ if and only if the single decision maker (agent $1$ in our model) prefers $\pi$ over $\pi'$ for all decision problems.

% Social value order
By contrast, our primary focus is on the following binary relation:
\begin{definition}\label{def: socially valuable order}
    $\pi$ is \textit{more socially valuable} than $\pi'$, denoted $\pi\succsim_{S} \pi'$, if for any decision problem $\mathcal{D}=(A,u)$, $V_{i}^{\mathcal{D}}(\pi,\bm{\sigma}^{*})\geq V_{i}^{\mathcal{D}}(\pi',\bm{\sigma}^{**})$ for any agent $i$, any equilibrium $\bm{\sigma}^{*}$ under $(\mathcal{D},\pi)$, and any equilibrium $\bm{\sigma}^{**}$ under $(\mathcal{D},\pi')$.
\end{definition}
Thus, $\succsim_{S}$ requires that \textit{all} agents weakly prefer $\pi$ over $\pi'$ in terms of expected payoff, across \textit{all} decision problems and \textit{all} equilibria.\footnote{In Section \ref{sec:eventual}, we introduce a weaker binary relation that requires all asymptotic agents to prefer one information structure to another for every decision problem.} 
This Pareto-type order is natural from a societal perspective and provides insights for the social learning literature.
Specifically, this strong requirement of $\succsim_{S}$ implies that if $\pi$ is more socially valuable than $\pi'$, then $\pi$ achieves higher welfare, defined as the discounted sum of payoffs, than $\pi'$ for all discount factors, decision problems, and equilibrium realizations.\footnote{
We discuss the comparison of limit average discounted payoffs for discount factors sufficiently close to $1$ in Section \ref{sec:eventual}.
% We discuss welfare comparisons under discount factors sufficiently close to $1$ in Section \ref{sec:eventual}.}
}
Additionally, $\pi \succsim_{S} \pi'$ indicates that learning occurs faster under $\pi$ than under $\pi'$ for all decision problems, connecting our binary relation $\succsim_{S}$ to the literature on the speed of learning in social learning \citep{hann2018speed, rosenberg2019efficiency}.

% Justification
Definition \ref{def: socially valuable order} also requires that every equilibrium under $\pi$ yields a higher payoff than every equilibrium under $\pi'$. Thus, our criterion is robust to equilibrium multiplicity.
In this sense, $\succsim_{S}$ provides a robust welfare-based criterion for comparing information structures in the social learning.\footnote{
Section \ref{sec:weak_order} examines a weaker binary relation that relaxes the requirement that the comparison be robust to equilibrium multiplicity.
Relatedly, Section \ref{sec:restricted decision problem} focuses on a restricted domain of decision problems, particularly those involving binary actions.
}
Consider the online shopping setting in the Introduction wherein the outside observer evaluates the informational efficiency of advertisements or product descriptions.
Then, they usually cannot predict either users' decision problems or the equilibrium that will be realized.
In this setting, our definition of $\succsim_{S}$ captures the most pessimistic and robust evaluation of the value of information under $\pi$.

% Prior-free
The binary relation $\succsim_{S}$ in Definition \ref{def: socially valuable order} is defined with respect to a fixed prior $\mu_{0}$.
As in the Blackwell order, however, this dependence is inconsequential.
Specifically, if $\pi \succsim_{S} \pi'$ holds for one prior, then it holds for every prior, as can be shown by appropriately rescaling the decision problem.

% Blackwell vs Social value order
Our first observation establishes that our binary relation is strictly stronger than the Blackwell order.
In other words, if one information structure is more socially valuable than another, then it is also Blackwell more informative, but not vice versa.
\begin{proposition} \label{prop: observation_strict_inclusion}
    $\succsim_{S}$ is a strictly stronger binary relation than $\succsim_{B}$.
\end{proposition}
Note that $\pi \succsim_{S} \pi'$ implies that $\pi\succsim_{B} \pi'$ as agent $1$ prefers $\pi$ over $\pi'$ for all decision problems. 
Thus, $\succsim_{S}$ is weakly stronger than $\succsim_{B}$.
To complete the proof, we show the following example in which $\pi\succsim_{B} \pi'$ holds, but $\pi\succsim_{S} \pi'$ does not.
\begin{example}\label{example_1}
    % Consider the information structure $\pi:\Omega\to \Delta(\{s_{0},s_{1},s_{2}\})$ defined by $\pi(s_{1}|H)=1-\varepsilon$, $\pi(s_{2}|H)=\varepsilon$, $\pi(s_{0}|L)=1-\delta$, and $\pi(s_{2}|L)=\delta$. 
    % Suppose $\varepsilon>\delta>0$.
    % Now, take $\varepsilon'\in(\varepsilon,1)$ and define $\pi':\Omega\to\Delta(\{s_{0},s_{1},s_{2}\})$ as $\pi'(s_{1}|H)=1-\varepsilon'$, $\pi'(s_{2}|H)=\varepsilon'$, $\pi'(s_{0}|L)=1-\delta$, and $\pi'(s_{2}|L)=\delta$.
    Consider two information structures $\pi,\pi':\Omega\to \Delta(\{s_{0},s_{1},s_{2}\})$ given by the following tables, where $0<\delta<\varepsilon<\varepsilon'<1$:
\begin{table}[H]
\centering
\begin{tabular}{c@{\hspace{1.5cm}}c}

\begin{tabular}{l|ccc}
\Xhline{1.2pt}
$\pi$ & $s_{0}$ & $s_{1}$ & $s_{2}$ \\
\hline
$H$ & $0$ & $1-\varepsilon$ & $\varepsilon$ \\
$L$ & $1-\delta$ & $0$ & $\delta$ \\
\Xhline{1.2pt}
\end{tabular}

&
\begin{tabular}{l|ccc}
\Xhline{1.2pt}
$\pi'$ & $s_{0}$ & $s_{1}$ & $s_{2}$ \\
\hline
$H$ & $0$ & $1-\varepsilon'$ & $\varepsilon'$ \\
$L$ & $1-\delta$ & $0$ & $\delta$ \\
\Xhline{1.2pt}
\end{tabular}
\end{tabular}
\end{table}
    \noindent
    Then, we have $\pi\succsim_{B}\pi'$.
    Now, consider the following decision problem $\mathcal{D}=(A,u)$: $A=\{a_{0},a_{1}\}$, $u(a_{0},H)=u(a_{0},L)=0$, $u(a_{1},H)=1-r$, and $u(a_{1},L)=-r$, where $r\in(\frac{\mu_{0}\varepsilon}{\mu_{0}\varepsilon+(1-\mu_{0})\delta},\min\{\frac{\mu_{0}\varepsilon'}{\mu_{0}\varepsilon'+(1-\mu_{0})\delta},\frac{\mu_{0}\varepsilon^{2}}{\mu_{0}\varepsilon^{2}+(1-\mu_{0})\delta^{2}}\})$.\footnote{Note that $\frac{\mu_{0}\varepsilon^{2}}{\mu_{0}\varepsilon^{2}+(1-\mu_{0})\delta^{2}}>\frac{\mu_{0}\varepsilon}{\mu_{0}\varepsilon+(1-\mu_{0})\delta}$ since $\varepsilon>\delta>0$.}
    
    Take any equilibrium $\bm{\sigma}^{*}$ under $(\mathcal{D},\pi)$.
    First, agent $1$ chooses action $a_{1}$ if and only if she receives $s=s_{1}$ as $r>\frac{\mu_{0}\varepsilon}{\mu_{0}\varepsilon+(1-\mu_{0})\delta}$.
    Then, agent 2 can perfectly infer that the true state is $H$ when either $s=s_{1}$ or agent $1$ has chosen action $a_{1}$.
    Otherwise, agent 2's posterior belief is below the cutoff $r$.
    Thus, agent $2$ chooses action $a_{1}$ if and only if she perfectly knows that the true state is $H$.
    Similarly, agent $i$ chooses action $a_{1}$ if and only if (i) $s=s_{1}$ or (ii) at least one agent before $i$ has chosen action $a_{1}$, as agent $i$ can perfectly infer that the true state is $H$ under these cases, while the posterior beliefs would otherwise be below $r$. 
    Thus, agent $i$'s expected payoff is $V_{i}^{\mathcal{D}}(\pi,\bm{\sigma}^{*})=\mu_{0}(1-\varepsilon^{i})(1-r)$.\footnote{The formal proof is provided in Lemma \ref{lemma_expected_payoff_special_case} in the Appendix.}
    
    Under $(\mathcal{D},\pi')$, there is an equilibrium in which agent $i$ takes $a_{0}$ if and only if he receives $s_{0}$ or at least one agent before $i$ has taken $a_{0}$ since $\frac{\mu_{0}\varepsilon'}{\mu_{0}\varepsilon'+(1-\mu_{0})\delta}>r$. 
    Let $\bm{\sigma}^{**}$ denote this equilibrium strategy profile.
    In this equilibrium, the ex-ante expected payoff of agent $i$ is $V_{i}^{\mathcal{D}}(\pi',\bm{\sigma}^{**})=\mu_{0}(1-r)-(1-\mu_{0})\delta^{i}r$.
    
    Thus, the difference in payoffs of agent $i$ $(\geq 2)$ is
    \begin{align*}
        V_{i}^{\mathcal{D}}(\pi',\bm{\sigma}^{**})-V_{i}^{\mathcal{D}}(\pi,\bm{\sigma}^{*})=&\mu_{0}\varepsilon^{i}(1-r)-(1-\mu_{0})\delta^{i}r\\
        =&\mu_{0}\varepsilon^{i}\left(1-\frac{\mu_{0}\varepsilon^{i}+(1-\mu_{0})\delta^{i}}{\mu_{0} \varepsilon^{i}}r\right)\\
        \geq& \mu_{0}\varepsilon^{i} \left(1- \frac{\mu_{0}\varepsilon^{2}+(1-\mu_{0})\delta^{2}}{\mu_{0} \varepsilon^{2}}r\right)\\
        >& 0.
    \end{align*}
    Therefore, $\pi \succsim_{B} \pi'$ but not $\pi\succsim_{S}\pi'$. \qed
\end{example}
Proposition \ref{prop: observation_strict_inclusion} intuitively follows because past actions provide coarser information than past signal realizations.
Consequently, our binary relation requires the information structure to be sufficiently informative to ensure that the joint value of history and private signals increases.
By contrast, if agents could observe past signal realizations instead of actions, then a Blackwell more informative signal would always be more socially valuable.
Essentially, the gap between $\succsim_{S}$ and $\succsim_{B}$ is driven by information externalities arising from whether agents can observe past signal realizations or only past actions.

In the setting described in Example \ref{example_1}, when signals are observable, the expected payoffs of agent $i\geq 2$ under $\pi$ and $\pi'$ are identical in this example.  
If past signals were observable, agent $i$ receiving $s = s_{2}$ would choose $a_{1}$ whenever all preceding agents also received $s = s_{2}$.  
However, in the observable action setting, agent $i$ with $s = s_{2}$ would choose $a_{0}$ if all predecessors had selected $a_{0}$ under $\pi$, even when all preceding agents receive $s = s_{2}$.
In contrast, under $\pi'$, agents can behave as if they can perfectly observe the past signal realizations.
Therefore, all agents except for agent $1$ strictly prefer $\pi'$ to $\pi$ in this decision problem.

%%%%%%%%%%%%%%%%%%%%%%%%%%
% Characterization and necessary condition
How strong is our binary relation relative to the Blackwell order?
To answer this, we provide a characterization as follows:
\begin{theorem}[Characterization]\label{thm_characterization}
    $\pi\succsim_{S}\pi'$ holds if and only if
    \begin{align*}
        V_{i}^{\mathcal{D}}(\pi,\bm{\sigma}^{*})\geq \overline{V}_{i}^{\mathcal{D}}(\pi')
    \end{align*}
    for any decision problem $\mathcal{D}$, any agent $i$, and any equilibrium $\bm{\sigma}^{*}$ under $(\mathcal{D},\pi)$.
\end{theorem}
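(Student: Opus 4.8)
The plan is to prove the two implications of the biconditional separately. The direction asserting that the displayed inequality is \emph{sufficient} for $\pi\succsim_S\pi'$ is the routine one; it rests on the observation that, in any equilibrium under $\pi'$, the information available to agent $i$ is a garbling of the realizations of $i$ independent draws of $\pi'$. The reverse implication---that $\pi\succsim_S\pi'$ \emph{forces} $V_i^{\mathcal{D}}(\pi,\bm{\sigma}^*)\geq \overline{V}_i^{\mathcal{D}}(\pi')$---is the substantive one, and I would establish it by contraposition: from any triple $(\mathcal{D},i,\bm{\sigma}^*)$ violating the inequality I would build an auxiliary decision problem that breaks $\pi\succsim_S\pi'$.

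For sufficiency, fix $\mathcal{D}$, an agent $i$, an equilibrium $\bm{\sigma}^*$ under $(\mathcal{D},\pi)$, and an equilibrium $\bm{\sigma}^{**}$ under $(\mathcal{D},\pi')$. Under $\bm{\sigma}^{**}$ the action profile $(a_1,\dots,a_{i-1})$ observed by agent $i$ is a stochastic function of $(s_1,\dots,s_{i-1})$, with the mixing noise independent of the state, so the pair (history, own signal) is a garbling of $\bm{s}=(s_1,\dots,s_i)\sim\pi'^{\otimes i}$. Since agent $i$ best responds in equilibrium and her payoff depends only on her own action and the state, $V_i^{\mathcal{D}}(\pi',\bm{\sigma}^{**})$ is the value of a single-agent decision problem using this garbled information, which by Blackwell's theorem cannot exceed the value $\overline{V}_i^{\mathcal{D}}(\pi')$ of using $\bm{s}$ itself. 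Chaining this with the hypothesis $V_i^{\mathcal{D}}(\pi,\bm{\sigma}^*)\geq\overline{V}_i^{\mathcal{D}}(\pi')$ gives $V_i^{\mathcal{D}}(\pi,\bm{\sigma}^*)\geq V_i^{\mathcal{D}}(\pi',\bm{\sigma}^{**})$, which is exactly $\pi\succsim_S\pi'$.

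For necessity, suppose the inequality fails, so there exist $\mathcal{D}=(A,u)$, an agent $i$, and an equilibrium $\bm{\sigma}^*$ under $(\mathcal{D},\pi)$ with $V_i^{\mathcal{D}}(\pi,\bm{\sigma}^*)<\overline{V}_i^{\mathcal{D}}(\pi')$. I would enlarge the action set with payoff-irrelevant cheap-talk messages, setting $A'=A\times S'$ and $u'((a,m),\omega)=u(a,\omega)$, and then exploit equilibrium multiplicity. Under $(\mathcal{D}',\pi')$ there is a truthful equilibrium in which every agent appends her signal realization as a message---a weak best response, since messages do not affect payoffs; then agent $i$ recovers $(s_1,\dots,s_{i-1})$ from the history and combines it with $s_i$, so her equilibrium payoff equals $\overline{V}_i^{\mathcal{D}}(\pi')$. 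Under $(\mathcal{D}',\pi)$ there is a babbling equilibrium in which the messages are uninformative and the $A$-components replicate $\bm{\sigma}^*$, giving agent $i$ the payoff $V_i^{\mathcal{D}}(\pi,\bm{\sigma}^*)$. These two equilibria satisfy $V_i^{\mathcal{D}'}(\pi,\cdot)<V_i^{\mathcal{D}'}(\pi',\cdot)$, contradicting $\pi\succsim_S\pi'$.

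The main obstacle is the necessity direction, and specifically the verification that the augmented problem \emph{simultaneously} supports a revealing equilibrium under $\pi'$ attaining the full-information value $\overline{V}_i^{\mathcal{D}}(\pi')$ and a babbling equilibrium under $\pi$ preserving the deficient value $V_i^{\mathcal{D}}(\pi,\bm{\sigma}^*)$. The crux is that payoff-irrelevance of the messages makes both truth-telling and babbling weakly optimal, so the two information structures can be steered to opposite equilibria inside a single decision problem; I would take care to confirm that the message space $S'$ transmits enough to pin down all private beliefs induced by $\pi'$, and that, since agent $i$'s payoff is unaffected by the play of later agents, extending either equilibrium to the whole population is innocuous.
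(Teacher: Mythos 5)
Your proof is correct and follows essentially the same route as the paper's: sufficiency via the observation that the (history, own signal) pair in any equilibrium under $\pi'$ is a garbling of $\pi'^{\otimes i}$ (the paper's Lemma \ref{lemma_signal_is_more_informative}), and necessity via the identical cheap-talk augmentation $\bar{A}=A\times S'$ with payoff-irrelevant messages, playing a truthful (signal-revealing) equilibrium under $\pi'$ against a babbling equilibrium replicating $\bm{\sigma}^*$ under $\pi$. Your contrapositive phrasing of the necessity direction is an immaterial difference from the paper's direct argument.
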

Thus, by Theorem \ref{thm_characterization}, one information structure is more socially valuable than another if and only if it yields higher expected payoffs for all agents, decision problems, and equilibria, even when past signals are observable under the alternative information structure.

% Proof skecth of Theorem 1
The sufficiency part of Theorem~\ref{thm_characterization} is immediate, as the observable signal setting provides an upper bound on any equilibrium payoffs (Lemma \ref{lemma_signal_is_more_informative}).
For the necessity part of Theorem~\ref{thm_characterization}, we construct a decision problem in which agents can perfectly infer past signal realizations under $\pi'$. 
Specifically, take any decision problem $\mathcal{D} = (A, u)$ and an equilibrium $\bm{\sigma}^{*}$ under $\pi$. 
Then, we construct an auxiliary decision problem $\overline{\mathcal{D}}=(\overline{A},\overline{u})$ where $\overline{A}=\{(a,k)\mid a\in A,k\in S'\}$ and $\overline{u}((a, k), \omega)=u(a, \omega)$ for all $a \in A$, $\omega \in \Omega$, and $k\in S'$.
Next, we construct an equilibrium $\bm{\sigma}$ under $(\overline{\mathcal{D}},\pi)$ in which agents choose optimal actions from among the replicated actions, independently of their private signals $s\in S$. 
By contrast, under $(\overline{\mathcal{D}},\pi')$, we consider an equilibrium $\bm{\sigma}'$ in which each agent selects an optimal action that perfectly reflects their private signal realization $s'\in S'$.
Thus, the expected payoff under $\pi$ with $\bm{\sigma}$ is the same as in the original equilibrium $\bm{\sigma}^{*}$. 
However, under $\pi'$ with $\bm{\sigma}'$, the expected payoff corresponds to that in the observable signal setting.

% Intuition:
Intuitively, by sufficiently enlarging the action space, we can reinterpret agents’ actions as messages that convey private signals in a cheap-talk environment. 
In this setting, the equilibrium selection effectively determines how informative these messages become. 
Under $\pi'$, we can select a truth-telling equilibrium in which each agent breaks ties in a manner that perfectly reveals their private signal realization. 
By contrast, under $\pi$, we construct a babbling equilibrium in which agents’ actions convey no additional information relative to the original setting. 
This comparison illustrates that when the action space is rich enough, the observable signal setting under $\pi'$ can be induced as an equilibrium outcome that achieves maximal information revelation, whereas the equilibrium under $\pi$ corresponds to the original observable action setting. 
Hence, by leveraging this strong equilibrium selection rule, this construction highlights the essential role of information externalities in determining how effectively private information is aggregated and transmitted in social learning, in its most extreme form.

% Corollary 1: Necessary condition
By combining the classical result of \citet{smith2000pathological}, we derive a simple necessary condition from Theorem \ref{thm_characterization}.
We say that an information structure $\pi$ induces {\it unbounded beliefs} if $\text{co}(\supp{\mu})=[0,1]$.\footnote{When the signal space is finite, $\pi$ induces unbounded beliefs if and only if there exists a conclusive signal about each state.}
Since agents can eventually learn the true state in an observable signal setting, we obtain the following necessary condition:
\begin{corollary}[Necessary condition]\label{observation_necessity_of_conclusive}
    Suppose that $\pi'$ is not no information.
    If $\pi\succsim_{S} \pi'$, then $\pi$ induces unbounded beliefs.
\end{corollary}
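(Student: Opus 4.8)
The plan is to prove the contrapositive and lean on the characterization in Theorem \ref{thm_characterization}. Suppose $\pi$ does \emph{not} induce unbounded beliefs, i.e.\ its private belief distribution $\mu$ satisfies $\text{co}(\supp{\mu})\neq[0,1]$ (bounded private beliefs). I will exhibit a single decision problem $\mathcal{D}$, an agent $i$, and an equilibrium $\bm{\sigma}^*$ under $(\mathcal{D},\pi)$ for which $V_i^{\mathcal{D}}(\pi,\bm{\sigma}^*)<\overline{V}_i^{\mathcal{D}}(\pi')$; by Theorem \ref{thm_characterization} this rules out $\pi\succsim_S\pi'$. Throughout, fix the matching problem $\mathcal{D}=(A,u)$ with $A=\{a_L,a_H\}$, $u(a_H,H)=u(a_L,L)=1$, and $u(a_H,L)=u(a_L,H)=0$, whose full-information value (the expected payoff of an agent who knows $\omega$) equals $1$.

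The first step controls the benchmark side. Since $\pi'$ is not no information, there is a private belief $x\neq\mu_0$ in $\supp{\mu}$, which forces $\pi'(\cdot|H)\neq\pi'(\cdot|L)$. Hence the log-likelihood ratio of $i$ conditionally independent draws diverges almost surely and the posterior concentrates on the true state as $i\to\infty$; consequently the maximized repeated-sampling value rises to the full-information value,
\[
\overline{V}_i^{\mathcal{D}}(\pi')\uparrow 1 \qquad \text{as } i\to\infty,
\]
where monotonicity uses $\pi'^{\otimes(i+1)}\succsim_B\pi'^{\otimes i}$. This is exactly where the asymmetry between observing signals and observing actions enters: a single off-prior belief already guarantees full revelation under repeated sampling.

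The second step controls the social-learning side and produces the contradiction. Because $\pi$ induces bounded private beliefs, the classical result of \citet{smith2000pathological} implies that asymptotic learning is incomplete in every equilibrium: with positive probability the public belief converges to an interior limit and agents eventually herd on the action that fails to match the state. Fixing any equilibrium $\bm{\sigma}^*$ under $(\mathcal{D},\pi)$, the ex-ante probability that agent $i$ matches the state (which is precisely $V_i^{\mathcal{D}}(\pi,\bm{\sigma}^*)$ for this $\mathcal{D}$) converges, by martingale convergence of the public belief, to a limit
\[
V_\infty:=\lim_{i\to\infty}V_i^{\mathcal{D}}(\pi,\bm{\sigma}^*)<1 .
\]
Combining the two displays, choose $i$ large enough that $V_i^{\mathcal{D}}(\pi,\bm{\sigma}^*)<\tfrac{1+V_\infty}{2}<\overline{V}_i^{\mathcal{D}}(\pi')$, contradicting the inequality in Theorem \ref{thm_characterization}. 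Thus $\pi\succsim_S\pi'$ cannot hold, which establishes the contrapositive.

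The main obstacle is the second step: one must invoke the Smith--S\o rensen dichotomy to turn bounded private beliefs into a \emph{strict, asymptotic} payoff loss for the matching problem, and verify that this loss arises in some equilibrium under $(\mathcal{D},\pi)$ so that the characterization is actually violated. A secondary point requiring care is the convergence $\overline{V}_i^{\mathcal{D}}(\pi')\to 1$ in the first step: it rests on confirming that ``not no information'' already yields distinct state-conditional signal distributions, after which divergence of the likelihood ratio (equivalently, a strictly positive relative entropy between $\pi'(\cdot|H)$ and $\pi'(\cdot|L)$) delivers full revelation in the limit.
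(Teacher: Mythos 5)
There is a genuine gap in your Step 2. The hypothesis you get to work with is only that $\pi$ does \emph{not} induce unbounded beliefs, i.e.\ $\text{co}(\supp{\mu})\neq[0,1]$, which means $\supp{\mu}$ misses a neighborhood of $1$ \emph{or} of $0$ --- it does \emph{not} mean private beliefs are bounded away from both extremes. The Smith--S{\o}rensen incomplete-learning result you invoke requires the latter, and in the one-sided case your conclusion $V_\infty<1$ for the matching problem is simply false. Concretely, take $\mu_0>1/2$ and $\supp{\mu}=\{0,\mu_0\}$ (a mixture of a conclusive signal about $L$ and a null signal), so $\pi$ does not induce unbounded beliefs. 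In the matching problem with threshold $1/2$, every equilibrium has all agents playing $a_H$ unless someone has revealed a conclusive $L$-signal, which gives $V_i^{\mathcal{D}}(\pi,\bm{\sigma}^*)=1-(1-\mu_0)(1-q)^i=\overline{V}_i^{\mathcal{D}}(\pi)\to 1$, where $q=\pi(\mu=0|L)$. Since $\overline{V}_i^{\mathcal{D}}(\pi')\leq\overline{V}_i^{\mathcal{D}}(\pi)$ whenever $\pi\succsim_B\pi'$, your chosen decision problem produces no violation of Theorem \ref{thm_characterization} at any $i$, so the contrapositive is not established. (Your Step 1 is fine modulo the typo $\supp{\mu}$ for $\supp{\mu'}$.)

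The paper avoids this by tailoring the decision problem to the missing extreme: assuming w.l.o.g.\ $1\notin\supp{\mu}$, it picks $r\in[\mu_0,1)$ with $\supp{\mu}\subseteq[0,r]$ and uses the safe/risky problem $u(a_1,\cdot)=0$, $u(a_2,H)=1-r$, $u(a_2,L)=-r$, so the risky action is optimal only at posteriors of at least $r$. Then the profile in which everyone pools on the safe action is an equilibrium (agents are at worst indifferent at posterior exactly $r$, and pooling keeps the public belief at $\mu_0$), giving $V_i^{\mathcal{D}}(\pi,\bm{\sigma}^*)=0<\overline{V}_i^{\mathcal{D}}(\pi')$ for large $i$. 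Note this crucially exploits the adversarial equilibrium selection built into $\succsim_S$ (only \emph{one} bad equilibrium under $\pi$ is needed), rather than any asymptotic-learning dichotomy. To repair your argument you would need to replace the fixed threshold $1/2$ by a threshold at the relevant edge of $\supp{\mu}$ and argue via a pooling equilibrium, at which point you have essentially reconstructed the paper's proof.
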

Corollary \ref{observation_necessity_of_conclusive} states that, except in the trivial case, an information structure must induce unbounded beliefs to be more socially valuable than another.
Thus, if an information cascade occurs under a given information structure, it is no longer more socially valuable than any other information structure except in certain trivial cases.
%%%%%%%%%%%%%%%%%%%%%%%%%%

% Proof skecth
For the proof sketch, we construct a decision problem in which only almost conclusive signals yield a positive payoff.
If $\pi$ does not generate unbounded beliefs, the equilibrium payoff becomes zero.
However, by Theorem \ref{thm_characterization}, we can construct an auxiliary problem in which the equilibrium payoff under $\pi$ remains unchanged, while $\pi'$ yields the payoff corresponding to the observable signal setting.
In this auxiliary problem, whenever $\pi'$ is not no information, repeated observations of past signal realizations induce asymptotic learning and thus strictly positive expected payoffs.
Hence, inducing unbounded beliefs under $\pi$ is a necessary condition for $\succsim_{S}$.\footnote{The construction of the original decision problem relies on “test” problems, which correspond to the extreme points of convex functions in the binary state space.
We leave open the question of whether considering only test problems is sufficient for the original problem (rather than the auxiliary one) when verifying $\succsim_{S}$.}

%%%%%%%%%%%%%%%%%%%%%%%%%%
% Sufficient condition
Theorem \ref{thm_characterization} and Corollary \ref{observation_necessity_of_conclusive} underscore the strong requirements inherent in our binary relation.
This naturally gives rise to the question: Which pairs of information structures can be compared within our binary relation?
Accordingly, we shift our focus to the sufficiency part of Theorem \ref{thm_characterization}.
However, verifying this condition is challenging, as it depends on the underlying decision problem.
Moreover, we can see that the necessary condition in Corollary \ref{observation_necessity_of_conclusive} is not a sufficient condition by Example \ref{example_1}.
To address this, we provide a sufficient condition that can be verified directly from the information structures.
\begin{theorem}[Sufficient condition]\label{thm_sufficient_condition}
    If there exists $\pi''$ such that $\supp{\mu''}=\{0,\mu_{0},1\}$ and $\pi \succsim_{B}\pi''\succsim_{B} \pi'$, then $\pi \succsim_{S} \pi'$.
\end{theorem}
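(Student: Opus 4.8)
The plan is to invoke the characterization in Theorem~\ref{thm_characterization} and reduce the statement to two separate Blackwell-to-social comparisons that pivot on the intermediate structure $\pi''$. Since $\succsim_S$ is transitive (every game $(\mathcal D,\pi'')$ admits an equilibrium, so chaining the inequalities $\min_{\text{eq }\pi}V_i^{\mathcal D}(\pi,\cdot)\ge\max_{\text{eq }\pi''}V_i^{\mathcal D}(\pi'',\cdot)\ge\min_{\text{eq }\pi''}V_i^{\mathcal D}(\pi'',\cdot)\ge\max_{\text{eq }\pi'}V_i^{\mathcal D}(\pi',\cdot)$ over a nonempty set of $\pi''$-equilibria yields $\pi\succsim_S\pi'$), it suffices to prove the two facts (i) $\pi''\succsim_S\pi'$ and (ii) $\pi\succsim_S\pi''$. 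By Theorem~\ref{thm_characterization}, fact (i) amounts to $V_i^{\mathcal D}(\pi'',\bm\sigma)\ge\overline V_i^{\mathcal D}(\pi')$ and fact (ii) to $V_i^{\mathcal D}(\pi,\bm\sigma^*)\ge\overline V_i^{\mathcal D}(\pi'')$, in each case for every $\mathcal D$, every agent $i$, and every equilibrium.

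The engine is a lemma about the intermediate structure. A short computation using $\supp{\mu''}=\{0,\mu_0,1\}$ shows that $\pi''$ is a $\beta$-mixture of full and no information for a single $\beta\in[0,1]$: with probability $\beta$, equal in both states, it returns a conclusive signal, and otherwise it returns a signal with private belief exactly $\mu_0$. I claim that then $V_i^{\mathcal D}(\pi'',\bm\sigma)=\overline V_i^{\mathcal D}(\pi'')$ for all $\mathcal D$, $i$, and equilibria. The ``$\le$'' direction is general: agent $i$'s information (the history together with $s_i$) is a garbling of $(s_1,\dots,s_i)$, so his best-response value cannot exceed the value $\overline V_i^{\mathcal D}(\pi'')$ of seeing all $i$ draws. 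For ``$\ge$'' I would use the defining feature of a full-or-nothing structure: a non-conclusive signal leaves the private belief at $\mu_0$ and hence never moves the public belief, so in any equilibrium the public belief stays at $\mu_0$ until a conclusive signal is played, whereupon it is either revealed---pushing the public belief to $0$ or $1$ for all successors---or not revealed, which happens only when the prior-optimal action already attains $\max_a u(a,\omega)$ in the realized state. In either case agent $i$ secures $\max_a u(a,\omega)$ in state $\omega$ exactly on the event that some conclusive signal about $\omega$ arrives among the first $i$ agents (probability $1-(1-\beta)^i$) and the prior-optimal payoff otherwise, which reproduces $\overline V_i^{\mathcal D}(\pi'')$ term by term. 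The point is that, non-conclusive signals being pure noise, no \emph{wrong herd} can ever form, which is what forces the two bounds to coincide.

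Fact (i) is then immediate: the lemma gives $V_i^{\mathcal D}(\pi'',\bm\sigma)=\overline V_i^{\mathcal D}(\pi'')$, and since $\pi''\succsim_B\pi'$ is preserved under conditionally independent replication, i.e.\ $(\pi'')^{\otimes i}\succsim_B(\pi')^{\otimes i}$, the single-agent comparison of experiments gives $\overline V_i^{\mathcal D}(\pi'')\ge\overline V_i^{\mathcal D}(\pi')$ for every decision problem. Fact (ii) is where the real work lies. Here $\pi\succsim_B\pi''$ means, via Proposition~\ref{lemma_equivalent_condition_weak_order}, that $\pi$ discloses a conclusive signal about each state with probability at least $\beta$, so in state $\omega$ at least one of the first $i$ agents receives a conclusive-$\omega$ signal with probability at least $1-(1-\beta)^i$. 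I would rerun the revelation argument: a conclusive signal reveals the state to its holder, who plays the state-optimal action, and whenever that action differs from the one currently being herded upon it breaks the herd and transmits the state to all successors. The complication absent from the lemma is that $\pi$ now has intermediate signals, so a wrong herd \emph{can} form on the complementary event; the saving observation is that $\pi\succsim_B\pi''$ is self-regulating, since any intermediate belief mass that tilts early agents toward a wrong herd is, through the mean-preserving-spread structure tying the belief distribution of $\pi$ to that of $\pi''$, necessarily accompanied by enough conclusive mass in the opposite state to break that herd quickly.

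The main obstacle is precisely this last step: proving the lower bound $V_i^{\mathcal D}(\pi,\bm\sigma^*)\ge\overline V_i^{\mathcal D}(\pi'')$ uniformly over all decision problems and all equilibria, that is, showing quantitatively that the loss on the wrong-herd event is always dominated by the gain from the extra conclusive mass that $\pi\succsim_B\pi''$ guarantees. I expect to handle this by exploiting the explicit form $\overline V_i^{\mathcal D}(\pi'')=(1-(1-\beta)^i)\,\mathbb{E}_\omega[\max_a u(a,\omega)]+(1-\beta)^i\max_a\mathbb{E}_\omega[u(a,\omega)]$ together with the mean-preserving-spread characterization of $\pi\succsim_B\pi''$, reducing the comparison to a state-by-state accounting in which each conclusive signal contributes the full-information payoff and every remaining realization is bounded below appropriately.
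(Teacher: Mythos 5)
Your overall decomposition is the paper's: split the comparison at $\pi''$, show $\pi''\succsim_S\pi'$ and $\pi\succsim_S\pi''$, and chain via existence of equilibria. Your ``engine'' lemma ($V_i^{\mathcal D}(\pi'',\bm\sigma)=\overline V_i^{\mathcal D}(\pi'')$ for any equilibrium under a full/no-information mixture) and the derivation of fact (i) from it via $(\pi'')^{\otimes i}\succsim_B(\pi')^{\otimes i}$ are exactly Lemmas \ref{lemma_expected_payoff_3support} and \ref{observation_strict_order_3support}; your sketch of the ``$\ge$'' direction is looser than the paper's (which uses the deviation ``play the conclusive-optimal action, else mimic agent $i-1$'' plus induction on $i$ against the upper bound $\overline V_i$), but it is the right idea.

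The genuine gap is fact (ii), $V_i^{\mathcal D}(\pi,\bm\sigma^*)\ge\overline V_i^{\mathcal D}(\pi'')$, which you yourself flag as the ``main obstacle'' and for which you offer only a heuristic (``$\pi\succsim_B\pi''$ is self-regulating'', wrong herds are ``broken quickly''). That heuristic does not yield a proof: under $\pi$ an equilibrium can pool intermediate signals with conclusive ones, so a deviation from a herd does not reveal the state, and there is no obvious state-by-state accounting that bounds the wrong-herd loss uniformly over all $\mathcal D$ and all equilibria. The paper's resolution (Lemma \ref{observation_imitation}) is different in kind: it first \emph{thins} the conclusive signals of $\pi$ with probabilities $q_L=\pi''(\mu''=0|L)/\pi(\mu=0|L)$ and $q_H=\pi''(\mu''=1|H)/\pi(\mu=1|H)$ so that the residual ``act like your predecessor'' event has the \emph{same} probability $p$ in both states (without this, mimicking the predecessor conditional on a non-conclusive signal does not reproduce the predecessor's unconditional value); it then defines the imitation profile $\bm\sigma^{**}$, computes $V_i^{\mathcal D}(\pi,\bm\sigma^{**})=\overline V_i^{\mathcal D}(\pi'')$ exactly, and proves $V_i^{\mathcal D}(\pi,\bm\sigma^*)\ge V_i^{\mathcal D}(\pi,\bm\sigma^{**})$ by the hybrid-profile induction $\bm\sigma(k)=(\sigma_1^*,\dots,\sigma_k^*,\sigma_{k+1}^{**},\dots)$, using the equilibrium best-response property at the single switching index and the recursion $V_i(\bm\sigma(k))=\mu_0(1-p)U_1+(1-\mu_0)(1-p)U_0+pV_{i-1}(\bm\sigma(k))$ to propagate monotonicity in $k$. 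Both the equalization step and the backward-replacement induction are missing from your proposal, and without them the lower bound does not follow.
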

Thus, Theorem \ref{thm_sufficient_condition} indicates that $\pi$ is more socially valuable than $\pi'$ if there exists a mixture of full and no information $\pi''$ such that $\pi \succsim_{B}\pi''\succsim_{B} \pi'$.
%%%%%%%%%%%%%%%%%%%%%%%%%%
% Equivalent condition
To verify the existence of such a mixture, we provide an equivalent condition.
\begin{proposition}\label{prop_equivalent_sufficient_condition}
     There exists $\pi''$ such that $\supp{\mu''}=\{0,\mu_{0},1\}$ and $\pi \succsim_{B}\pi''\succsim_{B} \pi'$ if and only if $\pi$ and $\pi'$ satisfy
         \begin{align*}
             1- \sum_{s\in \supp{\pi'}}\min\{\pi'(s|L),\pi'(s|H)\}\leq \min \{\pi(\mu=0|L),\pi(\mu=1|H)\}.
         \end{align*}
\end{proposition}
Recall that the necessary condition in Theorem \ref{thm_characterization} requires that $\pi$ induces unbounded beliefs if $\pi$ is more socially valuable than $\pi'$ and $\pi'$ is not no information.
Then, by Proposition \ref{prop_equivalent_sufficient_condition}, the sufficient condition in Theorem \ref{thm_sufficient_condition} indicates that $\pi$ is more socially valuable than $\pi'$ if $\pi$ assigns a sufficiently high probability to disclosing conclusive signals about each state.

%%%%%%%%%%%%%%%%%%%%%%%%%%
% Prior freeness
As a remark, the sufficient condition in Theorem \ref{thm_sufficient_condition} is independent of the prior $\mu_{0}$, as the equivalent inequality condition in Proposition \ref{prop_equivalent_sufficient_condition} does not depend on $\mu_{0}$.
Thus, if there exists a mixture of full and no information between $\pi$ and $\pi'$ in the Blackwell order for some prior, then such a mixture exists for every prior.

% Example satisfying sufficient condition
For example, for any information structure $\pi'$, we can construct an information structure $\pi$ by taking a mixture of $\pi'$ and $\pi^{full}$ where $\pi^{full}$ is full information.
Then, there exists $\lambda\in (0,1)$ such that $\pi(\mu=x|\omega)=(1-\lambda)\pi'(\mu'=x|\omega)$ and
$\pi(\mu=y|\omega)=(1-\lambda)\pi'(\mu'=y|\omega)+\lambda\pi^{full}(\mu=y|\omega)$ for all $x\neq 0,1$, $y\in \{0,1\}$ and $\omega$. 
We can see that if $\lambda$ is above the threshold, $\pi\succsim_{S} \pi'$ holds.\footnote{
Note that $\min\{\pi(\mu=0|L),\pi(\mu=1|H)\}=\lambda+(1-\lambda)\min\{\pi'(\mu'=0|L),\pi'(\mu'=1|H)\}$.
Thus, the inequality in Proposition~\ref{prop_equivalent_sufficient_condition} holds if
\begin{align*}
    \lambda\geq 1-\frac{\sum_{s\in \supp{\pi'}}\min\{\pi'(s|L),\pi'(s|H)\}}{1-\min\{\pi'(\mu'=0|L),\pi'(\mu'=1|H)\}}.
    % \frac{1- \sum_{s\in \supp{\pi'}}\min\{\pi'(s|L),\pi'(s|H)\}-\min\{\pi'(\mu'=0|L),\pi'(\mu'=1|H)\}}{1-\min\{\pi'(\mu'=0|L),\pi'(\mu'=1|H)\}}
\end{align*}
}
One of the most commonly studied classes of information structures in the social learning literature is the class of symmetric binary information structures.
Suppose that $\mu_{0}=1/2$.
A symmetric binary signal $\pi'$ consists of the binary signals $S'=\{s_{l},s_{h}\}$ with disclosure rule $\pi'(s_{l}|L)=\pi'(s_{h}|H)=1-p$ where $p\in[0,1/2]$.
Then, $\pi'$ is parameterized by the single parameter $p$, where lower $p$ implies a more informative signal in the Blackwell order.
Now, we define $\pi$ as the mixture of $\pi^{full}$ and $\pi'$ with probability $\lambda$ and $1-\lambda$.
By the previous discussion, $\pi$ is more socially valuable than $\pi'$ if $\lambda \geq 1-2p$.
Thus, $\pi$ is more socially valuable than $\pi'$ if $\pi$ assigns a sufficiently high probability to disclose full information relative to the informativeness of $\pi'$.

%%%%%%%%%%%%%%%%%%%%%%%%%%
% Proof sketch 
The formal proof of Theorem \ref{thm_sufficient_condition} is complex and provided in the Appendix.
The key step focuses on the intrinsic properties of mixtures of full and no information.
Specifically, if such a mixture is Blackwell more informative than another information structure, it is also more socially valuable (Lemma \ref{observation_strict_order_3support}).  
Moreover, if an information structure is Blackwell more informative than the mixture, it is also more socially valuable (Lemma \ref{observation_imitation}).
Therefore, whenever a mixture of full and no information exists between two information structures in the Blackwell order, they remain comparable in our binary relation.

% Sketch of Lemma 7
We now briefly explain why the Blackwell order with a mixture of full and no information implies our binary relation.
The proof of Lemma \ref{observation_strict_order_3support} proceeds as follows.
% Lemma 6
First, as shown in Lemma \ref{lemma_expected_payoff_3support}, under any mixture of full and no information, all agents can achieve the same expected payoff as if they had observed past signal realizations, even though they cannot directly infer their predecessors' private signals.
Intuitively, this follows for the following reasons.
First, under any mixture of full and no information, it is optimal for all agents in any decision problem to choose the optimal action when receiving a conclusive signal and to mimic the immediately preceding agent when receiving an uninformative signal.
Second, this strategy profile ensures that agents act as if they could observe past signals, both when (i) some previous agent or the agent herself has received a conclusive signal, and when (ii) all agents have received uninformative signals.\footnote{This feature is nontrivial because even a slight deviation in the support of the private belief distribution from that of the mixture can result in decision problems and equilibria that violate this property, as can be inferred from the proof of Proposition \ref{observation_weak_order_3support} in Section \ref{sec:weak_order}.}
% Lemma 7
Given the above discussion, if $\pi''$ is Blackwell more informative than $\pi'$ and $\pi''$ is a mixture of full and no information, then the expected payoff of agent $i$ under $\pi''$ is weakly higher than that under $i$ conditionally independent observations of $\pi'$ (that is, $\pi'^{\; \otimes i}$).  
Since past signals are always Blackwell more informative than history (Lemma \ref{lemma_signal_is_more_informative}), this expected payoff remains higher than that in any equilibrium under $\pi'$.

% Lemma 8
For the second step, Lemma \ref{observation_imitation} constructs a strategy profile under $\pi$ that achieves a lower bound on any equilibrium payoff under $\pi$.
Additionally, this strategy profile induces the same expected payoff as that under $\pi''$ for any equilibrium when $\pi''$ is a mixture of full and no information. 
Intuitively, the construction follows this logic:  
Consider any equilibrium strategy under $\pi$.  
First, any other strategy weakly decreases the agent's payoff due to the equilibrium condition.  
In particular, take a strategy in which agent $i$ behaves as if she observes $\pi''$ rather than $\pi$.
Since $\pi''$ is a mixture of full and no information, such a strategy involves choosing the optimal actions upon receiving conclusive signals about each state and mimicking agent $i-1$'s action otherwise.  
Given this, we further modify agent $i-1$'s strategy to follow the same one.  
This change decreases agent $i-1$'s expected payoff, which, in turn, reduces agent $i$'s (conditional) payoff from mimicking agent $i-1$ as the private belief coincides with the prior.  
Repeating this process yields a strategy profile that induces the lower bound of any equilibrium payoff under $\pi$.
Moreover, this lower bound coincides with the expected payoff under $\pi''$ for any equilibrium by construction and Lemma \ref{lemma_expected_payoff_3support}.

% Intuition
Note that, in the single-agent decision problem, it is sufficient to construct a strategy in which an agent can behave as if she observes $\pi''$ rather than $\pi$ to prove that $\pi$ is Blackwell more informative than $\pi''$.
This follows from the free-disposal property of information, which underlies the logic of garbling.
By contrast, this is not sufficient in our social learning setting because of the information externalities.
In the proof of Lemma \ref{observation_imitation}, we further utilize the property of the equilibrium strategy under the mixtures of full and no information $\pi''$.
Then, behaving as if an agent observes $\pi''$ rather than $\pi$ negatively affects the payoffs of future agents, and thus, $\pi$ is more socially valuable than $\pi''$.

%%%%%%%%%%%%%%%%%%%%%%%%%%%%%%%%%%%
% Remark: converse
As a remark, we do not know whether the converse of Theorem \ref{thm_sufficient_condition} holds, or how close this sufficient condition is to being necessary in general.
The main difficulty, as discussed in the Conclusion, lies in conducting a general analysis of learning speeds in each period.
By circumventing this difficulty, one can derive a necessary condition using techniques from large deviation theory, when focusing only on sufficiently late agents and specific decision problems.
Although the resulting necessary condition might not be tight, the following Example \ref{example: converse} indicates that the sufficient condition in Theorem \ref{thm_sufficient_condition} is close to being necessary to some extent.
\begin{example} \label{example: converse}
    Suppose that $\mu_{0}=1/2$.
    We now focus on two mixtures of full information and symmetric binary information structures.
    % Formally, $\pi$ consists of $S=\{s_{0},s_{l},s_{h},s_{1}\}$, $\pi(s_{1}|L)=\pi(s_{0}|H)=0$, $\pi(s_{0}|L)=\pi(s_{1}|H)=\lambda$, $\pi(s_{l}|L)=\pi(s_{h}|H)=(1-\lambda)(1-p)$, and $\pi(s_{h}|L)=\pi(s_{l}|H)=(1-\lambda)p$, where $\lambda\in [0,1]$ and $p\in (0,1/2)$. 
    % Thus, $\pi$ discloses conclusive signals $s_{0}$ or $s_{1}$ with probability $\lambda$ and symmetric binary signals $s_{l}$ or $s_{h}$ with the remaining probability $1-\lambda$.
    % Similarly, $\pi'$ consists of $S=\{s_{0},s_{l},s_{h},s_{1}\}$, $\pi'(s_{1}|L)=\pi'(s_{0}|H)=0$, $\pi'(s_{0}|L)=\pi'(s_{1}|H)=\lambda'$, $\pi'(s_{l}|L)=\pi'(s_{h}|H)=(1-\lambda')(1-p')$, and $\pi'(s_{h}|L)=\pi'(s_{l}|H)=(1-\lambda')p'$, where $\lambda'\in [0,1]$ and $p'\in(0,1)$. 
    % We assume that $\lambda >\lambda'$ and $p<p'$, and thus $\pi \succsim_{B} \pi'$.
    Consider two information structures $\pi,\pi':\Omega\to\Delta(S)$, where $S=\{s_{0},s_{l},s_{h},s_{1}\}$, defined as follows:
    \begin{table}[H]
    \centering
    \setlength{\tabcolsep}{3pt}
    \begin{tabular}{c@{\hspace{0.5cm}}c}
    \begin{tabular}{l|cccc}
    \Xhline{1.2pt}
    $\pi$ & $s_{0}$ & $s_{l}$ & $s_{h}$ & $s_{1}$ \\
    \hline
    $H$ & $0$ & $(1-\lambda)p$ & $(1-\lambda)(1-p)$ & $\lambda$ \\
    $L$ & $\lambda$ & $(1-\lambda)(1-p)$ & $(1-\lambda)p$ & $0$ \\
    \Xhline{1.2pt}
    \end{tabular}
    &
    \begin{tabular}{l|cccc}
    \Xhline{1.2pt}
    $\pi'$ & $s_{0}$ & $s_{l}$ & $s_{h}$ & $s_{1}$ \\
    \hline
    $H$ & $0$ & $(1-\lambda')p'$ & $(1-\lambda')(1-p')$ & $\lambda'$ \\
    $L$ & $\lambda'$ & $(1-\lambda')(1-p')$ & $(1-\lambda')p'$ & $0$ \\
    \Xhline{1.2pt}
    \end{tabular}
    \end{tabular}
    \end{table}
    \noindent
    Here, $\lambda\in[0,1]$ and $\lambda'\in[0,1]$ denote the probabilities of receiving conclusive signals, while $p\in(0,1/2)$ and $p'\in(0,1/2)$ capture the accuracy of the symmetric binary signals. We assume that $\lambda>\lambda'$ and $p<p'$, so that $\pi\succsim_B\pi'$.

    Consider the following (original) decision problem $\mathcal{D} = (A, u)$.
    There are two actions, $A = \{a_{0}, a_{1}\}$, with payoffs given by $u(a_{0}, L) = u(a_{0}, H) = 0$, $u(a_{1}, H) = 1 - r$, and $u(a_{1}, L) = -r$, where $r \in (1 - p, 1)$.
    Then, by Theorem \ref{thm_characterization}, we can construct an auxiliary problem $\overline{\mathcal{D}}$ such that the expected payoff under $\pi$ is the same as in the original problem, while the expected payoff under $\pi'$ coincides with that in the observable-signal setting.

    Under the assumption that $r > 1 - p$, the payoffs under $\pi$ can be computed straightforwardly.
    Specifically, there exists a unique equilibrium $\bm{\sigma}$ under $\pi$, which yields a payoff of $V^{\overline{\mathcal{D}}}_{i}(\pi, \bm{\sigma}) = \frac{1}{2}[1 - (1 - \lambda)^{i}](1 - r)$ for each $i$.
    In the Appendix, we derive $\overline{V}^{\overline{\mathcal{D}}}_{i}(\pi')$ by focusing on sufficiently large $i$ and applying large deviation theory.
    The derivation implies that $V^{\overline{\mathcal{D}}}_{i}(\pi,\bm{\sigma}) \geq \overline{V}^{\overline{\mathcal{D}}}_{i}(\pi')$ for all $r\in(1-p,1)$ and sufficiently large $i$ if and only if 
    \[
    \lambda \geq 1-2(1-\lambda')\sqrt{p'(1-p')},
    \]
    where this is a necessary condition for $\pi \succsim_{S} \pi'$.
    
    Note that Theorem \ref{thm_sufficient_condition} and Proposition \ref{prop_equivalent_sufficient_condition} together imply that our sufficient condition is equivalent to $\lambda \geq 1 - 2(1 - \lambda')p'$.
    Thus, by comparing with the above necessary condition, this condition is, to some extent, close to being necessary.
    In particular, our sufficient condition becomes tight when $p'$ is large.\footnote{As $p' \to \frac{1}{2}$, the inequality converges to $\lambda \geq \lambda'$.
    This is consistent with Theorem \ref{thm_sufficient_condition}, since $\pi'$ approaches a mixture of full and no information.}
    \qed
\end{example}

In Example \ref{example: converse}, $\pi \succsim_{S} \pi'$ does not hold when $\pi$ and $\pi'$ have the same probability of disclosing the conclusive signals.
This observation holds more generally and yields a specific condition under which the converse of Theorem \ref{thm_sufficient_condition} holds, as shown below.
\begin{corollary} \label{prop: converse}
    Suppose that $\pi(\mu=1|H)=\pi'(\mu'=1|H)>0$ and $\pi(\mu=0|L)=\pi'(\mu'=0|L)>0$.
    Then, $\pi \succsim_{S} \pi'$ (if and) only if there exists $\pi''$ such that $\supp{\mu''}=\{0,\mu_{0},1\}$ and $\pi \succsim_{B} \pi'' \succsim_{B} \pi'$.
\end{corollary}

%%%%%%%%%%%%%%%%%%%%%%%%%%%%%%%%%%%%%%%%%%%%%%%%%%%%%%%%%%%%%%%%%%%%%%%%%%%%%%
\section{Alternative Social Value Orders} \label{sec:discussion}
%%%%%%%%%%%%%%%%%%%%%%%%%%%%%%%%%%%%%%%%%%%%%%%%%%%%%%%%%%%%%%%%%%%%%%%%%%%%%%
\subsection{Long-Run Comparison} \label{sec:eventual}
%%%%%%%%%%%%%%%%%%%%%%%%%%%%%%%%%%%%%%%%%%%%%%%%%%%%%%%%%%%%%%%%%%%%%%%%%%%%%%
Our original binary relation  appears strong, as it requires that all agents prefer one information structure to another.
A plausible alternative definition would require only that all sufficiently late agents prefer one information structure over another.
We focus on this weaker version and show that our original characterization still provides insights into it.
\begin{definition}\label{def: eventual socially valuable order}
    $\pi$ is {\it eventually more socially valuable} than $\pi'$, denoted $\pi\succsim_{ES} \pi'$, if there is one threshold $N\in\mathbb{N}$ such that for any decision problem $\mathcal{D}=(A,u)$, $V_{i}^{\mathcal{D}}(\pi,\bm{\sigma}^{*})\geq V_{i}^{\mathcal{D}}(\pi',\bm{\sigma}^{**})$ for all $i\geq N$, any equilibrium $\bm{\sigma}^{*}$ under $(\mathcal{D},\pi)$, and any equilibrium $\bm{\sigma}^{**}$ under $(\mathcal{D},\pi')$.
\end{definition}

%%%%%%%%%%%%%%%%%%%%%%%%%%
% Characterization
By utilizing the proof of Theorem \ref{thm_characterization}, we have the following characterization:
\begin{proposition}\label{thm: asymptotic comparison}
    $\pi \succsim_{ES} \pi'$ if and only if there exists $N\in\mathbb{N}$ such that $V_{i}^{\mathcal{D}}(\pi,\bm{\sigma}^{*})\geq \overline{V}_{i}^{\mathcal{D}}(\pi')$ for any decision problem $\mathcal{D}$, all $i\geq N$, and any equilibrium $\bm{\sigma}^{*}$ under $(\mathcal{D},\pi)$.
\end{proposition}
The proof is almost the same as the one in Theorem \ref{thm_characterization}, and thus it is omitted.
It turns out that the necessary condition in Corollary \ref{observation_necessity_of_conclusive} is also a necessary condition in this setting. 
Thus, it is necessary that $\pi$ induces unbounded beliefs if $\pi \succsim_{ES}\pi'$ for some $\pi'$ except for the trivial case.
\begin{corollary}\label{coro_asymptotic}
    Suppose that $\pi'$ is not no information.
    If $\pi\succsim_{ES} \pi'$, then $\pi$ induces unbounded beliefs.
\end{corollary}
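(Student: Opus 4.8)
The plan is to mirror the proof of Corollary \ref{observation_necessity_of_conclusive}, substituting the eventual characterization (the theorem immediately preceding this corollary) for Theorem \ref{thm_characterization}, and to argue by contradiction. So suppose $\pi \succsim_{ES} \pi'$ with $\pi'$ not no information, and suppose toward a contradiction that $\pi$ does \emph{not} induce unbounded beliefs, i.e. $\mathrm{co}(\supp{\mu}) \subsetneq [0,1]$. Then the private beliefs generated by $\pi$ are bounded away from $0$ or from $1$; that is, $\pi$ induces bounded beliefs.

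First I would apply the characterization of $\succsim_{ES}$: there exists $N \in \mathbb{N}$ such that $V_i^{\mathcal{D}}(\pi,\bm{\sigma}^*) \geq \overline{V}_i^{\mathcal{D}}(\pi')$ for every decision problem $\mathcal{D}$, every $i \geq N$, and every equilibrium $\bm{\sigma}^*$ under $(\mathcal{D},\pi)$. The goal is to exhibit a single $\mathcal{D}$ and a large enough $i$ for which this inequality fails.

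Next I would invoke the classical result of \citet{smith2000pathological}: bounded private beliefs imply that there is a decision problem $\mathcal{D}$ (a binary-action problem with strict preferences, with the indifference threshold tuned to the belief bound) under which every equilibrium $\bm{\sigma}^*$ exhibits an information cascade. Consequently there is positive probability of settling on the wrong action, so $V_i^{\mathcal{D}}(\pi,\bm{\sigma}^*)$ converges, as $i \to \infty$, to a limit strictly below the full-information (first-best) value $V^{*}$ of $\mathcal{D}$; in particular there is $\eta > 0$ with $V_i^{\mathcal{D}}(\pi,\bm{\sigma}^*) \leq V^{*} - \eta$ for all large $i$ and all equilibria. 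On the other side, since $\pi'$ is not no information it is informative, so $i$ conditionally independent observations of $\pi'$ reveal the state asymptotically; hence $\overline{V}_i^{\mathcal{D}}(\pi') \to V^{*}$ for this same $\mathcal{D}$. Choosing $i$ large enough that $i \geq N$ and $\overline{V}_i^{\mathcal{D}}(\pi') > V^{*} - \eta$ yields $\overline{V}_i^{\mathcal{D}}(\pi') > V_i^{\mathcal{D}}(\pi,\bm{\sigma}^*)$, contradicting the characterization inequality. Therefore $\pi$ induces unbounded beliefs.

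The main obstacle is ensuring that the cascade-induced gap is \emph{uniform} over equilibria and holds for \emph{all} large $i$, rather than only asymptotically along a single equilibrium; this is exactly the content that carries over from the proof of Corollary \ref{observation_necessity_of_conclusive}, where choosing $\mathcal{D}$ with strict preferences makes the relevant equilibrium behavior essentially pinned down and the limit payoff well defined. Everything else—the appeal to the $\succsim_{ES}$ characterization and the convergence $\overline{V}_i^{\mathcal{D}}(\pi') \to V^{*}$ from informativeness of $\pi'$—is routine.
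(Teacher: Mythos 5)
Your overall strategy---argue by contradiction, feed a suitable decision problem into the characterization of $\succsim_{ES}$, and use the fact that repeated independent draws from a non-degenerate $\pi'$ learn the state so that $\overline{V}_i^{\mathcal{D}}(\pi')$ converges to the first-best value---is exactly the paper's route (the paper simply reuses the proof of Corollary \ref{observation_necessity_of_conclusive} verbatim). But two steps in your write-up need repair. First, the negation of ``unbounded beliefs'' is one-sided: $\mathrm{co}(\supp\mu)\neq[0,1]$ only gives $1\notin\supp\mu$ or $0\notin\supp\mu$, so you cannot directly invoke the two-sided bounded-beliefs cascade theorem of Smith and S{\o}rensen. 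The paper instead reduces by symmetry to $\supp\mu\subseteq[0,r]$ for some $r\in[\mu_0,1)$ and takes the binary problem with a safe action paying $0$ and a risky action paying $1-r$ in $H$ and $-r$ in $L$.

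Second, and more importantly, the ``main obstacle'' you identify---a gap $V_i^{\mathcal{D}}(\pi,\bm{\sigma}^*)\le V^*-\eta$ that is \emph{uniform over all equilibria} and all large $i$---is not needed, and your claim that it ``carries over'' from the proof of Corollary \ref{observation_necessity_of_conclusive} mischaracterizes that proof. Since $\pi\succsim_{ES}\pi'$ requires $V_i^{\mathcal{D}}(\pi,\bm{\sigma}^*)\ge\overline{V}_i^{\mathcal{D}}(\pi')$ for \emph{every} equilibrium $\bm{\sigma}^*$, the contradiction only requires exhibiting \emph{one} bad equilibrium; the paper uses the profile in which every agent always plays the safe action, which is an equilibrium (no information is ever revealed, so every posterior stays in $[0,r]$) and gives $V_i^{\mathcal{D}}(\pi,\bm{\sigma}^*)=0$ for every $i$, while $\overline{V}_i^{\mathcal{D}}(\pi')>0$ for all sufficiently large $i$. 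This disposes of the ``for all $i\ge N$'' quantifier immediately, with no appeal to cascade asymptotics or to uniformity over equilibria. As written, your justification of the uniform bound (citing the cascade literature plus ``strict preferences pinning down equilibrium behavior'') is not a proof, so you should either supply that argument or, more simply, drop the uniformity requirement and use the single all-safe-action equilibrium.
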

In Example \ref{example_1}, $V_{i}^{\mathcal{D}}(\pi',\bm{\sigma}^{**})$ is strictly higher than $V_{i}^{\mathcal{D}}(\pi,\bm{\sigma}^{*})$ for all $i\geq 2$, and thus $\pi$ is not eventually more socially valuable than $\pi'$. 
This means that inducing unbounded beliefs is not a sufficient condition even in this case.

%%%%%%%%%%%%%%%%%%%%%%%%%%
% Long-run comparison and eventual Blackwell order/large sample order
By Proposition \ref{thm: asymptotic comparison}, we can highlight the difference between $\succsim_{ES}$ and the eventual Blackwell order/the large sample order in \citet{azrieli2014comment} and \citet{mu2021blackwell}.
\begin{definition}[\citet{mu2021blackwell}]
    $\pi$ is {\it eventually Blackwell more informative than} $\pi'$, denoted $\pi \succsim_{EB} \pi'$, if there exists $N\in\mathbb{N}$ such that $\pi^{\otimes i}\succsim_{B} \pi'^{\; \otimes i}$ for all $i\geq N$.
\end{definition}
By the same logic behind Example \ref{example_1}, the following Proposition \ref{prop: large sample order and long run comparison} shows that $\succsim_{ES}$ is strictly stronger than the eventual Blackwell order $\succsim_{EB}$, mirroring the relationship between $\succsim_{S}$ and the Blackwell order $\succsim_{B}$.
\begin{proposition}\label{prop: large sample order and long run comparison}
    $\succsim_{ES}$ is a strictly stronger binary relation than $\succsim_{EB}$.
\end{proposition}
%%%%%%%%%%%%%%%%%%%%%%%%%%
As $\succsim_{S}$ is stronger than $\succsim_{ES}$, Theorem \ref{thm_sufficient_condition} still provides a sufficient condition for $\succsim_{ES}$.
However, we can provide a weaker sufficient condition for $\succsim_{ES}$ by leveraging the eventual Blackwell order $\succsim_{EB}$.
\begin{proposition}\label{thm: sufficient condition for long run comparison}
    If there exists $\pi''$ such that $\supp{\mu''}=\{0,\mu_{0},1\}$ and $\pi\succsim_{EB} \pi'' \succsim_{EB} \pi'$, then $\pi\succsim_{ES}\pi'$.
\end{proposition}
In other words, if there exists a mixture of full and no information between $\pi$ and $\pi'$ in the eventual Blackwell order, then $\pi$ is eventually more socially valuable than $\pi'$.
%%%%%%%%%%%%%%%%%%%%%%%%%%

Note that, as shown in \citet{azrieli2014comment} and \citet{mu2021blackwell}, there exists a mixture of full and no information $\pi''$ and a symmetric binary information structure $\pi'$ such that $\pi'' \succsim_{EB} \pi'$ holds, while $\pi'' \succsim_{B} \pi'$ does not.
Our proof demonstrates that, under the condition $\supp{\mu''} = \{0, \mu_{0}, 1\}$, $\pi'' \succsim_{EB} \pi'$ implies $\pi'' \succsim_{ES} \pi'$.
Therefore, combining these observations shows that there exists an example in which $\pi'' \succsim_{ES} \pi'$ holds even though $\pi'' \succsim_{B} \pi'$ does not.\footnote{This implies that $\succsim_{S}$ is strictly stronger than $\succsim_{ES}$ as $\succsim_{S}$ is strictly stronger than $\succsim_{B}$ by Proposition \ref{prop: observation_strict_inclusion}.}
Furthermore, together with Example \ref{example_1}, this implies that $\succsim_{ES}$ is independent of the Blackwell order $\succsim_{B}$.

%%%%%%%%%%%%%%%%%%%%%%%%%%
% Moscarni Smith
As a remark, one may also consider an intermediate notion, in the spirit of \citet{moscarini2002law} and \citet{frick2023learning}, that allows the threshold horizon to depend on the decision problem. This formulation is weaker than Definition \ref{def: eventual socially valuable order} and stronger than Definition \ref{def: limit order}. Under this variant, the analogues of Propositions \ref{thm: asymptotic comparison}, \ref{prop: large sample order and long run comparison}, and Corollary \ref{coro_asymptotic} follow after appropriately modifying the order of quantifiers. However, the analogue of Proposition \ref{thm: sufficient condition for long run comparison} does not follow directly, because the corresponding eventual comparison need not imply the Blackwell order even for mixtures of full and no information.

%%%%%%%%%%%%%%%%%%%%%%%%%%
As noted above, $\succsim_{ES}$ compares two information structures based on the expected payoffs of agents who arrive sufficiently late.
To further weaken this requirement, we now introduce a coarser binary relation that focuses solely on the expected payoffs in the limit.
\begin{definition}\label{def: limit order}
    $\pi$ is {\it eventually more socially valuable in the limit} than $\pi'$, denoted $\pi\succsim_{LES} \pi'$, if $\lim_{i\to \infty}V_{i}^{\mathcal{D}}(\pi,\bm{\sigma}^{*})\geq \lim_{i\to \infty}V_{i}^{\mathcal{D}}(\pi',\bm{\sigma}^{**})$ for any decision problem $\mathcal{D}=(A,u)$, any equilibrium $\bm{\sigma}^{*}$ under $(\mathcal{D},\pi)$, and any equilibrium $\bm{\sigma}^{**}$ under $(\mathcal{D},\pi')$.
\end{definition}
Note that $\succsim_{LES}$ is equivalent to the comparison of the limit average discounted payoffs.
Formally, $\pi \succsim_{LES} \pi'$ if and only if
\[
\lim_{\delta\to 1} (1-\delta)\sum_{i=1}^{\infty}\delta^{i-1}V_i^{\mathcal{D}}(\pi,\bm{\sigma}^*)
\geq
\lim_{\delta\to 1} (1-\delta)\sum_{i=1}^{\infty}\delta^{i-1}V_i^{\mathcal{D}}(\pi',\bm{\sigma}^{**})
\]
for any decision problem $\mathcal{D}=(A,u)$, any equilibrium $\bm{\sigma}^*$ under $(\mathcal{D},\pi)$, and any equilibrium $\bm{\sigma}^{**}$ under $(\mathcal{D},\pi')$.
% Note that $\succsim_{LES}$ is equivalent to the welfare comparison at the limit.
% Formally, $\pi \succsim_{LES} \pi'$ if and only if the discounted sum of payoffs is higher under $\pi$ rather than $\pi'$ for all discount factors sufficiently close to $1$, any decision problems, and any equilibrium realizations.

Under this weaker notion, we have a tight characterization.
%%%%%%%%%%%%%%%%%%%%%%%%%%
\begin{theorem}\label{thm: limit comparison}
    Suppose that $\pi'$ is not no information.
    $\pi\succsim_{LES} \pi'$ if and only if $\pi$ induces unbounded beliefs. 
\end{theorem}
Thus, if we are concerned only with limit welfare, it is sufficient to check whether one information structure induces unbounded beliefs or discloses conclusive signals about each state, which is a necessary condition for $\succsim_{S}$ as in Corollary \ref{observation_necessity_of_conclusive}.
Moreover, Theorem \ref{thm: limit comparison}, combined with Example~\ref{example_1}, indicates that $\succsim_{ES}$ is strictly stronger than $\succsim_{LES}$.
The relationships among the binary relations introduced so far are summarized in Figure~\ref{fig: binary relations}.
\begin{figure}[ht]
    \centering
\begin{tikzpicture}[
  node distance= 1.0cm and 1.5cm
]

\node (R1) {$\succsim_{S}$};
\node (R2) [right=of R1] {$\succsim_{ES}$};
\node (R3) [below=of R1] {$\succsim_{B}$};
\node (R4) [below=of R2] {$\succsim_{EB}$};
\node (R5) [right=of R2] {$\succsim_{LES}$};

\draw[->, very thick] (R1) -- (R2);
\draw[->, very thick] (R1) -- (R3);
\draw[->, very thick] (R2) -- (R4);
\draw[->, very thick] (R3) -- (R4);
\draw[->, very thick] (R2) -- (R5);

\end{tikzpicture}
    \caption{Relationships among Binary Relations}
    \par
  \makeatletter\def\TPT@hsize{}\makeatletter
  \begin{tablenotes}
       \footnotesize
       \item[] \textit{Notes:}
       $\succsim_{B}$ (resp. $\succsim_{EB}$) represent the (resp. eventual) Blackwell order.
    $\succsim_{S}$ denotes the socially valuable order defined in Definition \ref{def: socially valuable order}.
    $\succsim_{ES}$ represents the eventual socially valuable order introduced in Definition \ref{def: eventual socially valuable order}.
    $\succsim_{LES}$ corresponds to the limit eventual socially valuable order defined in Definition \ref{def: limit order}.
    Arrows indicate the strength of binary relations:
    an arrow from relation P to relation Q means that P is (strictly) stronger than Q.
       \end{tablenotes}
    \label{fig: binary relations}
\end{figure}

%%%%%%%%%%%%%%%%%%%%%%%%%%%%%%%%%%%%%%%%%%%%%%%%%%%%%%%%%%%%%%%%%%%%%%%%%%%%%%
\subsection{Equilibrium Multiplicity} \label{sec:weak_order}
%%%%%%%%%%%%%%%%%%%%%%%%%%%%%%%%%%%%%%%%%%%%%%%%%%%%%%%%%%%%%%%%%%%%%%%%%%%%%%
Our definitions so far are too strong, particularly because they require the comparison to hold for all equilibria.
As a result, $\succsim_{S}$ is not a partial order.
\begin{proposition}\label{prop_partial_order}
    $\pi \succsim_{S} \pi$ if and only if $\supp\mu\subseteq\{0,\mu_{0},1\}$.
\end{proposition}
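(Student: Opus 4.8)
We need to prove:
$$\pi \succsim_S \pi \iff \text{supp}(\mu) \subseteq \{0, \mu_0, 1\}.$$

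This is about reflexivity of the "more socially valuable" relation. The claim is that $\pi$ is more socially valuable than *itself* exactly when its private belief support is contained in $\{0, \mu_0, 1\}$ — i.e., when $\pi$ is a mixture of full and no information.

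Let me think about what $\pi \succsim_S \pi$ means. By definition:
- For all decision problems $\mathcal{D}$, for any agent $i$, any equilibrium $\bm{\sigma}^*$ under $(\mathcal{D}, \pi)$, and any equilibrium $\bm{\sigma}^{**}$ under $(\mathcal{D}, \pi)$, we have $V_i^{\mathcal{D}}(\pi, \bm{\sigma}^*) \geq V_i^{\mathcal{D}}(\pi, \bm{\sigma}^{**})$.

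Since this must hold for *all* pairs of equilibria (including swapping $\bm{\sigma}^*$ and $\bm{\sigma}^{**}$), $\pi \succsim_S \pi$ is equivalent to saying: **all equilibria yield the same expected payoff for every agent, in every decision problem.** That is, the equilibrium payoff is unique (equilibrium-payoff independence).

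The plan is to reduce the claim to a statement about uniqueness of equilibrium payoffs. Observe first that $\pi \succsim_S \pi$ demands $V_i^{\mathcal D}(\pi,\bm\sigma^*)\ge V_i^{\mathcal D}(\pi,\bm\sigma^{**})$ for \emph{every} ordered pair of equilibria $(\bm\sigma^*,\bm\sigma^{**})$ under $(\mathcal D,\pi)$; swapping the roles of the two equilibria yields the reverse inequality, so $\pi\succsim_S\pi$ holds if and only if, for every decision problem $\mathcal D$ and every agent $i$, all equilibria under $(\mathcal D,\pi)$ deliver the same ex-ante payoff $V_i^{\mathcal D}$. Thus the proposition becomes: equilibrium payoffs are uniquely pinned down, for all agents and all decision problems, precisely when $\supp\mu\subseteq\{0,\mu_0,1\}$.

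For the sufficiency direction I would first note that the martingale identity $\mathbb E[\mu]=\mu_0$ rules out $\supp\mu=\{0,\mu_0\}$ and $\supp\mu=\{\mu_0,1\}$, so the only feasible configurations are no information $(\{\mu_0\})$, full information $(\{0,1\})$, and the genuine mixture $(\{0,\mu_0,1\})$. In the mixture case, Lemma \ref{lemma_expected_payoff_3support} asserts that every agent attains the observe-the-signals payoff $\overline V_i^{\mathcal D}(\pi)$ in every equilibrium and every decision problem; since $\overline V_i^{\mathcal D}(\pi)$ does not depend on the equilibrium, payoff uniqueness is immediate. The two degenerate cases follow by the same logic: under full information each agent learns $\omega$ from her own signal and obtains $\mathbb E_\omega[\max_a u(a,\omega)]$, while under no information the public belief never leaves $\mu_0$ (actions are conditionally independent of $\omega$, so history is uninformative) and each agent obtains $\max_a\mathbb E_\omega[u(a,\omega)]$; both values are equilibrium-independent, so $\pi\succsim_S\pi$.

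For necessity I argue the contrapositive. If $\supp\mu\not\subseteq\{0,\mu_0,1\}$ there is a signal inducing a private belief $x\in(0,1)\setminus\{\mu_0\}$, and by relabeling states we may take $x>\mu_0$. Consider the binary-action problem $A=\{a_0,a_1\}$ with $u(a_0,\cdot)=0$, $u(a_1,H)=1-r$, $u(a_1,L)=-r$, and set $r=x$. Because agent $1$'s public belief equals the prior, her posterior coincides with her private belief, so she strictly prefers $a_1$ (resp.\ $a_0$) at beliefs above (resp.\ below) $x$ and is \emph{exactly indifferent} at the belief-$x$ signal, which has positive probability in both states since $x\in(0,1)$. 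This indifference supports two equilibria: in $\bm\sigma^*$ agent $1$ breaks the tie toward $a_1$, so her action reveals whether her belief is $\ge x$ or $<x$; in $\bm\sigma^{**}$ she breaks it toward $a_0$, pooling the belief-$x$ realization with the low beliefs. The two tie-breaks induce different public-belief distributions and hence different experiments for agent $2$. In the clean case where $\pi$ is the binary signal with $\supp\mu=\{x_L,x_H\}$ and $x=x_H$, this is decisive: under $\bm\sigma^{**}$ agent $1$ is uninformative, so agent $2$'s posterior equals her own belief, which never strictly exceeds $r=x_H$, giving $V_2^{\mathcal D}(\pi,\bm\sigma^{**})=0$; under $\bm\sigma^*$, on the path where agents $1$ and $2$ both receive $s_H$, agent $2$ stacks public belief $x_H$ on private belief $x_H$ and obtains a posterior strictly above $r$, so $a_1$ is strictly profitable and $V_2^{\mathcal D}(\pi,\bm\sigma^*)>0$. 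Hence payoff uniqueness fails and $\pi\not\succsim_S\pi$.

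The main obstacle lies in extending this last step to an arbitrary $\pi$ carrying the interior belief $x$ alongside other support points. The only equilibrium freedom is the placement of the belief-$x$ atom, so $\bm\sigma^*$ and $\bm\sigma^{**}$ generate the partitions $\{\ge x\}\mid\{<x\}$ and $\{>x\}\mid\{\le x\}$ of agent $1$'s signals; these are \emph{not} Blackwell-comparable, so the convenient ``more information is better'' shortcut is unavailable and one must instead verify $V_2^{\mathcal D}(\pi,\bm\sigma^*)\neq V_2^{\mathcal D}(\pi,\bm\sigma^{**})$ by directly tracking how re-pooling the atom alters agent $2$'s best response across histories. I expect to secure strictness either by a marginal computation of this payoff difference or, when it degenerates, by enriching the action set with an auxiliary action that preserves agent $1$'s indifference at $x$ while rendering agent $2$ strictly sensitive to the re-pooled atom—mirroring the counterexample construction underlying Proposition \ref{observation_weak_order_3support}.
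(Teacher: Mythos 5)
Your sufficiency direction is sound: the reduction of $\pi\succsim_S\pi$ to equilibrium-payoff uniqueness is correct, the martingale argument correctly whittles the case $\supp{\mu}\subseteq\{0,\mu_0,1\}$ down to no information, full information, and the genuine mixture, and Lemma \ref{lemma_expected_payoff_3support} (plus the two easy degenerate cases) closes it. This is essentially the same engine as the paper's, which instead cites Theorem \ref{thm_sufficient_condition} with $\pi''=\pi$.

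The necessity direction, however, has a genuine gap, and you acknowledge it yourself: you only complete the argument for a binary-signal $\pi$ with $\supp{\mu}=\{x_L,x_H\}$, and for general $\pi$ you defer to ``a marginal computation'' or an action-set enrichment that you do not carry out. The route you chose --- set $r=x$ exactly and manufacture two equilibria of the \emph{same} decision problem via tie-breaking --- is harder than necessary and is precisely where the difficulty concentrates, since (as you note) the two induced partitions $\{\mu\ge x\}\mid\{\mu<x\}$ and $\{\mu>x\}\mid\{\mu\le x\}$ are not Blackwell-comparable, so strictness of the payoff gap must be verified by hand and can in principle degenerate. The paper sidesteps all of this by invoking Theorem \ref{thm_characterization}: with $\pi'=\pi$, reflexivity requires every equilibrium payoff to reach the observable-signal benchmark $\overline{V}_i^{\mathcal{D}}(\pi)$. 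One then picks $r$ \emph{strictly} inside $\bigl(x,\ \frac{x^2}{x^2+\frac{\mu_0}{1-\mu_0}(1-x)^2}\bigr)$, so that agent $1$ at belief $x$, and agent $2$ after history $a_0$ at belief $x$, strictly prefer $a_0$ in \emph{every} equilibrium --- no tie-breaking, no equilibrium multiplicity needed --- while $\overline{V}_2^{\mathcal{D}}(\pi)>0$ because two independent belief-$x$ signals push the posterior above $r$. Hence $V_2^{\mathcal{D}}(\pi,\bm{\sigma}^*)<\overline{V}_2^{\mathcal{D}}(\pi)$ and reflexivity fails. The enlarged decision problem you gesture at is exactly what the proof of Theorem \ref{thm_characterization} already packages for you; if you route your argument through that theorem rather than through explicit equilibrium multiplicity in a fixed $\mathcal{D}$, the general case closes immediately.
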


An alternative binary relation considers a weaker notion of comparison.
This definition relaxes the pessimistic and robust attitude toward equilibrium multiplicity in Definition \ref{def: socially valuable order}.
\begin{definition}
    $\pi$ is {\it weakly more socially valuable} than $\pi'$, denoted $\pi\succsim_{W} \pi'$, if for any decision problem $\mathcal{D}=(A,u)$ and any equilibrium $\bm{\sigma}^{**}$ under $(\mathcal{D},\pi')$, there exists an equilibrium $\bm{\sigma}^{*}$ under $(\mathcal{D},\pi)$ such that $V_{i}^{\mathcal{D}}(\pi,\bm{\sigma}^{*})\geq V_{i}^{\mathcal{D}}(\pi',\bm{\sigma}^{**})$ for any agent $i$.
\end{definition}
Under this definition, it is straightforward to see that $\pi \succsim_{W} \pi$ holds for any $\pi$.
Note that Example \ref{example_1} does not involve multiple equilibria under $\pi$, and therefore, the same result holds even if we consider this weak order. 
This means that $\succsim_{W}$ is also a strictly stronger order than $\succsim_{B}$.

In the following example, we highlight the difference between $\succsim_{S}$ and $\succsim_{W}$.
\begin{example}\label{example_weak}
    Suppose that $\mu_{0}=1/2$.
    % Let $\pi: \Omega \to \Delta(\{s_{0},s_{1},s_{2}\})$ as $\pi(s_{1}|H)=1-\varepsilon$, $\pi(s_{2}|H)=\varepsilon$, $\pi(s_{0}|L)=1-\delta$, and $\pi(s_{2}|L)=\delta$.
    % Additionally, let $\pi': \Omega \to \Delta(\{s'_{0},s'_{1},s'_{2}\})$ as $\pi'(s'_{1}|H)=1-\varepsilon'$, $\pi'(s'_{2}|H)=\varepsilon'$, $\pi'(s'_{0}|L)=1-\delta'$, and $\pi'(s'_{2}|L)=\delta'$.
    Consider two information structures $\pi:\Omega\to\Delta(\{s_{0},s_{1},s_{2}\})$ and $\pi':\Omega\to\Delta(\{s'_{0},s'_{1},s'_{2}\})$ defined as follows:
    \begin{table}[H]
    \centering
    \begin{tabular}{c@{\hspace{1.5cm}}c}
    \begin{tabular}{l|ccc}
    \Xhline{1.2pt}
    $\pi$ & $s_{0}$ & $s_{1}$ & $s_{2}$ \\
    \hline
    $H$ & $0$ & $1-\varepsilon$ & $\varepsilon$ \\
    $L$ & $1-\delta$ & $0$ & $\delta$ \\
    \Xhline{1.2pt}
    \end{tabular}
    &
    \begin{tabular}{l|ccc}
    \Xhline{1.2pt}
    $\pi'$ & $s'_{0}$ & $s'_{1}$ & $s'_{2}$ \\
    \hline
    $H$ & $0$ & $1-\varepsilon'$ & $\varepsilon'$ \\
    $L$ & $1-\delta'$ & $0$ & $\delta'$ \\
    \Xhline{1.2pt}
    \end{tabular}
    \end{tabular}
    \end{table}
    \noindent
    Assume that $\delta < \delta' < \varepsilon < \varepsilon'$.\footnote{
    Note that this violates the sufficient condition of Theorem \ref{thm_sufficient_condition} as 
    \[
    1- \sum_{s\in\supp{\pi'}}\min\{\pi'(s|L),\pi'(s|H)\} = 1-\delta' \quad \text{and} \quad
    \min \{\pi(\mu=0|L),\pi(\mu=1|H)\}=1-\varepsilon.
    \]
    }
    Thus, $\pi \succsim_{B} \pi'$ holds, as this condition is equivalent to $\varepsilon \leq \varepsilon'$ and $\delta \leq \delta'$.
    Moreover, we assume that $\frac{\varepsilon'}{\varepsilon'+\delta'} < \frac{\varepsilon}{\varepsilon+\delta} < \frac{\varepsilon'^{2}}{\varepsilon'^{2}+\delta'^{2}}$.

    We now construct a decision problem in which the necessary condition of Theorem \ref{thm_characterization} is violated, implying that $\pi \succsim_{S} \pi'$ does not hold.
    Consider decision problem $\mathcal{D}$ defined as follows:
    Let $x=\frac{\varepsilon}{\varepsilon+\delta}$.
    The action set is given by $A=\{a_{0},a_{1},a_{2}\}$, with payoffs specified as follows: $u(a_{0},L)=u(a_{0},H)=u(a_{2},L)=u(a_{2},H)=0$ and $u(a_{1},H)=1-x$, $u(a_{1},L)=-x$.

    Now consider equilibrium strategy $\bm{\sigma}^{*}$ under $\pi$ such that agent $1$ chooses action $a_{0}$ if $s=s_{0}$ or $s_{2}$ and $a_{1}$ if $s=s_{1}$.
    Given this strategy, the posterior belief of agent $2$ when agent $1$'s action is $a_{0}$ and $s=s_{2}$ is $\frac{\varepsilon^{2}}{\varepsilon^{2}+\delta}$, which is lower than $x$.
    Given this, agent $2$ optimally chooses action $a_{1}$ if and only if (i) $s=s_{1}$ or (ii) $s=s_{2}$ and agent $1$ chooses action $a_{1}$.
    Thus, the expected payoff for agent $2$ under this equilibrium is given by $V_{2}^{\mathcal{D}}(\pi,\bm{\sigma}^{*})=(1-\varepsilon^{2})(1-x)/2$.
    
    However, under $\pi'$, when agent $1$ chooses action $a_{0}$ if $s'=s'_{0}$, $a_{2}$ if $s'=s'_{2}$, and $a_{1}$ if $s'=s'_{1}$, agent $2$ can perfectly infer agent $1$'s private signal.\footnote{Recall that $a_{2}$ always induces the same payoffs as $a_{0}$.
    Thus, this strategy is also optimal for agent $1$.}
    Given the assumption $x=\frac{\varepsilon}{\varepsilon+\delta} < \frac{\varepsilon'^{2}}{\varepsilon'^{2}+\delta'^{2}}$, when agent $2$ observes that agent $1$ chooses action $a_{2}$ and receives the private signal $s'=s'_{2}$, the optimal action is $a_{1}$.
    Thus, agent $2$ optimally chooses action $a_{1}$ if and only if either (i) $s'=s'_{1}$ or (ii) $s'=s'_{2}$ and agent $1$ chooses either action $a_{1}$ or $a_{2}$.
    Let $\bm{\sigma}^{**}$ denote the equilibrium strategy profile following this tie-breaking rule.
    Then, the expected payoff of agent $2$ is 
    \begin{align*}
        V_{2}^{\mathcal{D}}(\pi',\bm{\sigma}^{**})& =\overline{V}_{2}^{\mathcal{D}}(\pi') \\
        & = \frac{1}{2}(1-\varepsilon')(1-x) + \frac{1}{2}\varepsilon'(1-\varepsilon')(1-x)  \\
        & \quad \quad \quad \quad + \frac{1}{2}(\varepsilon'^{2}+\delta'^{2})\left(\frac{\varepsilon'^{2}}{\varepsilon'^{2}+\delta'^{2}}(1-x) +  \frac{\delta'^{2}}{\varepsilon'^{2}+\delta'^{2}}(-x) \right) \\
        &=\frac{1}{2}(1-x)-\frac{1}{2}\delta'^{2}x.
    \end{align*}
    Since $V_{2}^{\mathcal{D}}(\pi,\bm{\sigma}^{*})<V_{2}^{\mathcal{D}}(\pi',\bm{\sigma}^{**})$ is equivalent to $ \frac{\varepsilon}{\varepsilon+\delta} < \frac{\varepsilon'^{2}}{\varepsilon'^{2}+\delta'^{2}}$, it follows that $\pi \succsim_{S} \pi'$ does not hold. \qed

    Next, we establish that $\pi \succsim_{W} \pi'$.
    By directly constructing the equilibrium, we have a slightly more general observation:
    \begin{proposition}\label{observation_weak_order_3support}
    Suppose $\pi \succsim_{B} \pi'$ and $\supp \mu=\{0,x,1\}$ such that $|\mu_{0}-x|\geq |\mu_{0}-y|$ for all $y\in \supp {\mu'}\cap(0,1)$, where $x,y\in [0,1]$.\footnote{If $x=0$ or $x=1$, then $\supp \mu=\{0,1\}$.} Then, $\pi \succsim_{W} \pi'$.
    \end{proposition}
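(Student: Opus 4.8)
The plan is to exploit the freedom in equilibrium selection that $\succsim_W$ grants: since we may choose the equilibrium under $\pi$, it suffices to exhibit, for each decision problem $\mathcal{D}$, a single equilibrium $\bm{\sigma}^*$ under $(\mathcal{D},\pi)$ satisfying $V_i^{\mathcal{D}}(\pi,\bm{\sigma}^*)\geq \overline{V}_i^{\mathcal{D}}(\pi')$ for every $i$. This reduction is the first step, and it is clean: for any equilibrium $\bm{\sigma}^{**}$ under $(\mathcal{D},\pi')$, agent $i$'s information $(a_1,\dots,a_{i-1},s_i)$ is a garbling of the $i$ signals $(s_1,\dots,s_i)$, so Lemma~\ref{lemma_signal_is_more_informative} gives $\overline{V}_i^{\mathcal{D}}(\pi')\geq V_i^{\mathcal{D}}(\pi',\bm{\sigma}^{**})$; hence one such $\bm{\sigma}^*$ dominates \emph{every} $\bm{\sigma}^{**}$ simultaneously, which is exactly what $\pi\succsim_W\pi'$ requires. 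I would first dispose of the degenerate cases: if $x\in\{0,1\}$ then $\supp{\mu}=\{0,1\}$ is full information and the claim is immediate, while if $x=\mu_0$ then $\supp{\mu}=\{0,\mu_0,1\}$ is a mixture of full and no information, so $\pi\succsim_B\pi'$ already yields $\pi\succsim_S\pi'$ (hence $\succsim_W$) by Theorem~\ref{thm_sufficient_condition}. Thus I may assume $x\in(0,1)\setminus\{\mu_0\}$ and, relabeling states, $x>\mu_0$.

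Second, I would construct $\bm{\sigma}^*$ as the greedy (myopic) equilibrium under $\pi$ in which every agent best-responds to her posterior, formed from the public history and her private belief in $\{0,x,1\}$. The decisive structural feature is that a conclusive signal pins the posterior at $0$ or $1$ regardless of the public belief, so conclusive signals are always followed and can never be overturned by a cascade; equivalently, because $\supp{\mu}=\{0,x,1\}$ contains $0$ and $1$, $\pi$ induces unbounded beliefs, conclusive signals perpetually break incipient herds, and the only information ever suppressed is that carried by the single inconclusive belief $x$. The target inequality then amounts to showing that, for the decision problem at hand, agent $i$'s information under $\bm{\sigma}^*$ — the public history $(a_1,\dots,a_{i-1})$ together with her own $\pi$-signal — is at least as valuable as $i$ conditionally independent draws from $\pi'$. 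The own signal already covers one $\pi'$-draw because $\pi\succsim_B\pi'$, and $\overline{V}_i^{\mathcal{D}}(\pi)\geq\overline{V}_i^{\mathcal{D}}(\pi')$ follows from $\pi^{\otimes i}\succsim_B\pi'^{\otimes i}$; the remaining informativeness must be supplied by the history.

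The main obstacle is precisely this last comparison, because herding on belief-$x$ signals under a coarse action set destroys part of their information, so it is not obvious that the history keeps pace with $\pi'^{\otimes(i-1)}$ uniformly in $i$. This is exactly where the hypothesis $|\mu_0-x|\geq|\mu_0-y|$ for every inconclusive belief $y\in\supp{\mu'}\cap(0,1)$ enters: $\pi$'s single inconclusive belief is more decisive than any inconclusive belief generated by $\pi'$, so each \emph{unsuppressed} belief-$x$ observation moves the posterior at least as far as any $\pi'$-observation, compensating for the herding loss. I would make this precise by coupling agent $i$'s equilibrium posterior-belief distribution under $(\pi,\bm{\sigma}^*)$ with the posterior distribution induced by $\pi'^{\otimes i}$ and arguing the former is a mean-preserving spread of the latter, so that convexity of the decision value delivers $V_i^{\mathcal{D}}(\pi,\bm{\sigma}^*)\geq\overline{V}_i^{\mathcal{D}}(\pi')$; an alternative route, which may be necessary if the greedy equilibrium loses too much for large $i$, is to tailor $\bm{\sigma}^*$ to the given $\bm{\sigma}^{**}$ by a garble-and-mimic profile (reproducing $\bm{\sigma}^{**}$ exactly) upgraded by conclusive-signal following (which adds only correct information, since conclusive-$H$ occurs only in state $H$ and conclusive-$L$ only in state $L$). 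The delicate accounting — tracking which belief-$x$ signals are suppressed along the history and verifying that the extremeness of $x$ offsets the loss in every realization — is the technical heart; I emphasize that the $\{0,\mu_0,1\}$-mixture shortcut of Theorem~\ref{thm_sufficient_condition} and Proposition~\ref{lemma_equivalent_condition_weak_order} is unavailable here, so a genuinely weak-order argument is unavoidable, whereas checking that $\bm{\sigma}^*$ is an equilibrium and assembling the final chain $V_i^{\mathcal{D}}(\pi,\bm{\sigma}^*)\geq\overline{V}_i^{\mathcal{D}}(\pi')\geq V_i^{\mathcal{D}}(\pi',\bm{\sigma}^{**})$ is routine.
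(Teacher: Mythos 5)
There is a genuine gap, and it lies in your opening reduction. You propose to find a single equilibrium $\bm{\sigma}^*$ under $(\mathcal{D},\pi)$ with $V_i^{\mathcal{D}}(\pi,\bm{\sigma}^*)\geq \overline{V}_i^{\mathcal{D}}(\pi')$ for all $i$, but such an equilibrium need not exist under the hypotheses of this proposition. Consider the cascade-type decision problems singled out in the paper's proof (its Case (ii)): $B^{-1}(1)\cap B^{-1}(x)=\emptyset$ and $B^{-1}(0)=B^{-1}(x)=\{a_0\}$. There Lemma~\ref{lemma_expected_payoff_special_case} pins down the payoff of \emph{every} equilibrium under $\pi$ as $\mu_0[(1-p^i)u(a_1,H)+p^iu(a_0,H)]+(1-\mu_0)u(a_0,L)$ with $p=1-\pi(\mu=1|H)$; its shortfall from the full-information payoff decays like $p^i$, whereas the shortfall of $\overline{V}_i^{\mathcal{D}}(\pi')$ decays at the large-deviations rate of $i$ independent draws from $\pi'$, which is faster whenever $\pi'$ is informative enough. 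So for large $i$ one has $\overline{V}_i^{\mathcal{D}}(\pi')>V_i^{\mathcal{D}}(\pi,\bm{\sigma}^*)$ for every equilibrium $\bm{\sigma}^*$, and consequently your mean-preserving-spread coupling cannot hold either: the herding loss is \emph{not} compensated against the observable-signal benchmark. Your "garble-and-mimic" fallback does not rescue this, since the profile you would build under $\pi$ must itself be an equilibrium, which the mimicking profile generally is not.

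You have also misplaced where the hypothesis $|\mu_0-x|\geq|\mu_0-y|$ does its work. Its role is not to make each unsuppressed $x$-observation "worth" a $\pi'$-draw; it is to guarantee that every inconclusive private belief of $\pi'$ lies in the same cascade region as $x$ (e.g.\ $\supp{\mu'}\cap(x,1)=\emptyset$ when $x>\mu_0$), so that in the problematic decision problems Lemma~\ref{lemma_expected_payoff_special_case} applies to $\pi'$ as well: every equilibrium under $\pi'$ herds too, and the comparison is made equilibrium-to-equilibrium via $1-\pi'(\mu'=1|H)\geq 1-\pi(\mu=1|H)$, bypassing $\overline{V}_i^{\mathcal{D}}(\pi')$ entirely. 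In the remaining decision problems the paper constructs a specific equilibrium under $\pi$ whose history fully reveals past signals --- exploiting the single inconclusive belief and a careful choice of $a_0\in B^{-1}(0)$ with $B^{-1}(z)\neq\{a_0\}$ for $z\in[x,1]$, so that $s_0$, $s_1$, and $s_2$ map to distinguishable actions --- and only there does your chain $V_i^{\mathcal{D}}(\pi,\bm{\sigma}^*)=\overline{V}_i^{\mathcal{D}}(\pi)\geq\overline{V}_i^{\mathcal{D}}(\pi')\geq V_i^{\mathcal{D}}(\pi',\bm{\sigma}^{**})$ go through; note that a greedy equilibrium with arbitrary tie-breaking need not be revealing in this way. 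In short, the proof requires a case split on the decision problem, with two different comparison benchmarks, which your single-benchmark plan cannot deliver.
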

    By applying Proposition \ref{observation_weak_order_3support}, we confirm that in this example, $\pi\succsim_{W} \pi'$ holds. 
    The key feature is that under $\pi$, if agent $1$ chooses action $a_{1}$ or $a_{2}$ when $s=s_{2}$, then agent $2$ can obtain the expected payoff as if she were able to observe the past signal realization.
    \qed 
\end{example}
    Beyond Example \ref{example_weak}, we cannot obtain a general characterization or a simple sufficient condition for the weaker order $\succsim_{W}$.  

%%%%%%%%%%%%%%%%%%%%%%%%%%%%%%%%%%%%%%%%%%%%%%%%%%%%%%%%
\subsection{Canonical Binary Environments} \label{sec:restricted decision problem}
%%%%%%%%%%%%%%%%%%%%%%%%%%%%%%%%%%%%%%%%%%%%%%%%%%%%%%%%
Up to this point, we have assumed that the domain of decision problems is unusually rich, following the original work of \citet{blackwell1953equivalent}.
This strong assumption simplifies our analysis in Theorem \ref{thm_characterization}, but it also rules out any comparison between binary information structures by Corollary \ref{observation_necessity_of_conclusive}.
We now show that this impossibility is sharply reversed once we restrict attention to the canonical binary environments studied in the social learning literature.
In such environments, binary signals have a special structure: along the equilibrium path, an action either reveals the agent's binary signal or becomes independent of it.
This structure prevents the kind of adverse informational externality that can overturn the Blackwell order in richer environments.

Specifically, we focus on the binary information structures under the "canonical" binary decision problems without tie issues.
As in the classical works of \citet{banerjee1992simple} and \citet{bikhchandani1992theory}, we consider the signal space $S=\{s_{l},s_{h}\}$ and assume without loss of generality that $\pi(s_{h}|H)\geq \pi(s_{h}|L)$. 
Let $\Pi^{B}$ denote the set of binary information structures satisfying $\pi(s_{h}|H)\geq \pi(s_{h}|L)$.

Here, we focus on all binary decision problems while excluding those that are nonessential. In particular, we exclude binary decision problems in which one action is always optimal regardless of the decision maker's belief. Moreover, for the action set $A=\{a_0,a_1\}$, we may assume without loss of generality that $a_0$ (resp. $a_{1}$) is the unique optimal action on state $L$ (resp. $H$). 
Let $\mathscr{D}^{B}$ denote the collection of all such binary decision problems. 
In other words, $\mathscr{D}^{B}=\{\mathcal{D}=(A,u)\mid A=\{a_0,a_1\},u(a_1,H)>u(a_0,H),u(a_0,L)>u(a_1,L)\}.$
Take any $\mathcal{D}\in \mathscr{D}^{B}$. Then, the optimal action is $a_0$ (resp. $a_1$) if the belief in state $H$ is less (resp. greater) than or equal to $\frac{u(a_0,L)-u(a_1,L)}{u(a_0,L)-u(a_1,L)+u(a_1,H)-u(a_0,H)}$

Fix any $N\in\mathbb{N}$.
When an agent is indifferent between the two actions, the way in which the tie is broken affects the beliefs of subsequent agents. Since our objective is to analyze situations in which no ties arise, we define the notion of \emph{no tie-break} through a sufficient condition that guarantees the absence of ties.
When $\mathcal{D}$ induces no tie-break under $\pi$, the equilibrium is essentially unique.\footnote{Here, by saying that the equilibrium is essentially unique, we mean that equilibrium strategies may differ only in their off-path prescriptions. Since such differences have no effect on agents' payoffs, we identify all such strategy profiles and denote the resulting equivalence class by $\bm{\sigma}^{*}$.}
\begin{definition}
 For $\pi \in \Pi^{B}$ and $\mathcal{D}\in \mathscr{D}^{B}$, we say $\mathcal{D}$ induces {\it no tie-break} under $\pi$ if $\frac{u(a_0,L)-u(a_1,L)}{u(a_0,L)-u(a_1,L)+u(a_1,H)-u(a_0,H)}\neq \frac{\mu_0[\pi(s_{h}|H)]^n[\pi(s_{l}|H)]^m}{\mu_0[\pi(s_{h}|H)]^n[\pi(s_{l}|H)]^m+(1-\mu_0)[\pi(s_{h}|L)]^n[\pi(s_{l}|L)]^m}$  for all $(n,m)\in \{(n,m)\in \mathbb{Z}_+^2\mid 1\leq n+m\leq N\}$.
\end{definition}

The main result is that, in the binary social learning model, an increase in the Blackwell informativeness of each agent's private information weakly increases every agent's payoff whenever the no-tie condition is satisfied. More precisely, we establish the following result.
\begin{theorem}\label{prop: binary binary}
    Suppose that $\pi,\pi'\in \Pi^{B}$ satisfy $\pi \succsim_B \pi'$. Take any $\mathcal{D}\in \mathscr{D}^{B}$ which induces no tie-break under $\pi$ and $\pi'$. Then, $V_{i}^{\mathcal{D}}(\pi,\bm{\sigma}^{*})\geq V_{i}^{\mathcal{D}}(\pi',\bm{\sigma}^{**})$ for all $i=1,2,\dots,N$, where $\bm{\sigma}^{*}$ and $\bm{\sigma}^{**}$ denote the unique equilibria under $\pi$ and $\pi'$, respectively.
\end{theorem}
The proof is relatively long and is therefore provided in the Online Appendix \ref{sec: proof binary}.
% Proof Sketch
The key idea behind the proof of Theorem \ref{prop: binary binary} is to decompose the comparison into local and global arguments. We first show that if two binary information structures $\pi$ and $\pi'$ are sufficiently "close", then they induce the same equilibrium behavior on the equilibrium path. Within such a region, a Blackwell improvement weakly increases every agent's payoff. We then consider an arbitrary pair of information structures satisfying $\pi \succsim_B \pi'$ and connect $\pi$ and $\pi'$ by a path along which information gradually becomes less informative in the Blackwell sense. Payoffs vary monotonically as long as equilibrium behavior remains unchanged and, under binary information structures, remain continuous when equilibrium behavior changes or ties arise. This allows us to extend the local monotonicity argument to arbitrary pairs of information structures and thereby conclude that agents' payoffs respect the Blackwell order.\footnote{Note that, even with binary signals, these properties may fail when tie-breaking occurs, depending on the equilibrium selection. 
% We illustrate this point in Appendix \ref{sec: proof binary}.
We illustrate this point in Online Appendix \ref{sec: proof binary}.}

% Implication for binary environment
This result contrasts sharply with Example \ref{example_1}, in which a Blackwell more informative signal can lower some agents' payoffs because observed actions garble private signals in a way that worsens the informational content of histories. 
With binary signals and binary actions, however, each on-path action either perfectly reveals the realized signal or conveys no additional information. 
Consequently, a Blackwell improvement in private signals cannot induce the kind of discontinuous deterioration in public beliefs that drives Example \ref{example_1}. 
In this sense, our result uncovers a knife-edge property of the binary environments most commonly analyzed in the literature.

% Implication for general comparison
At the same time, this observation clarifies the strength of the requirements imposed by $\succsim_{S}$. On the universal domain of decision problems, Corollary \ref{observation_necessity_of_conclusive} implies that no binary information structure can be ranked by $\succsim_{S}$. 
By contrast, binary environments restore the usual welfare intuition behind the Blackwell order as in Theorem \ref{prop: binary binary}. 
This conclusion, however, relies critically on the restriction to binary information structures: as illustrated by Example \ref{example_1}, once more general information structures are admitted, a more informative signal $\pi$ may still yield lower payoffs for some agents than a less informative signal $\pi'$, even under canonical and generic binary decision problems.
Thus, restricting attention to binary environments substantially weakens our requirements, but the resulting relation remains strictly stronger than the Blackwell order. More generally, it remains unclear how our results extend to other restricted classes of simple decision problems, owing to the complexity of the dynamics of public beliefs and agents' optimal strategies even in binary environments.

\section{Conclusion} \label{sec:conclusion}
%%%%%%%%%%%%%%%%%%%%%%%%%%%%%%%%%%%%%%%%%%%%%%%%%%%%%%%%
In this study, we analyzed the comparison of information structures in the classical social learning model.
Our main binary relation, $\succsim_{S}$, defines one information structure as more socially valuable than another if it yields weakly higher expected payoffs for all agents, across all decision problems and equilibrium realizations.
Proposition \ref{prop: observation_strict_inclusion} shows that a Blackwell more informative information structure may be worse for some agents.
Theorem \ref{thm_characterization} provides a characterization and establishes a necessary condition, demonstrating that inducing unbounded beliefs is required for all agents to be better off.
Building on this, Theorem \ref{thm_sufficient_condition} derives a simple sufficient condition by exploiting the properties of mixtures of full and no information.
Furthermore, our framework naturally extends to long-run and limit comparisons, and we discuss other extensions in Online Appendix \ref{sec:extension}.

We leave open a number of questions.
Our limitations stem primarily from two main sources of analytical difficulty: information externalities and learning speeds.
First, public beliefs depend heavily on the underlying decision problem, and there is generally no Blackwell order among the induced public beliefs, even when one information structure yields higher expected payoffs than another for a particular decision problem.
As a result, explicitly characterizing or comparing the expected payoffs or posterior beliefs becomes highly complex.
This complexity is the technical reason we rely on a strong robustness requirement with respect to equilibrium multiplicity in Theorem \ref{thm_characterization}, and why we cannot provide a more general characterization of $\succsim_{W}$ and $\succsim_{S}$ in terms of the underlying information structures (even under the restricted domain of decision problems).

Second, even when we employ Theorem \ref{thm_characterization}, a general analysis of learning speeds over all periods is still required, as this is closely related to the distribution induced by repeated sampling from the information structure.
This necessity prevents us from establishing or negating the converse of Theorem \ref{thm_sufficient_condition}, and more broadly, from determining whether the expected payoffs under one information structure are always higher than those under the observable signal setting of another in cases in which the payoffs and beliefs can be explicitly derived.

Therefore, deriving the payoffs or beliefs is difficult, and moreover, making comparisons remains challenging even when these can be computed exactly.
A natural direction for future research is to pursue a more complete characterization of the social value of information by deepening its connection with and further developing the analysis of repeated Blackwell comparisons.
%%%%%%%%%%%%%%%%%%%%%%%%%%%%%%%%%%%%%%%%%%%%%%%%%%%%%%%%%%%%%%%%%%%%%%%%%%%%%%
% \section*{Acknowledgements}
% We thank the editor, an associate editor, the anonymous referees, Yu Awaya, Makoto Hanazono, Michihiro Kandori, Satoshi Kasamatsu, Daiki Kishishita, Akihiko Matsui, Satoshi Nakada, Shinpei Noguchi, Daisuke Oyama, Tadashi Sekiguchi, Ryo Shirakawa, Wataru Tamura, Yuichi Yamamoto, and seminar participants at Tokyo University of Science, Nagoya University, Keio University, and the 2025 Japanese Economic Association Autumn Meeting.
% All remaining errors are our own.
% Sato acknowledges the financial support from the JSPS KAKENHI Grant 24KJ0100.
% Shimizu acknowledges the financial support from the JSPS KAKENHI Grant 23KJ0667.
%%%%%%%%%%%%%%%%%%%%%%%%%%%%%%%%%%%%%%%%%%%%%%%%%%%%%%%%
%%%%%%%%%%%%%%%%%%%%%%%%%%%%%%%%%%%%%%%%%%%%%%%%%%%%%%%%
\titleformat{\section}
		{\Large\bfseries}     
         {Appendix \thesection:}% the label and number
        {0.5em}% space between label/number and subsection title
        {}% formatting commands applied just to subsection title
        []% punctuation or other commands following subsection title

        % Change theorem numbering to A.1, A.2, etc.
\renewcommand{\thetheorem}{A.\arabic{theorem}}
\setcounter{theorem}{0}

 \appendix 

%%%%%%%%%%%%%%%%%%%%%%%%%%%%%%%%%%%%%%%%%%%%%%%%%%%%%%%%%%%%%%%%%%%%%%%%%%%%%%
\section{Omitted Proofs}
\subsection{Preliminaries}
%%%%%%%%%%%%%%%%%%%%%%%%%%%%%%%%%%%%
In this subsection, we present some preliminary results that will be used in subsequent proofs.

For each $a^{*}\in A$ and $z\in[0,1]$, define
\[
B(a^{*})=\left\{z\in[0,1] \mid a^{*}\in \argmax_{a\in A} \ [zu(a,H)+(1-z)u(a,L)]\right\},
\]
and 
\[
B^{-1}(z)=\left\{a\in A \mid z\in B(a)\right\}=\argmax_{a\in A} \ [zu(a,H)+(1-z)u(a,L)].
\]

\begin{lemma}\label{lemma_closed_interval}
   Fix any $\mathcal{D}=(A,u)$.     
   For each $a^{*}\in A$, $B(a^{*})$ is a closed interval.
\end{lemma}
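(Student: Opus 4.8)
The plan is to exploit the fact that, holding the decision problem $\mathcal{D}=(A,u)$ fixed, the expected payoff of each action is affine in the belief $z$. First I would introduce, for each $a\in A$, the affine function
\[
g_a(z)=zu(a,H)+(1-z)u(a,L),
\]
so that by definition $B^{-1}(z)=\argmax_{a\in A}g_a(z)$. An action $a^*$ is optimal at $z$ exactly when it weakly beats every competitor, which lets me rewrite the set of interest as an intersection of sublevel/superlevel sets:
\[
B(a^*)=\bigl\{z\in[0,1] \mid g_{a^*}(z)\geq g_a(z)\ \text{for all } a\in A\bigr\}=\bigcap_{a\in A}C_a,
\qquad C_a=\bigl\{z\in[0,1]\mid g_{a^*}(z)-g_a(z)\geq 0\bigr\}.
\]

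The core step is to show that each $C_a$ is a closed interval. Since $g_{a^*}-g_a$ is affine in the single variable $z$, I can write $g_{a^*}(z)-g_a(z)=\alpha_a z+\beta_a$ for constants $\alpha_a,\beta_a$ depending only on $u$. The solution set of the inequality $\alpha_a z+\beta_a\geq 0$ inside $[0,1]$ is then elementary to classify: when $\alpha_a=0$ it is all of $[0,1]$ (if $\beta_a\geq 0$) or empty (if $\beta_a<0$); when $\alpha_a\neq 0$ it is a closed half-line, $\{z\geq -\beta_a/\alpha_a\}$ or $\{z\leq -\beta_a/\alpha_a\}$, intersected with $[0,1]$. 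In every case $C_a$ is a closed subinterval of $[0,1]$. Finally I would assemble the pieces: because $A$ is finite, $B(a^*)$ is a finite intersection of the closed intervals $C_a$; an intersection of intervals is convex and hence again an interval, an intersection of closed sets is closed, and containment in $[0,1]$ gives boundedness. Therefore $B(a^*)$ is a closed interval, which is the claim.

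I do not expect a substantial obstacle here; the argument is essentially the observation that $B(a^*)$ is a polyhedron in one dimension. The only points that require care are bookkeeping ones: handling the degenerate case $\alpha_a=0$ consistently, and adopting a convention under which a singleton $\{c\}=[c,c]$ and the empty set (which arises precisely when $a^*$ is never optimal) count as closed intervals. The finiteness of $A$ is what guarantees that the $\argmax$ defining $B^{-1}$ is attained and that the upper envelope $\max_{a\in A}g_a$ is piecewise affine, though the closedness conclusion itself does not rely on it.
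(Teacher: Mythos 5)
Your proof is correct and rests on the same fact as the paper's: the expected payoff $zu(a,H)+(1-z)u(a,L)$ is affine in the belief $z$. The paper packages this slightly differently---it gets closedness from continuity and convexity by checking directly that the defining inequalities are preserved under convex combinations of $z_1,z_2\in B(a^*)$---but your decomposition of $B(a^*)$ into a finite intersection of closed half-lines is the same argument in one-dimensional polyhedral form, so there is nothing substantive to add.
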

\begin{proof}[Proof of Lemma \ref{lemma_closed_interval}]
    Since $zu(a,H)+(1-z)u(a,L)$ is continuous with respect to $z$, $B(a^{*})$ is a closed set. 
    Suppose $z_1\in B(a^{*})$ and $z_2\in B(a^{*})$. 
    It follows that $z_1u(a^{*},H)+(1-z_1)u(a^{*},L)\geq z_1u(a,H)+(1-z_1)u(a,L)$ and  $z_2u(a^{*},H)+(1-z_2)u(a^{*},L)\geq z_2u(a,H)+(1-z_2)u(a,L)$ for any $a\in A$. 
    Take any $t\in [0,1]$, then we have
    \begin{align*}
        &[tz_1+(1-t)z_2]u(a^{*},H)+[1-tz_1-(1-t)z_2]u(a^{*},L)\\
        &= t[z_1u(a^{*},H)+(1-z_1)u(a^{*},L)]+(1-t)[z_2u(a^{*},H)+(1-z_2)u(a^{*},L)]\\
        &\geq t[z_1u(a,H)+(1-z_1)u(a,L)]+(1-t)[z_2u(a,H)+(1-z_2)u(a,L)]\\
        &=[tz_1+(1-t)z_2]u(a,H)+[1-tz_1-(1-t)z_2]u(a,L).
    \end{align*}
    Hence, $tz_1+(1-t)z_2\in B(a^{*})$.
\end{proof}
%%%%%%%%%%%%%%%%%%%%%%%%%%%%%%%%%%%%
%%%%%%%%%%%%%%%%%%%%%%%%%%%%%%%%%%%%

\begin{lemma}\label{lemma_best_action}
 Fix any $\mathcal{D}=(A,u)$.    
    Suppose  $B^{-1}(z_1)\cap B^{-1}(z_2)\neq \emptyset$ for some $0\leq z_1< z_2\leq 1$. 
    Then, $B^{-1}(w)=B^{-1}(z_1)\cap B^{-1}(z_2)$ for all $w\in (z_1,z_2)$.
      \end{lemma}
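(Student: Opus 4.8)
The plan is to exploit the fact that, for each action $a$, the map $f_a(z) = z u(a,H) + (1-z) u(a,L)$ is affine in $z$, so that $B^{-1}(z) = \argmax_{a\in A} f_a(z)$ consists exactly of the actions whose affine line attains the (convex, piecewise-linear) upper envelope $U(z) = \max_{a \in A} f_a(z)$ at the point $z$. I would prove the two inclusions separately.

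For the inclusion $B^{-1}(z_1) \cap B^{-1}(z_2) \subseteq B^{-1}(w)$, I would appeal directly to Lemma \ref{lemma_closed_interval}: if $a \in B^{-1}(z_1) \cap B^{-1}(z_2)$, then $z_1, z_2 \in B(a)$, and since $B(a)$ is a closed interval, $[z_1, z_2] \subseteq B(a)$; in particular $w \in B(a)$, i.e., $a \in B^{-1}(w)$.

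For the reverse inclusion $B^{-1}(w) \subseteq B^{-1}(z_1) \cap B^{-1}(z_2)$, I would use the hypothesis that the intersection is nonempty to fix a witness $a^* \in B^{-1}(z_1) \cap B^{-1}(z_2)$; by the inclusion just proved, $a^* \in B^{-1}(w)$ as well. Now take any $a \in B^{-1}(w)$ and consider the affine function $g(z) = f_{a^*}(z) - f_a(z)$. Optimality of $a^*$ at $z_1$ and $z_2$ gives $g(z_1) \geq 0$ and $g(z_2) \geq 0$, while optimality of both $a$ and $a^*$ at $w$ gives $g(w) = 0$. Writing the affine $g$ as $g(z) = c(z - w)$ and evaluating at $z_1 < w < z_2$ forces $c \leq 0$ (from $g(z_1) \geq 0$) and $c \geq 0$ (from $g(z_2) \geq 0$), hence $c = 0$ and $g \equiv 0$. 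Therefore $f_a(z_1) = f_{a^*}(z_1) = U(z_1)$ and $f_a(z_2) = f_{a^*}(z_2) = U(z_2)$, so $a \in B^{-1}(z_1) \cap B^{-1}(z_2)$.

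I do not expect a genuine obstacle here: the only substantive idea is to compare a generic maximizer at the interior point $w$ against the fixed witness $a^*$ through their affine difference, after which the sign analysis of $g(z) = c(z-w)$ closes the argument immediately. The mild subtlety worth flagging is that nonemptiness of $B^{-1}(z_1) \cap B^{-1}(z_2)$ is used essentially: without a common maximizer anchoring the endpoints, there is nothing to force $g \equiv 0$, and the conclusion can fail.
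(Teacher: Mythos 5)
Your proof is correct and rests on the same underlying fact as the paper's: the payoff $z\mapsto zu(a,H)+(1-z)u(a,L)$ is affine, and the common maximizer $a^*$ at $z_1$ and $z_2$ anchors the argument. The paper packages both inclusions into a single convex-combination computation (writing $w=tz_1+(1-t)z_2$ and tracking which inequalities are strict), whereas you split them and handle the reverse inclusion via the sign analysis of the affine difference $g(z)=c(z-w)$; the two are logically equivalent and equally elementary.
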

      \begin{proof}[Proof of Lemma \ref{lemma_best_action}]
      Take any $a_{0}\in B^{-1}(z_1)\cap B^{-1}(z_2)$.
          Then, $z_1u(a_{0},H)+(1-z_1)u(a_{0},L)\geq z_1u(a,H)+(1-z_1)u(a,L)$ and $ z_2u(a_{0},H)+(1-z_2)u(a_{0},L)\geq z_2u(a,H)+(1-z_2)u(a,L)$  for all $a\in A$. Note that at least one inequality holds strictly if $a\notin B^{-1}(z_1)\cap B^{-1}(z_2)$.
          Hence, for any  $w\in (z_1,z_2)$, 
 \begin{align*}
 &wu(a_{0},H)+(1-w)u(a_{0},L)\\
 & = \frac{w-z_2}{z_1-z_2}[ z_1u(a_{0},H)+(1-z_1)u(a_{0},L)] \\
 & \quad \quad \quad \quad \quad \quad \quad \quad+\left(1-\frac{w-z_2}{z_1-z_2}\right) [ z_2u(a_{0},H)+(1-z_2)u(a_{0},L)]\\
 &\geq \frac{w-z_2}{z_1-z_2}[z_1u(a,H)+(1-z_1)u(a,L)] \\
 & \quad \quad \quad \quad \quad \quad \quad \quad+\left(1-\frac{w-z_2}{z_1-z_2}\right)[ z_2u(a,H)+(1-z_2)u(a,L)]\\
 &= wu(a,H)+(1-w)u(a,L)
 \end{align*}
  for all $a\in A$ and strict inequality holds for all $a\notin  B^{-1}(z_1)\cap B^{-1}(z_2)$. 
  Thus, $B^{-1}(w)=B^{-1}(z_{1})\cap B^{-1}(z_{2})$. 
      \end{proof}
%%%%%%%%%%%%%%%%%%%%%%%%%%%%%%%%%%%%
\begin{lemma}\label{lemma_expected_payoff_special_case}
        Suppose $(\mathcal{D},\pi)$ satisfies $\supp \mu\cap(x,1)=\emptyset$ and  $B^{-1}(0)= B^{-1}(x)=\{a_{0}\}$ for some $x\geq \mu_{0}$ and $a_{0}\in A$. 
        Take arbitrary equilibrium $\bm{\sigma}^{*}$ under $(\mathcal{D},\pi)$. 
        Then,
        \[
            V_{i}^{\mathcal{D}}(\pi,\bm{\sigma}^{*})=\mu_{0}[(1-p^{i})u(a_{1},H)+p^{i}u(a_{0},H)] +(1-\mu_{0})u(a_{0},L),
        \]
        where $p=1-\pi(\mu=1|H)$ and $a_{1}\in B^{-1}(1)$.
      \end{lemma}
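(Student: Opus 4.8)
The plan is to pin down, in an arbitrary equilibrium $\bm{\sigma}^*$, the action each agent takes along every positive-probability history, and then aggregate over states. I would start from two preliminary reductions. First, since $B^{-1}(0)=B^{-1}(x)=\{a_0\}$, Lemma~\ref{lemma_best_action} yields $B^{-1}(w)=\{a_0\}$ for every $w\in[0,x]$, so $a_0$ is the unique optimal action at any posterior in $[0,x]$. Second, writing the posterior induced by a public belief $q$ and a private belief $y$ as $P(q,y)=\frac{qy}{qy+\frac{\mu_0}{1-\mu_0}(1-q)(1-y)}$, this map is strictly increasing in each argument and satisfies $P(\mu_0,y)=y$; hence $P(q,y)\le x$ whenever $q\le\mu_0$ and $y\le x$. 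Finally, a conclusive signal about $H$ (private belief $1$) occurs with probability zero in state $L$ and forces the posterior to equal $1$ irrespective of the public belief, while $a_1\in B^{-1}(1)=\argmax_{a}u(a,H)$ gives $u(a_1,H)=\max_a u(a,H)$.

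With these facts, I would track the belief dynamics by induction on $i$. In state $L$ no agent ever receives a conclusive signal, so every private belief lies in $[0,x]$; assuming inductively that agents $1,\dots,i-1$ all play $a_0$ in state $L$, agent $i$ faces the all-$a_0$ history, whose public belief is $q_i=\frac{\mu_0 p^{\,i-1}}{\mu_0 p^{\,i-1}+(1-\mu_0)}\le\mu_0$, so her posterior is at most $x$ and she plays $a_0$. Thus in state $L$ every agent plays $a_0$ and earns $u(a_0,L)$. In state $H$, let $\tau$ be the first agent to receive a conclusive signal: for $i<\tau$ the same all-$a_0$ history gives posterior at most $x$, so agent $i$ plays $a_0$ and earns $u(a_0,H)$, while agent $\tau$ has posterior $1$ and earns $\max_a u(a,H)=u(a_1,H)$.

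The delicate point, which I expect to be the main obstacle, is the behavior of agents $i>\tau$: at posterior $1$ every action in $B^{-1}(1)$ is optimal, so distinct equilibria may break this tie differently, and I must rule out that this indeterminacy alters agent $i$'s payoff. I would resolve it with a case split. If $a_0\notin B^{-1}(1)$, agent $\tau$ must choose some $\tilde a\ne a_0$; since $\tilde a$ has probability zero in state $L$, every later agent observes $\tilde a$ in the history, updates the public belief to $1$, and therefore also earns $u(a_1,H)$. Consequently, in state $H$ agent $i$ earns $u(a_1,H)$ exactly when a conclusive signal has appeared among agents $1,\dots,i$, an event of probability $1-p^{\,i}$ by conditional independence, and earns $u(a_0,H)$ otherwise. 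If instead $a_0\in B^{-1}(1)$, then $B^{-1}(0)=\{a_0\}$ makes $a_0$ the strict maximizer of $u(\cdot,L)$ and $a_0\in B^{-1}(1)$ makes it a maximizer of $u(\cdot,H)$, so $a_0$ weakly dominates; any action other than $a_0$ is optimal only at posterior exactly $1$ and then yields $u(a_1,H)=u(a_0,H)$ in state $H$. Hence every agent earns $u(a_0,H)$ in state $H$ and $u(a_0,L)$ in state $L$, which matches the asserted value since here $u(a_1,H)=u(a_0,H)$.

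Combining the two states, agent $i$'s ex-ante payoff equals $\mu_0\big[(1-p^{\,i})u(a_1,H)+p^{\,i}u(a_0,H)\big]+(1-\mu_0)u(a_0,L)$, as claimed, with the only quantitative input being the probability $p^{\,i}$ that none of the first $i$ conditionally independent draws from $\pi$ is a conclusive signal about $H$.
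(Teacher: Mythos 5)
Your proof is correct and follows essentially the same route as the paper's: a case split on whether $a_0\in B^{-1}(1)$, and otherwise an induction showing that agents herd on $a_0$ until a conclusive signal about $H$ arrives, after which the public belief jumps to $1$ and all subsequent agents earn $u(a_1,H)$ in state $H$. Your explicit treatment of the tie-breaking at posterior $1$ and of the weak-dominance argument when $a_0\in B^{-1}(1)$ fills in details the paper leaves implicit, but the underlying argument is the same.
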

      \begin{proof}[Proof of Lemma \ref{lemma_expected_payoff_special_case}]
          If $a_{0}\in B^{-1}(1)$, the statement holds because
          \begin{align*}
             V_{i}^{\mathcal{D}}(\pi,\bm{\sigma}^{*})
             &=\mu_{0}u(a_{0},H)+(1-\mu_{0})u(a_{0},L)\\
             &=\mu_{0}[(1-p^{i})u(a_{0},H)+p^{i}u(a_{0},H)]+(1-\mu_{0})u(a_{0},L)\\
             &=\mu_{0}[(1-p^{i})u(a_{1},H)+p^{i}u(a_{0},H)]+(1-\mu_{0})u(a_{0},L).
        \end{align*}
          Suppose $a_{0}\notin B^{-1}(1)$. 
          Take any equilibrium under $\pi$.  
          Then, by Lemma \ref{lemma_best_action}, agent 1 chooses $a_{0}$ if and only if he receives $s\notin S(1)$. 
          Agent 2 chooses an action from $B^{-1}(1) $ if she receives $s\in S(1)$ or agent 1 takes an action other than $a_{0}$ because she knows that the state is $H$. 
          Notably, the public belief after observing $a_{0}$ is less than $\mu_{0}$ and Lemma~\ref{lemma_best_action} implies that $B^{-1}(z)=\{a_{0}\}$ for all $z\in[0,x]$. 
          Hence, agent 2 must choose $a_{0}$ if she receives $s\notin S(1)$ and agent 1 chooses $a_{0}$. 
          Analogously, agent $i$ takes action from $B^{-1}(1)$ if and only if he receives $s\in S(1)$ or at least one previous agent chooses an action other than $a_{0}$. 
          Otherwise, agent $i$ takes $a_{0}$. 
          Therefore, we have
                    \begin{align*}
             V_{i}^{\mathcal{D}}(\pi,\bm{\sigma}^{*})=\mu_{0}[(1-p^{i})u(a_{1},H)+p^{i}u(a_{0},H)]+(1-\mu_{0})u(a_{0},L).
        \end{align*}
      \end{proof}
%%%%%%%%%%%%%%%%%%%%%%%%%%%%%%%%%%%%%%%%%%%%%%%%%%%%%%%%%%%%%%%%%%%%%%%%%%%%%%

\subsection{Proofs of Theorem \ref{thm_characterization} and Corollary \ref{observation_necessity_of_conclusive}}
We first provide a self-contained proof of the following lemma.
\begin{lemma}\label{observation_blackwell}
    If $\pi\succsim_{B} \pi'$ and $\rho \succsim_{B} \rho'$, then $\pi \otimes\rho \succsim_{B} \pi'\otimes \rho'$.
\end{lemma}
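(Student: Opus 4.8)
The plan is to exhibit an explicit garbling of $\pi\otimes\rho$ that reproduces $\pi'\otimes\rho'$, built as the product of the two individual garblings guaranteed by the hypotheses. Writing $\pi:\Omega\to\Delta(S)$, $\pi':\Omega\to\Delta(S')$, $\rho:\Omega\to\Delta(T)$, and $\rho':\Omega\to\Delta(T')$, the assumption $\pi\succsim_B\pi'$ yields a stochastic map $\gamma:S\to\Delta(S')$ with $\pi'(s'|\omega)=\sum_{s\in S}\gamma(s'|s)\pi(s|\omega)$, and $\rho\succsim_B\rho'$ yields $\eta:T\to\Delta(T')$ with $\rho'(t'|\omega)=\sum_{t\in T}\eta(t'|t)\rho(t|\omega)$. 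I would define the candidate garbling $\zeta:S\times T\to\Delta(S'\times T')$ by $\zeta((s',t')|(s,t))=\gamma(s'|s)\,\eta(t'|t)$.

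First I would check that $\zeta$ is a legitimate garbling, i.e.\ that $\zeta(\cdot|(s,t))$ is a probability distribution on $S'\times T'$ for each $(s,t)$. This is immediate because it is a product of two probability distributions: $\sum_{(s',t')\in S'\times T'}\gamma(s'|s)\eta(t'|t)=\big(\sum_{s'}\gamma(s'|s)\big)\big(\sum_{t'}\eta(t'|t)\big)=1$, and each entry is nonnegative. Next I would verify the garbling identity directly. Using the definition of the product experiment and the factorized form of $\zeta$,
\begin{align*}
\sum_{(s,t)\in S\times T}\zeta((s',t')|(s,t))\,(\pi\otimes\rho)((s,t)|\omega)
&=\sum_{(s,t)}\gamma(s'|s)\eta(t'|t)\,\pi(s|\omega)\rho(t|\omega)\\
&=\Big(\sum_{s\in S}\gamma(s'|s)\pi(s|\omega)\Big)\Big(\sum_{t\in T}\eta(t'|t)\rho(t|\omega)\Big)\\
&=\pi'(s'|\omega)\,\rho'(t'|\omega)\\
&=(\pi'\otimes\rho')((s',t')|\omega).
\end{align*}
Since this holds for every $(s',t')\in S'\times T'$ and every $\omega\in\Omega$, the map $\zeta$ witnesses that $\pi'\otimes\rho'$ is a garbling of $\pi\otimes\rho$, which is exactly $\pi\otimes\rho\succsim_B\pi'\otimes\rho'$.

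The only step requiring any care is the factorization of the double sum into a product of two single sums, and this is precisely where the conditional-independence structure built into the product experiment $\pi\otimes\rho$ is used: because both the experiment and the garbling factor across the two coordinates conditional on $\omega$, the summations over $s$ and over $t$ decouple. I do not expect a genuine obstacle here; the proof is essentially a verification that the Blackwell garbling relation is closed under tensor products, and the product kernel $\zeta=\gamma\otimes\eta$ does all the work.
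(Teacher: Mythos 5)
Your proof is correct and follows essentially the same route as the paper's: both construct the product kernel $\gamma_1(s'|s)\gamma_2(t'|t)$ and verify the garbling identity by factorizing the double sum, using conditional independence given $\omega$. Your additional check that the product kernel is indeed a Markov kernel is a detail the paper leaves implicit, but the argument is the same.
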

\begin{proof}[Proof of Lemma \ref{observation_blackwell}]
    Suppose $\pi\succsim_{B} \pi'$ and $\rho \succsim_{B} \rho'$. 
    Then, there exist Markov kernels $\gamma_{1}$ and $\gamma_{2}$ such that
    \[
    \pi'(s'|\omega)=\sum_{s\in \supp\pi}\gamma_{1}(s'|s)\pi(s|\omega) 
    \quad \text{and} \quad \rho'(t'|\omega)=\sum_{t\in \supp\rho}\gamma_{2}(t'|t)\rho(t|\omega)
    \]
    for all $s'\in \supp{\pi'} $ and $t'\in \supp{\rho'}$.
    Then, we have
    \begin{align*}
        (\pi'\otimes \rho')((s',t')|\omega)&=\pi'(s'|\omega)\rho'(t'|\omega)\\
        &=\sum_{s\in \supp\pi}\gamma_{1}(s'|s)\pi(s|\omega)\sum_{t\in \supp\rho}\gamma_{2}(t'|t)\rho(t|\omega)\\
        &=\sum_{(s,t)\in \supp{\pi\otimes \rho}}\gamma_{1}(s'|s)\gamma_{2}(t'|t)\pi(s|\omega)\rho(t|\omega)\\
        &=\sum_{(s,t)\in \supp{\pi\otimes \rho}}\gamma((s',t')|(s,t))(\pi\otimes\rho)((s,t)|\omega),
    \end{align*}
    where $\gamma((s',t')|(s,t))=\gamma_{1}(s'|s)\gamma_{2}(t'|t)$. Since $\gamma$ is a Markov kernel, $\pi'\otimes \rho'$ is a garbling of $\pi \otimes \rho$.
\end{proof}
%%%%%%%%%%%%%%%%%%%%%%%%%%%%%%%%%%%%%%%%%
%%%%%%%%%%%%%%%%%%%%%%%%%%%%%%%%%%%%%%%%

Then, the next lemma establishes that the expected payoff under the observable signal setting provides an upper bound for each agent, any decision problem, and any equilibrium.
\begin{lemma}\label{lemma_signal_is_more_informative}
   $\overline{V}_{i}^{\mathcal{D}}(\pi)\geq V_{i}^{\mathcal{D}}(\pi,\bm{\sigma})$ for all $i, \mathcal{D},\pi$, and $\bm{\sigma}$.
\end{lemma}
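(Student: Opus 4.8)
The plan is to show that letting agent $i$ directly observe the past signal realizations $(s_1,\dots,s_{i-1})$ together with her own signal $s_i$ can only help her relative to observing the history of actions $\bm{a}\in A^{i-1}$ and $s_i$. The key conceptual point is that under any strategy profile $\bm{\sigma}$, the mapping from the signal vector $(s_1,\dots,s_{i-1})$ to the observed action history $\bm{a}$ is itself a (possibly randomized) garbling: agent $i$'s information in the social-learning game is a stochastic function of the full signal profile, so observing $(s_1,\dots,s_i)$ is Blackwell more informative than observing $(\bm{a},s_i)$. Since $\overline{V}_i^\mathcal{D}(\pi)$ is by definition the value of the best response to the full signal vector $\pi^{\otimes i}$, and $V_i^\mathcal{D}(\pi,\bm{\sigma})$ is the value of some specific (equilibrium or not) response to the coarser information $(\bm{a},s_i)$, Blackwell's theorem for a single decision-maker delivers the inequality.

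Concretely, first I would fix $\mathcal{D}$, $\pi$, $\bm{\sigma}$, and $i$, and note that agent $i$'s realized payoff depends only on her own action $a_i$ and the state $\omega$. I would then rewrite $V_i^\mathcal{D}(\pi,\bm{\sigma})$ as the expected payoff of the induced decision rule $\tau_i: A^{i-1}\times S \to \Delta(A)$, $\tau_i = \sigma_i$, acting on the information structure $\omega \mapsto \Delta(A^{i-1}\times S)$ that sends each state to the joint distribution of $(\bm{a},s_i)$ generated by $\bm{\sigma}$ and $\pi$. The next step is to exhibit the explicit garbling from $S^i$ to $A^{i-1}\times S$: given a draw $(s_1,\dots,s_i)$, pass $s_i$ through unchanged and generate $\bm{a}$ by simulating the strategies of agents $1,\dots,i-1$ via the kernel $\prod_{k=1}^{i-1}\sigma_k(a_k\mid a_1,\dots,a_{k-1},s_k)$. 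Comparing this with the formula for $\alpha_{\leq i-1}^\omega$ in the model section confirms that the conditional law of $(\bm{a},s_i)$ given $\omega$ is exactly the image of $\pi^{\otimes i}(\cdot\mid\omega)$ under this kernel, so $\pi^{\otimes i}$ is Blackwell more informative than the information agent $i$ actually sees.

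Finally, I would invoke the single-agent Blackwell comparison: a Blackwell more informative structure yields a weakly higher value in every decision problem, and $\overline{V}_i^\mathcal{D}(\pi)$ is the \emph{maximized} value over all $\sigma_i: S^i\to\Delta(A)$, whereas $V_i^\mathcal{D}(\pi,\bm{\sigma})$ is merely the value of \emph{one} rule on the garbled structure. Hence $\overline{V}_i^\mathcal{D}(\pi)\geq V_i^\mathcal{D}(\pi,\bm{\sigma})$. An even more elementary route, avoiding an explicit appeal to Blackwell's theorem, is to construct a feasible full-signal rule that replicates $\bm{\sigma}$: define $\hat\sigma_i(s_1,\dots,s_i)$ by internally simulating $a_1,\dots,a_{i-1}$ from $(s_1,\dots,s_{i-1})$ and then playing $\sigma_i(\cdot\mid \bm{a},s_i)$; this rule is admissible in the maximization defining $\overline{V}_i^\mathcal{D}(\pi)$ and attains exactly $V_i^\mathcal{D}(\pi,\bm{\sigma})$, so the max is at least that value.

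The main obstacle is bookkeeping rather than mathematics: one must verify carefully that the simulated action history has precisely the same joint law with the state as $\alpha_{\leq i-1}^\omega$, and that $s_i$ is conditionally independent of $(s_1,\dots,s_{i-1})$ given $\omega$ so that feeding it through untouched is legitimate. Both facts are immediate from the conditional-independence structure of $\pi^{\otimes i}$ and the recursive definition of $\alpha_{\leq i-1}^\omega$, so the self-simulation argument is the cleanest and I would present that version, using Lemma \ref{observation_blackwell} only if a product-garbling decomposition is needed elsewhere.
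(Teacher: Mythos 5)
Your proposal is correct and follows essentially the same route as the paper: the paper likewise exhibits the kernel $f_{i-1}(\bm{a}\mid\bm{s})=\prod_{k=1}^{i-1}\sigma_k(a_k\mid a_1,\dots,a_{k-1},s_k)$ to show that the action history is a garbling of $\pi^{\otimes(i-1)}$, tensors with the agent's own signal via Lemma \ref{observation_blackwell}, and concludes by Blackwell's single-agent comparison. Your ``self-simulation'' variant is just an explicit unwinding of the same garbling composed with $\sigma_i$, so it is a presentational difference rather than a different argument.
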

\begin{proof}[Proof of Lemma~\ref{lemma_signal_is_more_informative}]
Take any $\mathcal{D},\pi$, and $\bm{\sigma}$. 
Note that $ \overline{V}_{1}^{\mathcal{D}}(\pi)= V_{1}^{\mathcal{D}}(\pi,\bm{\sigma})$.
Fix $i\geq 2$. 
For each $\bm{s}\in S^{i-1}$, define $f_{i-1}(\cdot|\bm{s})\in\Delta(A^{i-1})$ as 
\begin{align*}
    f_{i-1}(\bm{a}|\bm{s})=\prod_{k=1}^{i-1}\sigma_{k}(a_{k}|a_{1},\dots,a_{k-1},s_{k}).
\end{align*}
Hence, $f_{i-1}(\bm{a}|\bm{s})$ is the probability that agent $1$ to agent $i-1$ takes actions $\bm{a}=(a_{1},\dots,a_{i-1})$ when agent $1$ to agent $i-1$ receives private signal $\bm{s}=(s_{1},\dots,s_{i-1})$. Then,
\begin{align*}
    \alpha_{\leq i-1}^{\omega}(\bm{a}|\pi,\bm{\sigma})=&\sum_{\bm{s}\in S^{i-1}}\prod_{k=1}^{i-1}\sigma_{k}(a_{k}|a_{1},\dots,a_{k-1},s_{k})\pi(s_{k}|\omega)\\=&\sum_{\bm{s}\in S^{i-1}}f_{i-1}(\bm{a}|\bm{s})\pi^{\otimes i-1}(\bm{s}|\omega).
\end{align*}
Thus, $\alpha_{\leq i-1}(\cdot|\pi,\bm{\sigma})$ is a garbling of $\pi^{\otimes i-1}$, where $\alpha_{\leq i-1}$ is an ex-ante action distribution induced by $\alpha_{\leq i-1}^{\omega}$.\footnote{Precisely, $\alpha_{\leq i-1}(\bm{a}|\pi,\bm{\sigma})=(1-\mu_{0})\alpha_{\leq i-1}^{L}(\bm{a}|\pi,\bm{\sigma})+\mu_{0}\alpha_{\leq i-1}^{H}(\bm{a}|\pi,\bm{\sigma})$.} 
By Lemma~\ref{observation_blackwell}, we have $\pi^{\otimes i-1}\otimes \pi \succsim_{B} \alpha_{\leq i-1}(\cdot|\pi,\bm{\sigma})\otimes\pi$.
Hence, $\overline{V}_{i}^{\mathcal{D}}(\pi)\geq V_{i}^{\mathcal{D}}(\pi,\bm{\sigma})$ holds for all $i$, $\mathcal{D}$, $\pi$, and $\bm{\sigma}$.  
\end{proof}
%%%%%%%%%%%%%%%%%%%%%%%%%%%%%%%%%%%%%%%%%
%%%%%%%%%%%%%%%%%%%%%%%%%%%%%%%%%%%%%%%%%%
\begin{proof}[Proof of Theorem \ref{thm_characterization}]
    Since $\overline{V}_{i}^{\mathcal{D}}(\pi')\geq V_{i}^{\mathcal{D}}(\pi',\bm{\sigma}')$ holds for every strategy profile $\bm{\sigma}'$ by Lemma \ref{lemma_signal_is_more_informative}, $\pi\succsim_{S} \pi'$ holds if $V_{i}^{\mathcal{D}}(\pi,\bm{\sigma}^{*})\geq \overline{V}_{i}^{\mathcal{D}}(\pi')$.
    
    Conversely, suppose $\pi\succsim_{S}\pi'$.
    Take any $\mathcal{D}=(A,u)$, equilibrium $\bm{\sigma}^{*}$ under $\pi:\Omega\to \Delta(S)$, and equilibrium $\bm{\sigma}^{**}$ under $\pi':\Omega\to \Delta(S')$. 
    Then, $V_{i}^{\mathcal{D}}(\pi,\bm{\sigma}^{*})\geq V_{i}^{\mathcal{D}}(\pi',\bm{\sigma}^{**})$ by  $\pi\succsim_{S}\pi'$. 
    % Let $M=|S'|.$ 
    % Without loss of generality, assume $S'=\{s_{1},s_{2},\dots,s_M\}$ and 
    Consider the decision problem $\overline{\mathcal{D}}=(\overline{A},\overline{u})$, where $\overline{A}=\{(a,k)\mid a\in A,k\in S'\}$ and $\overline{u}((a,k),\omega)=u(a,\omega)$ for all $a\in A,\omega\in \Omega$. 
    Fix $s_{1}\in S'$ and define strategy profile $\bm{\sigma}=(\sigma_{i})_{i\in \mathbb{N}}$ under $(\overline{\mathcal{D}},\pi)$ as follows: 
    \begin{align*}
    \begin{cases}
        \sigma_{i}((a,s_{1})|(a_{1},k_{1}),(a_{2},k_{2}),\dots,(a_{i-1},k_{i-1}),s)=\sigma^{*}_{i}(a| a_{1},a_{2},\dots,a_{i-1},s)\\
        \sigma_{i}((a,k)|(a_{1},k_{1}),(a_{2},k_{2}),\dots,(a_{i-1},k_{i-1}),s)=0,
    \end{cases}
    \end{align*}
    for all $a\in A$, $s\in S$, $(a_{1},\dots,a_{i-1})\in A^{i-1}$, $k_{1},k_{2},\dots,k_{i-1}\in S'$, and $k\in S'\setminus \{s_{1}\}$. 
    Note that $\bm{\sigma} $ is an equilibrium under $(\overline{\mathcal{D}},\pi)$. 
    Moreover, it follows that $V_{i}^{\overline{\mathcal{D}}}(\pi,\bm{\sigma})=V_{i}^{\mathcal{D}}(\pi,\bm{\sigma}^{*})$.
    Under $(\overline{\mathcal{D}},\pi')$, if we consider the following equilibrium $\bm{\sigma}'$, the expected payoff of agent $i$ at equilibrium ($V_{i}^{\overline{\mathcal{D}}}(\pi',\bm{\sigma}')$) coincides with $\overline{V}_{i}^{\mathcal{D}}(\pi')$.
    Specifically, each agent $i$ chooses an action that maximizes his expected payoff on the equilibrium path, but always chooses an action of the form $(a,k)$ ($a\in A$) when the received signal is $k\in S'$.
    Since each agent can observe signals received by their predecessor on the equilibrium path, it follows that $V_{i}^{\overline{\mathcal{D}}}(\pi',\bm{\sigma}')=\overline{V}_{i}^{\overline{\mathcal{D}}}(\pi')=\overline{V}_{i}^{\mathcal{D}}(\pi')$. 
    Therefore,
    \[
      V_{i}^{\mathcal{D}}(\pi,\bm{\sigma}^{*})=  V_{i}^{\overline{\mathcal{D}}}(\pi,\bm{\sigma})\geq V_{i}^{\overline{\mathcal{D}}}(\pi',\bm{\sigma}')=\overline{V}_{i}^{\mathcal{D}}(\pi').
    \]
\end{proof}

\begin{proof}[Proof of Corollary \ref{observation_necessity_of_conclusive}]
Prove by contradiction. 
Suppose co$(\supp{\mu})\neq [0,1]$. 
Then, either $1\notin \supp{\mu}$ or $0 \notin \supp{\mu}$. 
By symmetry, it suffices to consider the case where $1\notin \supp{\mu}$.
Since $\supp{\mu}$ is a closed set, 
% (If $|\supp{\mu}|<\infty$, we do not need to care about it.)
there exists $r\in[\mu_{0},1)$ such that $\supp{\mu}\subseteq [0,r]$.
Consider the following decision problem $\mathcal{D}=(A,u)$: $A=\{a_{0},a_{1}\}$, $u(a_{0},L)=u(a_{0},H)=0$, $u(a_{1},H)=1-r$ and $u(a_{1},L)=-r$. 
Then, the strategy profile $\bm{\sigma}^{*}$ that all agents always choose $a_{0}$ is an equilibrium under $(\mathcal{D},\pi)$. 
It follows that $ V_{i}^{\mathcal{D}}(\pi,\bm{\sigma}^{*})=0$.
Since $\pi'$ is not no information, repeated observations of $\pi'$ allow agents to learn the state in the limit. 
Hence, $\overline{V}_{i}^{\mathcal{D}}(\pi')>0$ holds for sufficiently large $i$. 
By Theorem \ref{thm_characterization}, $\pi$ is not more socially valuable than $\pi'$. 
\end{proof}

%%%%%%%%%%%%%%%%%%%%%%%%%%%%%%%%%%%%%%%%%%%%%%%%%%%%%%%%%%%%%%%%%%%%%%%%%%%%%%
\subsection{Proof of Theorem \ref{thm_sufficient_condition}}
%%%%%%%%%%%%%%%%%%%%%%%%%%%%%%%%%%
%%%%%%%%%%%%%%%%%%%%%%%%%%%%%%%%%%
The following lemma shows that the expected payoff under the mixture of full and no information is the same as that under observable signal settings for any decision problem and equilibrium.
\begin{lemma}\label{lemma_expected_payoff_3support}
    Suppose $\supp \mu=\{0,\mu_{0},1\}$. 
    Fix the decision problem $\mathcal{D}=(A,u)$. 
    Take arbitrary equilibrium $\bm{\sigma}^{*}$ under $(\mathcal{D},\pi)$. 
    Then,
    \begin{align*}
         V_{i}^{\mathcal{D}}(\pi,\bm{\sigma}^{*})
         &=\overline{V}_{i}^{\mathcal{D}}(\pi)\\
         &=\mu_{0}[(1-p^{i})U_{1}]+(1-\mu_{0})[(1-p^{i})U_{0}]+p^{i}U_{\mu_{0}},
    \end{align*}
    where $U_{1}=\max_a u(a,H)$, $U_{0}=\max_a u(a,L)$, $U_{\mu_{0}}=\max_a [\mu_{0} u(a,H)+(1-\mu_{0})u(a,L)]$, and $p=\pi(\mu=\mu_{0}|H)=\pi(\mu=\mu_{0}|L)$. 
\end{lemma}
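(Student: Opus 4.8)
The plan is to prove the two equalities in turn: first that $\overline{V}_i^{\mathcal D}(\pi)$ equals the displayed expression, which is a direct computation, and then that every equilibrium payoff $V_i^{\mathcal D}(\pi,\bm\sigma^*)$ coincides with this benchmark. Since $\supp\mu=\{0,\mu_0,1\}$, a draw from $\pi$ is, in each state, either conclusive about that state (probability $1-p$) or uninformative (probability $p$), and a conclusive signal about $H$ arises only in state $H$ while a conclusive signal about $L$ arises only in state $L$. Writing $N_i$ for the event that agents $1,\dots,i$ all receive uninformative signals, one has $\mathbb{P}(N_i)=\mathbb{P}(N_i\mid\omega)=p^i$, while ``at least one conclusive signal about $H$'' has probability $\mu_0(1-p^i)$ and ``at least one about $L$'' has probability $(1-\mu_0)(1-p^i)$. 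When all $i$ signals are observed, the agent plays a $U_1$-optimal action after seeing any conclusive $H$, a $U_0$-optimal action after any conclusive $L$, and a $\mu_0$-optimal action on $N_i$; summing gives exactly $\mu_0(1-p^i)U_1+(1-\mu_0)(1-p^i)U_0+p^iU_{\mu_0}$, which is the asserted value of $\overline{V}_i^{\mathcal D}(\pi)$.

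For the equilibrium payoff, Lemma~\ref{lemma_signal_is_more_informative} already gives $V_i^{\mathcal D}(\pi,\bm\sigma^*)\le\overline{V}_i^{\mathcal D}(\pi)$, so it suffices to prove the reverse inequality; since the best achievable payoff is $U_\omega$ on the conclusive-$\omega$ event and $U_{\mu_0}$ on $N_i$, it is enough to show the equilibrium attains each of these conditional values, as summing them reproduces $\overline{V}_i^{\mathcal D}(\pi)$. I would therefore establish two claims about an arbitrary equilibrium: (A) conditional on state $\omega$ and on at least one of agents $1,\dots,i$ receiving a conclusive signal (necessarily about $\omega$), agent $i$ plays an action attaining $\max_a u(a,\omega)$; and (B) conditional on $N_i$, agent $i$ plays a $\mu_0$-optimal action, so that — using that on $N_i$ the entire action history is state-independent and the posterior on $\omega$ stays at $\mu_0$ — her conditional expected payoff is exactly $U_{\mu_0}$.

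The engine for both claims is the interval structure of best responses from Lemma~\ref{lemma_closed_interval} and Lemma~\ref{lemma_best_action}. An agent receiving a conclusive $H$ (resp.\ $L$) signal has a posterior degenerate at $1$ (resp.\ $0$), because on path in state $H$ the public belief is never $0$, and hence plays a $U_1$- (resp.\ $U_0$-) optimal action. The only obstacle is pooling: an uninformative agent and a conclusive-$H$ agent may select a common action $a^\dagger$, so that the action fails to reveal the conclusive signal. I would resolve this through the dichotomy supplied by Lemma~\ref{lemma_best_action}. If no $\mu_0$-optimal action is also $U_1$-optimal, then conclusive-$H$ actions are never played by uninformative agents, so they reveal $\omega=H$ and the public belief jumps to $1$; all later agents then act as if they had observed the conclusive signal. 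If instead some $a^\dagger\in B^{-1}(\mu_0)\cap B^{-1}(1)$, then by Lemma~\ref{lemma_best_action} $a^\dagger\in B^{-1}(z)$ for every $z\in[\mu_0,1]$, so the upper envelope of the payoff lines $z\mapsto zu(a,H)+(1-z)u(a,L)$ is linear on $[\mu_0,1]$, and every best response at an interior belief of $[\mu_0,1]$ is payoff-equivalent to $a^\dagger$ — simultaneously $U_1$-optimal and $\mu_0$-optimal. Tracking the public belief along such a history (it moves monotonically away from $\mu_0$, staying inside the interval on which the shared action remains optimal) then shows that agent $i$'s action always lies in the required payoff class, yielding both (A) and (B); the symmetric argument handles conclusive $L$ and the downward-moving belief.

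The main difficulty, and the step I expect to absorb most of the work, is precisely this pooling analysis: ruling out that the public belief drifts — within an all-uninformative history or within a post-conclusive history — into a region where an equilibrium best response is worse than $U_{\mu_0}$ on $N_i$, or worse than $U_\omega$ after a conclusive signal. The key structural fact that makes it go through is that pooling on an action optimal at two beliefs forces the decision problem to be effectively trivial (a single optimal payoff line) on the intervening belief interval, so no payoff can be lost; this is where Lemma~\ref{lemma_best_action} does the heavy lifting, exactly as in the special case treated in Lemma~\ref{lemma_expected_payoff_special_case}.
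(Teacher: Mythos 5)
Your computation of $\overline{V}_i^{\mathcal{D}}(\pi)$ and your use of Lemma~\ref{lemma_signal_is_more_informative} to reduce the problem to the reverse inequality are fine, but the route you then take diverges sharply from the paper's, and the part you defer to ``the pooling analysis'' is where the gap lies. The paper never characterizes equilibrium behavior event by event. Instead it lower-bounds $V_i^{\mathcal{D}}(\pi,\bm{\sigma}^*)$ by the payoff of a single deviation for agent $i$: play a myopically optimal action upon a conclusive signal, and otherwise copy agent $i-1$'s action. Because an uninformative private belief equals the prior, copying agent $i-1$ yields conditional payoff exactly $V_{i-1}^{\mathcal{D}}(\pi,\bm{\sigma}^*)$, giving the recursion $V_i\geq \mu_0(1-p)U_1+(1-\mu_0)(1-p)U_0+pV_{i-1}$; combined with $V_i\leq\overline{V}_i$ and induction this forces $V_i=\overline{V}_i$ without ever touching public-belief dynamics. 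Your claims (A) and (B) are true --- they are in fact logical consequences of the lemma, since each conditional payoff is capped by $U_1$, $U_0$, $U_{\mu_0}$ respectively and the caps sum to $\overline{V}_i$ --- but proving them directly is substantially harder than proving the lemma the paper's way.

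Two concrete problems with your direct route. First, your justification of (B) asserts that on $N_i$ ``the posterior on $\omega$ stays at $\mu_0$.'' That is false: an agent's equilibrium posterior at a history consistent with $N_{i-1}$ is computed from unconditional likelihoods, and the same history can be reached with positive probability when some predecessor received a conclusive signal and pooled; the public belief then drifts away from $\mu_0$ even though every realized signal was uninformative. (What is true is that $\mathbb{P}(H\mid N_i)=\mu_0$; that gives you the payoff formula only \emph{after} you have shown the action is $\mu_0$-optimal, which is exactly the point in dispute.) Second, the sentence ``tracking the public belief along such a history (it moves monotonically away from $\mu_0$...)'' is carrying the entire proof and is not established. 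The belief path on $N_i$ need not be monotone: when both $B^{-1}(\mu_0)\cap B^{-1}(1)$ and $B^{-1}(\mu_0)\cap B^{-1}(0)$ are nonempty, a mixing uninformative agent can pool with conclusive-$H$ types on one action and conclusive-$L$ types on another, so the belief can move in either direction at each step. For claim (A) you must also handle a conclusive-$H$ signal arriving after the public belief has drifted \emph{below} $\mu_0$, and show it still separates and resets the belief to $1$; this requires first disposing of the degenerate case $B^{-1}(0)\cap B^{-1}(1)\neq\emptyset$ and then repeatedly invoking Lemma~\ref{lemma_best_action} at the \emph{current} belief rather than at $\mu_0$. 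I believe this case analysis can be closed --- the interval structure is rigid enough --- but as written it is asserted, not proved, and it is far more work than the mimicking argument the lemma actually needs.
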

\begin{proof}[Proof of Lemma \ref{lemma_expected_payoff_3support}]
First, it is easily calculated that
 \begin{align*}
     \overline{V}_{i}^{\mathcal{D}}(\pi)=\mu_{0}[(1-p^{i})U_{1}]+(1-\mu_{0})[(1-p^{i})U_{0}]+p^{i}U_{\mu_{0}}.
\end{align*}
We now show that $V_{i}^{\mathcal{D}}(\pi,\bm{\sigma}^{*})=\overline{V}_{i}^{\mathcal{D}}(\pi)$. 
First, this obviously holds for agent $1$:
\begin{align*}
      V_{1}^{\mathcal{D}}(\pi,\bm{\sigma}^{*})= \overline{V}_{1}^{\mathcal{D}}(\pi) = \mu_{0} (1-p)U_{1}+(1-\mu_{0})(1-p)U_{0}+pU_{\mu_{0}}.
\end{align*}
Then, for each $i\geq 2$, we consider a strategy in which agent $i$ chooses the optimal actions upon receiving conclusive signals about each state and mimics agent $i-1$'s action otherwise.
By the optimality of the equilibrium strategy, for each $i\geq 2$, we have
\begin{align*}
    V_{i}^{\mathcal{D}}(\pi,\bm{\sigma}^{*})&\geq \mu_{0} (1-p)U_{1}+(1-\mu_{0})(1-p)U_{0} \\
     & \quad +p\left[\mu_{0}\sum_{a}\alpha_{i-1}^{H}(a|\pi,\bm{\sigma}^{*})u(a,H) +(1-\mu_{0})\sum_{a}\alpha_{i-1}^{L}(a|\pi,\bm{\sigma}^{*})u(a,L)\right]\\
     &= \mu_{0} (1-p)U_{1}+(1-\mu_{0})(1-p)U_{0}+pV_{i-1}^{\mathcal{D}}(\pi,\bm{\sigma}^{*}).
\end{align*}
Conversely, from Lemma~\ref{lemma_signal_is_more_informative}, for each $i$
\begin{align*}
      V_{i}^{\mathcal{D}}(\pi,\bm{\sigma}^{*})
      \leq \overline{V}_{i}^{\mathcal{D}}(\pi) = \mu_{0} (1-p^{i})U_{1}+(1-\mu_{0})(1-p^{i})U_{0}+p^{i}U_{\mu_{0}}.
\end{align*}
Fix some $j\geq 2$ and suppose $V_{j-1}^{\mathcal{D}}(\pi,\bm{\sigma}^{*})=\overline{V}_{j-1}^{\mathcal{D}}(\pi)$. 
Then,
\begin{align*}
    V_{j}^{\mathcal{D}}(\pi,\bm{\sigma}^{*})&\geq  \mu_{0} (1-p)U_{1}+(1-\mu_{0})(1-p)U_{0}+pV_{j-1}^{\mathcal{D}}(\pi,\bm{\sigma}^{*})\\
    &=\mu_{0} (1-p)U_{1}+(1-\mu_{0})(1-p)U_{0} \\
    & \quad \quad \quad \quad +p\left[\mu_{0} (1-p^{j-1})U_{1}+(1-\mu_{0})(1-p^{j-1})U_{0}+p^{j-1}U_{\mu_{0}}\right]\\
    &=\mu_{0} (1-p^j)U_{1}+(1-\mu_{0})(1-p^j)U_{0}+p^jU_{\mu_{0}}\\
    &=\overline{V}_j^{\mathcal{D}}(\pi).
\end{align*}
Hence, we have $V_{j}^{\mathcal{D}}(\pi,\bm{\sigma}^{*})=\overline{V}_j^{\mathcal{D}}(\pi)$. 
By mathematical induction, it follows that $V_{i}^{\mathcal{D}}(\pi,\bm{\sigma}^{*})=\overline{V}_{i}^{\mathcal{D}}(\pi)$ for all $i$.
\end{proof}
%%%%%%%%%%%%%%%%%%%%%%%%%%%%%%%%%%%
%%%%%%%%%%%%%%%%%%%%%%%%%%%%%%%%%%%
Utilizing Lemma \ref{lemma_expected_payoff_3support} and Blackwell's theorem (Lemma \ref{observation_blackwell}), we can show that the expected payoff under $\pi$ is weakly higher than the upper bound under $\pi'$ for any decision problem if $\pi$ is a mixture of full and no information.
\begin{lemma}\label{observation_strict_order_3support}
    Suppose $\pi \succsim_{B} \pi'$ and $\supp \mu=\{0,\mu_{0},1\}$. 
    Then, $\pi \succsim_{S} \pi'$.
\end{lemma}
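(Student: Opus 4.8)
The plan is to invoke the characterization in Theorem~\ref{thm_characterization} and thereby reduce the claim to a comparison of the two observable-signal benchmarks $\overline{V}_i^{\mathcal{D}}(\pi)$ and $\overline{V}_i^{\mathcal{D}}(\pi')$. By Theorem~\ref{thm_characterization}, it suffices to show that for every decision problem $\mathcal{D}$, every agent $i$, and every equilibrium $\bm{\sigma}^*$ under $(\mathcal{D},\pi)$,
\[
V_i^{\mathcal{D}}(\pi,\bm{\sigma}^*)\geq \overline{V}_i^{\mathcal{D}}(\pi').
\]

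First I would use the hypothesis $\supp\mu=\{0,\mu_0,1\}$ together with Lemma~\ref{lemma_expected_payoff_3support}, which states that under such a mixture of full and no information every equilibrium payoff coincides with the observable-signal benchmark, i.e.\ $V_i^{\mathcal{D}}(\pi,\bm{\sigma}^*)=\overline{V}_i^{\mathcal{D}}(\pi)$ for all $i$ and all equilibria $\bm{\sigma}^*$. This eliminates equilibrium selection on the $\pi$ side entirely, replacing $V_i^{\mathcal{D}}(\pi,\bm{\sigma}^*)$ by the single-agent value of $i$ conditionally independent draws from $\pi$.

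Next I would compare $\overline{V}_i^{\mathcal{D}}(\pi)$ and $\overline{V}_i^{\mathcal{D}}(\pi')$ directly. Since $\overline{V}_i^{\mathcal{D}}$ is, by definition, the value of the decision problem $\mathcal{D}$ under the product experiment ($\pi^{\otimes i}$ and $\pi'^{\otimes i}$, respectively), the inequality $\overline{V}_i^{\mathcal{D}}(\pi)\geq \overline{V}_i^{\mathcal{D}}(\pi')$ follows from the classical Blackwell theorem once we know $\pi^{\otimes i}\succsim_B \pi'^{\otimes i}$. That last fact I would obtain by applying Lemma~\ref{observation_blackwell} inductively: starting from the hypothesis $\pi\succsim_B\pi'$, each step gives $\pi^{\otimes k}\otimes\pi \succsim_B \pi'^{\otimes k}\otimes\pi'$, so $\pi^{\otimes i}\succsim_B\pi'^{\otimes i}$ for every $i$.

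Chaining the two steps yields $V_i^{\mathcal{D}}(\pi,\bm{\sigma}^*)=\overline{V}_i^{\mathcal{D}}(\pi)\geq \overline{V}_i^{\mathcal{D}}(\pi')$ for every $\mathcal{D}$, every $i$, and every equilibrium $\bm{\sigma}^*$, which is exactly the condition in Theorem~\ref{thm_characterization}; hence $\pi\succsim_S\pi'$. There is no substantial obstacle here beyond assembling the earlier results, since all the difficulty already resides in Lemma~\ref{lemma_expected_payoff_3support} (the nontrivial collapse of every equilibrium to the observable-signal value when the support is three points) and in Theorem~\ref{thm_characterization}. The only point requiring minor care is confirming that $\overline{V}_i^{\mathcal{D}}$ genuinely is the standard single-agent Blackwell value of the product experiment $\pi^{\otimes i}$, so that monotonicity of the Blackwell order under products and under garbling can be applied verbatim.
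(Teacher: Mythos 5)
Your proposal is correct and follows essentially the same route as the paper: the paper likewise combines Lemma~\ref{lemma_expected_payoff_3support} (to get $V_i^{\mathcal{D}}(\pi,\bm{\sigma}^*)=\overline{V}_i^{\mathcal{D}}(\pi)$), Lemma~\ref{observation_blackwell} plus Blackwell's theorem (to get $\overline{V}_i^{\mathcal{D}}(\pi)\geq \overline{V}_i^{\mathcal{D}}(\pi')$), and the observable-signal upper bound of Lemma~\ref{lemma_signal_is_more_informative}. Your appeal to the sufficiency direction of Theorem~\ref{thm_characterization} is just a repackaging of that last lemma, so the argument is the same in substance and introduces no circularity.
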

\begin{proof}[Proof of Lemma \ref{observation_strict_order_3support}]
Take any $\mathcal{D}=(A,u)$. Take arbitrary equilibrium $\bm{\sigma}^{*}$ under $(\mathcal{D},\pi)$. 
From Lemma~\ref{lemma_expected_payoff_3support}, we have $V_{i}^{\mathcal{D}}(\pi,\bm{\sigma}^{*})=\overline{V}_{i}^{\mathcal{D}}(\pi)$.
Hence, the expected payoff of agent $i$ in any equilibrium is the same as the expected payoff of agent $i$ when agent $i$ can observe the signal realizations of past agents rather than the actions taken by them.
 
Next, take any equilibrium $\bm{\sigma}^{**}$ under $(\mathcal{D},\pi')$. 
Note that $\overline{V}_{i}^{\mathcal{D}}(\pi')\geq V_{i}^{\mathcal{D}}(\pi',\bm{\sigma}^{**})$ holds by Lemma~\ref{lemma_signal_is_more_informative}.
Since $\pi^{\otimes i} \succsim_{B} \pi'^{\ \otimes i}$ by Lemma~\ref{observation_blackwell}, we have $\overline{V}_{i}^{\mathcal{D}}(\pi)\geq \overline{V}_{i}^{\mathcal{D}}(\pi')$.
Hence, it follows that
\begin{align*}
    V_{i}^{\mathcal{D}}(\pi,\bm{\sigma}^{*})= \overline{V}_{i}^{\mathcal{D}}(\pi)\geq  \overline{V}_{i}^{\mathcal{D}}(\pi')\geq V_{i}^{\mathcal{D}}(\pi',\bm{\sigma}^{**}).
\end{align*}
Therefore, we obtain $\pi \succsim_{S} \pi'$.
\end{proof}
%%%%%%%%%%%%%%%%%%%%%%%%%%%%%%%%%%%%%
%%%%%%%%%%%%%%%%%%%%%%%%%%%%%%%%%%%%
We then construct a strategy profile under $\pi$ that achieves the same equilibrium expected payoff under $\pi'$ when $\pi'$ is a mixture of full and no information.
Additionally, we show that this strategy profile provides a lower bound for the payoffs of all agents under $\pi$.
\begin{lemma}\label{observation_imitation}
    Suppose that $\pi$ and $\pi'$ satisfy $\supp{\mu'}=\{0,\mu_{0},1\}$ and $\min\{\pi(\mu=0|L),\pi(\mu=1|H)\}\geq 1-p$, where $p=\pi'(\mu'=\mu_{0}|L)=\pi'(\mu'=\mu_{0}|H)$.
    Then, $\pi \succsim_{S} \pi'$.
\end{lemma}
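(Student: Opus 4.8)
The plan is to prove $\pi \succsim_S \pi'$ via the characterization in Theorem~\ref{thm_characterization}, which requires showing $V_i^{\mathcal{D}}(\pi,\bm{\sigma}^*) \geq \overline{V}_i^{\mathcal{D}}(\pi')$ for every decision problem $\mathcal{D}$, every agent $i$, and every equilibrium $\bm{\sigma}^*$ under $(\mathcal{D},\pi)$. Since $\pi'$ is a mixture of full and no information, Lemma~\ref{lemma_expected_payoff_3support} gives us an explicit closed form for $\overline{V}_i^{\mathcal{D}}(\pi') = \mu_0(1-p^i)U_1 + (1-\mu_0)(1-p^i)U_0 + p^i U_{\mu_0}$. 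So the real work is to establish a matching \emph{lower bound} on the equilibrium payoff under $\pi$. The key structural assumption $\min\{\pi(\mu=0|L),\pi(\mu=1|H)\} \geq 1-p$ says that $\pi$ assigns at least probability $1-p$ to conclusive signals about each state, which is exactly the budget needed to mimic the behavior of $\pi'$.

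First I would fix an arbitrary equilibrium $\bm{\sigma}^*$ under $(\mathcal{D},\pi)$ and construct, for each agent $i$, an \emph{alternative} (deviation) strategy that behaves as if the agent had observed a signal drawn from $\pi'$: upon receiving a conclusive signal about $H$ (resp.\ $L$) the agent plays an $H$-optimal (resp.\ $L$-optimal) action, and otherwise the agent mimics agent $i-1$'s action. By the equilibrium optimality condition, agent $i$'s equilibrium payoff is at least the payoff from this deviation, giving a recursive inequality of the form
\begin{align*}
V_i^{\mathcal{D}}(\pi,\bm{\sigma}^*) \geq \mu_0(1-p)U_1 + (1-\mu_0)(1-p)U_0 + p\,W_{i-1},
\end{align*}
where $W_{i-1}$ is the payoff agent $i$ obtains from mimicking agent $i-1$ conditional on receiving a non-conclusive signal. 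The subtle point is that on the non-conclusive event the agent's private belief coincides with the prior $\mu_0$ (since the conclusive signals have been stripped out and the assumption guarantees enough conclusive mass), so mimicking agent $i-1$ yields exactly agent $i-1$'s own conditional payoff structure. Here I must be careful that the $1-p$ of conclusive mass is available under $\pi$ without disturbing the posterior on the complementary event; this is where the assumption $\min\{\pi(\mu=0|L),\pi(\mu=1|H)\}\geq 1-p$ is used, and I would peel off exactly $1-p$ worth of conclusive signals about each state and treat the residual as a prior-preserving garbling.

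Next I would close the recursion. I would follow the argument sketched in the introduction and in the proof of Lemma~\ref{lemma_expected_payoff_3support}: to evaluate the mimicking term I further modify agent $i-1$'s strategy to the same imitation strategy, which (by $i-1$'s equilibrium condition) can only lower $i-1$'s payoff, hence lowers the bound; iterating this backward-modification down to agent $1$ produces a fully explicit strategy profile whose payoff for agent $i$ is precisely the closed form $\mu_0(1-p^i)U_1 + (1-\mu_0)(1-p^i)U_0 + p^i U_{\mu_0}$. Since each replacement step only decreased the bound, this explicit value is a lower bound for $V_i^{\mathcal{D}}(\pi,\bm{\sigma}^*)$. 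But by Lemma~\ref{lemma_expected_payoff_3support} this value equals $\overline{V}_i^{\mathcal{D}}(\pi')$ exactly, so we obtain $V_i^{\mathcal{D}}(\pi,\bm{\sigma}^*) \geq \overline{V}_i^{\mathcal{D}}(\pi')$, and Theorem~\ref{thm_characterization} delivers $\pi \succsim_S \pi'$.

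The main obstacle will be the mimicking step: making rigorous the claim that conditioning on a non-conclusive signal leaves the agent's private belief at the prior, so that imitating the predecessor is both well-defined and yields the predecessor's payoff intact. A naive imitation can fail when $\pi$ has private-belief support spread across $(0,1)$ rather than concentrated at $\{0,\mu_0,1\}$, which is why this lemma needs a genuine lower-bound construction rather than the clean equality of Lemma~\ref{lemma_expected_payoff_3support}; the footnote in the introduction flags precisely this non-triviality. I would handle it by carving $\pi$ into a $(1-p)$-weighted conclusive component plus a complementary component that, conditional on being non-conclusive, induces posterior exactly $\mu_0$, and verifying that the backward induction over the modified profiles preserves the inequality at each step.
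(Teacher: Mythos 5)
Your proposal is correct and follows essentially the same route as the paper: construct the imitation strategy that plays optimally on a $(1-p)$-calibrated slice of conclusive signals (the paper implements your ``peeling off'' via randomization probabilities $q_L=\pi'(\mu'=0|L)/\pi(\mu=0|L)$ and $q_H=\pi'(\mu'=1|H)/\pi(\mu=1|H)$, so that the mimicking event has probability exactly $p$ in each state), derive the recursive lower bound from equilibrium optimality, close it by backward replacement of predecessors' strategies, and identify the resulting value with $\overline{V}_i^{\mathcal{D}}(\pi')$ via Lemma~\ref{lemma_expected_payoff_3support}. The one point to tighten is your phrasing that the agent's private belief ``coincides with the prior'' on the non-conclusive event --- what actually matters is that the mimicking event is equiprobable across states, which is exactly what the $q_L,q_H$ calibration guarantees --- but you correctly identified and resolved this subtlety.
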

\begin{proof}[Proof of Lemma \ref{observation_imitation}]
    Let $q_{L}=\frac{\pi'(\mu'=0|L)}{\pi(\mu=0|L)} $ and $q_{H}=\frac{\pi'(\mu'=1|H)}{\pi(\mu=1|H)}$. Take any $\mathcal{D}$ and
    define $\bm{\sigma}^{**}=(\sigma_{i}^{**})_{i\in \mathbb{N}}$ as the following strategy under $(\mathcal{D},\pi)$. 
    Agent 1 chooses $a_{0}\in B^{-1}(0)$ with probability $q_{L}$ and chooses $a_{2}\in B^{-1}(\mu_{0})$ with probability $1-q_{L}$ if he receives conclusive signal about $\omega=L$. 
    Agent 1 chooses $a_{1}\in B^{-1}(1)$ with probability $q_{H}$ and chooses $a_{2}$ with probability $1-q_{H}$ if he receives conclusive signal about $\omega=H$. 
    Otherwise, agent 1 chooses $a_{2}$. 
    For $i\geq 2$, agent $i$ chooses $a_{0}$ with probability $q_{L}$ and chooses the same action as agent $i-1$ with probability $1-q_{L}$ if he receives a conclusive signal about $\omega=L$. 
    Agent $i$ chooses $a_{1}$ with probability $q_{H}$ and chooses the same action as agent $i-1$ with probability $1-q_{H}$ if he receives a conclusive signal about $\omega=H$. 
    Otherwise, agent $i$ chooses the same action as agent $i-1$.
    First, note that 
    \begin{align*}
        V_{i}^{\mathcal{D}}(\pi,\bm{\sigma}^{**})
        &=\mu_{0}[(1-p^{i})U_{1}]+(1-\mu_{0})[(1-p^{i})U_{0}]+p^{i}U_{\mu_{0}}\\
        &=\overline{V}_{i}^{\mathcal{D}}(\pi'),
    \end{align*}
    where the last equality comes from Lemma~\ref{lemma_expected_payoff_3support}. 
    
    Fix any equilibrium $\bm{\sigma}^{*}$ under $(\mathcal{D},\pi)$ and 
    define $\bm{\sigma}(k)$ as
    \begin{align*}
        \bm{\sigma}(k)=(\sigma^{*}_1,\sigma^{*}_2,\dots,\sigma^{*}_k,\sigma^{**}_{k+1},\sigma^{**}_{k+2},\dots).
    \end{align*}
    We show that if $i\geq k+1$,
    \begin{align*}
         V_{i}^{\mathcal{D}}(\pi,\bm{\sigma}(k))=\mu_{0}(1-p)U_{1}+(1-\mu_{0})(1-p)U_{0}+p V_{i-1}^{\mathcal{D}}(\pi,\bm{\sigma}(k)).
    \end{align*}
Note that
    \begin{align*}
    V_{i-1}^{\mathcal{D}}(\pi,\bm{\sigma}(k))=&\mu_{0}\sum_{a\in A}\alpha_{i-1}^{H}(a|\pi,\bm{\sigma}(k))u(a,H)\\
   & \quad \quad \quad \quad +(1-\mu_{0})\sum_{a\in A}\alpha_{i-1}^{L}(a|\pi,\bm{\sigma}(k))u(a,L).
    \end{align*}
    $i\geq k+1$ implies that $\bm{\sigma}(k)_{i}=\sigma^{**}_{i}$. 
    Hence,
     \begin{align*}
        V_{i}^{\mathcal{D}}(\pi,\bm{\sigma}(k))
        &= \mu_{0}\left[\pi(\mu=1|H)q_{H}U_{1}+(1-\pi(\mu=1|H)q_{H})\sum_{a\in A}\alpha_{i-1}^{H}(a|\pi,\bm{\sigma}(k))u(a,H)\right]\\
        & + (1-\mu_{0})\left[\pi(\mu=0|L)q_{L}U_{0}+(1-\pi(\mu=0|L)q_{L})\sum_{a\in A}\alpha_{i-1}^{L}(a|\pi,\bm{\sigma}(k))u(a,L)\right]\\
        & = \mu_{0}\left[(1-p)U_{1}+p\sum_{a\in A}\alpha_{i-1}^{H}(a|\pi,\bm{\sigma}(k))u(a,H)\right]\\
        &\quad \quad \quad \quad \quad \quad +(1-\mu_{0})\left[(1-p)U_{0}+p\sum_{a\in A}\alpha_{i-1}^{L}(a|\pi,\bm{\sigma}(k))u(a,L)\right]\\
        & = \mu_{0} (1-p)U_{1}+(1-\mu_{0})(1-p)U_{0}+pV_{i-1}^{\mathcal{D}}(\pi,\bm{\sigma}(k)).
    \end{align*}
    By the definition of $\bm{\sigma}(k)$,
    \begin{align*}
        \begin{cases}
            V_{i}^{\mathcal{D}}(\pi,\bm{\sigma}^{*})=V_{i}^{\mathcal{D}}(\pi,\bm{\sigma}(k)) & \text{ if } i<k+1,\\
            V_{i}^{\mathcal{D}}(\pi,\bm{\sigma}^{*})\geq V_{i}^{\mathcal{D}}(\pi,\bm{\sigma}(k)) & \text{ if } i=k+1.
        \end{cases}
    \end{align*}
    The second inequality is held by the optimality of $\sigma^{*}_{i}$.
    We now show that 
    \begin{align*}
        V_{i}^{\mathcal{D}}(\pi,\bm{\sigma}(k+1))\geq V_{i}^{\mathcal{D}}(\pi,\bm{\sigma}(k))
    \end{align*}
    for all $i$ and $k$. 
    First, if $k\geq i-1$, we have $V_{i}^{\mathcal{D}}(\pi,\bm{\sigma}(k+1))= V_{i}^{\mathcal{D}}(\pi,\bm{\sigma}^{*}) \geq V_{i}^{\mathcal{D}}(\pi,\bm{\sigma}(k))$. 
    Next, we have $ V_{i}^{\mathcal{D}}(\pi,\bm{\sigma}(i-1))\geq V_{i}^{\mathcal{D}}(\pi,\bm{\sigma}(i-2))$ for $i\geq 2$ since
    \begin{align*}
         V_{i}^{\mathcal{D}}(\pi,\bm{\sigma}(i-2))&=\mu_{0}(1-p)U_{1}+(1-\mu_{0})(1-p)U_{0}+pV_{i-1}^{\mathcal{D}}(\pi,\bm{\sigma}(i-2))\\
         &\leq \mu_{0}(1-p)U_{1}+(1-\mu_{0})(1-p)U_{0}+pV_{i-1}^{\mathcal{D}}(\pi,\bm{\sigma}(i-1))\\
         &=V_{i}^{\mathcal{D}}(\pi,\bm{\sigma}(i-1)).
    \end{align*}
    Then, we have $ V_{i}(\pi,\bm{\sigma}(i-2))\geq V_{i}(\pi,\bm{\sigma}(i-3))$ for $i\geq 3$ since
     \begin{align*}
         V_{i}^{\mathcal{D}}(\pi,\bm{\sigma}(i-3))&=\mu_{0}(1-p)U_{1}+(1-\mu_{0})(1-p)U_{0}+pV_{i-1}^{\mathcal{D}}(\pi,\bm{\sigma}(i-3))\\
         &\leq \mu_{0}(1-p)U_{1}+(1-\mu_{0})(1-p)U_{0}+pV_{i-1}^{\mathcal{D}}(\pi,\bm{\sigma}(i-2))\\
         &=V_{i}^{\mathcal{D}}(\pi,\bm{\sigma}(i-2)).
    \end{align*}
    Analogously, it follows that $V_{i}^{\mathcal{D}}(\pi,\bm{\sigma}(i-m))\geq V_{i}^{\mathcal{D}}(\pi,\bm{\sigma}(i-m-1))$ for all $i,m$ that satisfies $i-m-1\geq 0$. 
    Hence, $V_{i}^{\mathcal{D}}(\pi,\bm{\sigma}(k+1))\geq V_{i}^{\mathcal{D}}(\pi,\bm{\sigma}(k))$ for all $i,k$. 
    Therefore, we have
    \[
    V_{i}^{\mathcal{D}}(\pi,\bm{\sigma}^{*})
    = V_{i}^{\mathcal{D}}(\pi,\bm{\sigma}(i)) \geq V_{i}^{\mathcal{D}}(\pi,\bm{\sigma}(0)) =V_{i}^{\mathcal{D}}(\pi,\bm{\sigma}^{**})=\overline{V}_{i}^{\mathcal{D}}(\pi').
    \]
\end{proof}
%%%%%%%%%%%%%%%%%%%%%%%%%%%%%%%
\begin{proof}[Proof of Theorem \ref{thm_sufficient_condition}]
    Suppose that $\pi\succsim_{B} \pi'' \succsim_{B} \pi'$ and $\supp{\mu''}=\{0,\mu_{0},1\}$.
    From Lemma \ref{observation_strict_order_3support}, we conclude that $\pi''\succsim_{S} \pi'$ holds. Since $\pi\succsim_{B} \pi''$ and $\supp{\mu''}=\{0,\mu_{0},1\}$, it follows that $\min\{\pi(\mu=0|L),\pi(\mu=1|H)\}\geq \pi''(\mu''=0|L)=\pi''(\mu''=1|H)$. 
    Thus,
    from Lemma \ref{observation_imitation}, we also conclude that $\pi\succsim_{S} \pi''$ holds. 
    Therefore, we have $\pi\succsim_{S} \pi'$.
\end{proof}
%%%%%%%%%%%%%%%%%%%%%%%%%%%%%%
\subsection{Proof of Proposition~\ref{prop_equivalent_sufficient_condition}}
\begin{proof}[Proof of Proposition~\ref{prop_equivalent_sufficient_condition}]
     Let $S'=\supp{\pi'}$.
     Suppose $\supp{\mu''}=\{0,\mu_{0},1\}$. 
     We now show that $\pi \succsim_{B} \pi''$ is equivalent to
     \[
     \pi(\mu=0|L)\geq \pi''(\mu''=0|L) \quad \text{and} \quad \pi(\mu=1|H)\geq \pi''(\mu''=1|H).
     \]
     Note that 
     \begin{align*}
         &\pi''(\mu''=0|\omega)=\frac{\pi''(\mu''=0|L)}{\pi(\mu=0|L)}\pi(\mu=0|\omega)\\
         &\pi''(\mu''=1|\omega)=\frac{\pi''(\mu''=1|H)}{\pi(\mu=1|H)}\pi(\mu=1|\omega)\\
         &\pi''(\mu''=\mu_{0}|\omega)=\left[1-\frac{\pi''(\mu''=0|L)}{\pi(\mu=0|L)}\right]
         \pi(\mu=0|\omega)+
         \left[1-\frac{\pi''(\mu''=1|H)}{\pi(\mu=1|H)}\right]\pi(\mu=1|\omega)\\
         &\hspace{7cm} 
         +\sum_{x\in \supp{\mu}\setminus\{0,1\}}\pi(\mu=x|\omega).
     \end{align*}
     Thus, $\pi''$ is a garbling of $\pi$ if $\pi(\mu=0|L)\geq \pi''(\mu''=0|L)$ and $ \pi(\mu=1|H)\geq \pi''(\mu''=1|H)$. 
     Conversely, suppose $\pi''$ is a garbling of $\pi$. 
     Then, we have 
     \begin{align*}
         &\pi''(\mu''=0|\omega)=\sum_{x\in \supp{\mu}}\gamma(x)\pi(\mu=x|\omega)
     \end{align*}
     for some $\gamma:\supp{\mu}\to[0,1]$. Since $\pi''(\mu''=0|H)=0$ and $\pi(\mu=x|H)>0$ for all $x\neq 0$, it follows that $\gamma(x)=0$ for all $x\neq 0$. 
     Hence, $\pi(\mu=0|L)\geq \pi''(\mu''=0|L)$. 
     Similarly, we have $\pi(\mu=1|H)\geq \pi''(\mu''=1|H).$ 
     
     Next, we show that $\pi'' \succsim_{B} \pi'$ is equivalent to
       \begin{align*}
     \pi''(\mu''=\mu_{0}|L)&=\pi''(\mu''=\mu_{0}|H)\leq \sum_{s\in S'}\min\{\pi'(s|L),\pi'(s|H)\}.
      \end{align*}
      Suppose $\pi''(\mu''=\mu_{0}|L)=\pi''(\mu''=\mu_{0}|H)
     \leq \sum_{s\in S'}\min\{\pi'(s|L),\pi'(s|H)\}$. 
     Define $\rho:\Omega\to \Delta\{s_{0},s_{1},s_{2}\}$ that satisfies
     % \begin{align*}
     %     &\rho(s_{1}|L)=0\\
     %     &\rho(s_{0}|H)=0\\
     %     &\rho(s_{2}|H)=\rho(s_{2}|L)=\sum_{s\in S'}\min\{\pi'(s|L),\pi'(s|H)\}.
     % \end{align*}
     \begin{align*}
         \rho(s_{1}|L)=0, \quad \rho(s_{0}|H)=0, \quad \rho(s_{2}|H)=\rho(s_{2}|L)=\sum_{s\in S'}\min\{\pi'(s|L),\pi'(s|H)\}.
     \end{align*}
     Then, we have $\pi''\succsim_{B}\rho$ as $\supp{\mu''}=\{0,\mu_{0},1\}$.
     
     If $\rho(s_{2}|L)=\rho(s_{2}|H)=1$, $\rho\succsim_{B} \pi'$ as $\pi'$ is no information. 
     If $\rho(s_{2}|L)=\rho(s_{2}|H)=0$, $\rho\succsim_{B} \pi'$ as both $\rho$ and $\pi'$ are full information. 
     Otherwise, 
     \begin{align*}
         \pi'(s|\omega)=
         &\frac{\max\{\pi'(s|L)-\pi'(s|H),0\}}{\rho(s_{0}|L)}\rho(s_{0}|\omega)+\frac{\max\{\pi'(s|H)-\pi'(s|L),0\}}{\rho(s_{1}|H)}\rho(s_{1}|\omega)\\&\quad\quad\quad\quad\quad\quad+\frac{\min\{\pi'(s|L),\pi'(s|H)\}}{\rho(s_{2}|L)}\rho(s_{2}|\omega)
     \end{align*}
     and
      \begin{align*}
         &\sum_{s\in S'}\frac{\max\{\pi'(s|L)-\pi'(s|H),0\}}{\rho(s_{0}|L)}
         =\frac{\sum_{s\in S'}\max\{\pi'(s|L)-\pi'(s|H),0\}}{1-\sum_{s\in S'}\min\{\pi'(s|L),\pi'(s|H)\}}=1\\
           &\sum_{s\in S'}\frac{\max\{\pi'(s|H)-\pi'(s|L),0\}}{\rho(s_{1}|H)}
         =\frac{\sum_{s\in S'}\max\{\pi'(s|H)-\pi'(s|L),0\}}{1-\sum_{s\in S'}\min\{\pi'(s|L),\pi'(s|H)\}}=1\\
         &\sum_{s\in S'}\frac{\min\{\pi'(s|L),\pi'(s|H)\}}{\rho(s_{2}|L)}=1.
     \end{align*}
     % where 
     % \begin{align*}
     %     1-\sum_{s\in S'}\min\{\pi'(s|L),\pi'(s|H)\} & = 1-\sum_{s:\pi'(s|L) \geq \pi'(s|H)}\pi'(s|H) - \sum_{s:\pi'(s|L) < \pi'(s|H)}\pi'(s|L) \\
     %     & = \sum_{s:\pi'(s|L) \geq \pi'(s|H)}\pi'(s|L) - \sum_{s:\pi'(s|L) \geq \pi'(s|H)}\pi'(s|H)
     % \end{align*}
     Hence, $\pi'$ is a garbling of $\rho$ and we have $\rho\succsim_{B} \pi'$. 
     Note that $\pi''\succsim_{B} \rho$ and $\rho\succsim_{B} \pi'$ implies $\pi''\succsim_{B}\pi'$. 
     Therefore,  $ \pi''(\mu''=\mu_{0}|H)=\pi''(\mu''=\mu_{0}|L)\leq \sum_{s\in S'}\min\{\pi'(s|L),\pi'(s|H)\}$ is a sufficient condition for $\pi'' \succsim_{B} \pi'$.

     Conversely, suppose  $\pi'' \succsim_{B} \pi'$. 
     Then, there exists probability distribution $\gamma_{0}$, $\gamma_{1}$, and $\gamma_{\mu_{0}}$ over $S'$ such that
     \[
     \pi'(s|\omega)=\gamma_{0}(s)\pi''(\mu''=0|\omega)+\gamma_{1}(s)\pi''(\mu''=1|\omega)+\gamma_{\mu_{0}}(s)\pi''(\mu''=\mu_{0}|\omega)
     \]
     for all $s\in S'$ and $\omega\in \Omega$.
     Then, for each $\omega\in\Omega$,
     \begin{align*}
        & \sum_{s\in S'}\min\{\pi'(s|L),\pi'(s|H)\}\\
        &= \sum_{s\in S'} \min\left\{
\begin{aligned}
    & \gamma_{0}(s)\pi''(\mu''=0|L)+\gamma_{\mu_{0}}(s)\pi''(\mu''=\mu_{0}|L), \\
    & \quad \quad \quad  \gamma_{1}(s)\pi''(\mu''=1|H)+\gamma_{\mu_{0}}(s)\pi''(\mu''=\mu_{0}|H)
\end{aligned}
\right\}\\
        &=
        \sum_{s\in S'}\left[\min\left\{\gamma_{0}(s)\pi''(\mu''=0|L),\gamma_{1}(s)\pi''(\mu''=1|H)\right\}+\gamma_{\mu_{0}}(s)\pi''(\mu''=\mu_{0}|L)\right]\\
        & \geq
        \sum_{s\in S'}\gamma_{\mu_{0}}(s)\pi''(\mu''=\mu_{0}|L)\\
        &= \pi''(\mu''=\mu_{0}|\omega).
     \end{align*}  
      Hence, $ \pi''(\mu''=\mu_{0}|L)=\pi''(\mu''=\mu_{0}|H)\leq \sum_{s\in S'}\min\{\pi'(s|L),\pi'(s|H)\}$  is a necessary condition for $\pi'' \succsim_{B} \pi'$.
      Therefore, $\pi'' \succsim_{B} \pi'$ is equivalent to  $ \pi''(\mu''=\mu_{0}|L)=\pi''(\mu''=\mu_{0}|H)\leq \sum_{s\in S'}\min\{\pi'(s|L),\pi'(s|H)\}$, or  $ \pi''(\mu''=0|L)=\pi''(\mu''=1|H)\geq 1- \sum_{s\in S'}\min\{\pi'(s|L),\pi'(s|H)\}$. 
      By combining the first half and the second half, it can be seen that Proposition~\ref{prop_equivalent_sufficient_condition} holds.
\end{proof}
%%%%%%%%%%%%%%%%%%%%%%%%%%%%%%%
%%%%%%%%%%%%%%%%%%%%%%%%%%%%%%%
\subsection{Derivation in Example \ref{example: converse} and Proof of Corollary \ref{prop: converse}}
\begin{proof}[Derivation in Example \ref{example: converse}]
        For the first step, we derive the expected payoff $\overline{V}^{\overline{\mathcal{D}}}_{i}(\pi')$ for each sufficiently large $i$.
        We first decompose the expected payoff into two parts.
        First, if an agent observes a conclusive signal about $H$ (resp. $L$), then she chooses the optimal action $a_{1}$ (resp. $a_{0}$).
        Hence, the expected payoff from the events in which at least one of the first $i$ agents observes a conclusive signal is $\frac{1}{2}\cdot [1-(1-\lambda')^{i}](1-r)$.
        We then focus on the event in which all agents from $1$ to $i$ observe either $s_{l}$ or $s_{h}$ and derive the corresponding interim payoff.
        Note that this interim payoff coincides with that obtained from $i$ conditionally independent observations of the symmetric binary signal $\pi'_{p'}$ where $S_{p'}=\{s_{l},s_{h}\}$, $\pi'_{p'}(s_{l}|L)=\pi'_{p'}(s_{h}|H)=1-p'$, and $\pi'_{p'}(s_{h}|L)=\pi'_{p'}(s_{l}|H)=p'$ with $p'\in(0,1/2)$.
        
        Let $l(s)$ denote the log-likelihood ratio after observing signal $s$ from $\pi'_{p'}$, that is, $l(s)=\log \frac{\pi'_{p'}(s|H)}{\pi'_{p'}(s|L)}$.
        Then, we have $\mathbb{E}[l(s)|L] < 0 < \mathbb{E}[l(s)|H]$.
        Let $K^{L}:\mathbb{R}\to\mathbb{R}$ denote the {\it cumulant generating function} conditional on state $L$, defined as follows:
        \[
            K^{L}(t) 
         = \log \sum_{s\in S_{p'}} \pi'_{p'}(s|L)\left(\frac{\pi'_{p'}(s|H)}{\pi'_{p'}(s|L)} \right)^{t} =\log\left((1-p')\left(\frac{p'}{1-p'}\right)^{t} + p' \left(\frac{1-p'}{p'}\right)^{t}\right),
        \]
        for every $t\in \mathbb{R}$.
        
        By Cram{\'e}r's theorem \citep{cramer1938nouveau}, for each $r\in(1-p,1)$ and $p\in (0,1/2)$ (the parameter for $\pi$), we have
        \begin{align*}
           \pi_{p'}'^{\; \otimes i}(\mu'\geq r|L) 
           & = \mathbb{P}\left[l(s_{1})+\dots+l(s_{i})\geq \log \frac{r}{1-r}\mid L\right] \\
           & = e^{i\min_{t}K^{L}(t) + o(i)}.
        \end{align*}
        As $\min_{t}K^{L}(t)=\log(2\sqrt{p'(1-p')})$ by the arithmetic–geometric mean inequality, we have $\pi_{p'}'^{\; \otimes i}(\mu'\geq r|L)=(2\sqrt{p'(1-p')})^{i}e^{o(i)}$.
        By the symmetric argument, we obtain $\pi_{p'}'^{\; \otimes i}(\mu'\leq r|H)=(2\sqrt{p'(1-p')})^{i}e^{o(i)}$.
        
        Therefore, the interim payoff is 
        \begin{align*}
            & \frac{1}{2}\cdot \left((1-\pi_{p'}'^{\; \otimes i}(\mu'\leq r|H))(1-r)+\pi_{p'}'^{\; \otimes i}(\mu'\geq r|L)(-r)\right) \\
            & = \frac{1}{2}\cdot \left(1-r-\left(2\sqrt{p'(1-p')}\right)^{i}e^{o(i)} \right).
        \end{align*}
        Thus, for each sufficiently large $i$, the expected payoff is
        \[\overline{V}^{\overline{\mathcal{D}}}_{i}(\pi')=\frac{1}{2}\left[ [1-(1-\lambda')^{i}](1-r)+(1-\lambda')^{i}\left(1-r-\left(2\sqrt{p'(1-p')}\right)^{i}e^{o(i)} \right)\right].\]
    %%%%%%%%%%%%%%%%%%%%%%%%%%%
    
    Then, a necessary condition for $\pi \succsim_{S} \pi'$ is $V^{\overline{\mathcal{D}}}_{i}(\pi,\bm{\sigma}) \geq \overline{V}^{\overline{\mathcal{D}}}_{i}(\pi')$ for all $r\in(1-p,1)$ and sufficiently large $i$.
    By the above derivation, the necessary condition is
    \begin{align*}
   (1-\lambda')^{i}\left(2\sqrt{p'(1-p')}\right)^{i}e^{o(i)} \geq (1-\lambda)^{i}(1-r),
   \end{align*}
   which is equivalent to
   \begin{align*} r \geq 1- \left(\frac{1-\lambda'}{1-\lambda} \cdot 2\sqrt{p'(1-p')}\right)^{i}e^{o(i)}.
    \end{align*}
    The left-hand-side is smallest when $r$ is close to $1-p$, and $\pi \succsim_{S} \pi'$ does not hold if the inequality does not hold at $r=1-p$.
    Thus, applying the $1/i$-th power to both sides at $r=1-p$ yields the necessary condition as  
    \[
    \frac{1-\lambda'}{1-\lambda} \cdot 2\sqrt{p'(1-p')}e^{\frac{o(i)}{i}} \geq p^{\frac{1}{i}}.
    \]
    Therefore, for sufficiently large $i$, we can derive a necessary condition as 
    \[
    \lambda \geq 1-2(1-\lambda')\sqrt{p'(1-p')}.
    \]
\end{proof}

\begin{proof}[Proof of Corollary \ref{prop: converse}]
    By Theorem \ref{thm_sufficient_condition}, it is enough to show the necessity part.
    Take any $\pi$ and $\pi'$ such that $\pi(\mu=1|H)=\pi'(\mu'=1|H)>0$ and $\pi(\mu=0|L)=\pi'(\mu'=0|L)>0$.
    As a contraposition, suppose that there is no $\pi''$ such that $\supp{\mu''}=\{0,\mu_{0},1\}$ and $\pi \succsim_{B} \pi'' \succsim_{B} \pi'$.
    Then, under $\pi'$, there exists a signal $s'$ inducing a private belief in $(0,1)\setminus \{\mu_{0}\}$.\footnote{Note that $\pi'$ is not full or no information.}
    % Denote this private belief as $x$ and assume that, without loss of generality, $x\in(\frac{1}{2},1)$. Let  $M=\max(\{x\}\cup (\supp{\mu}\setminus\{1\}))$ and take $r\in (M,1)$. Consider the following decision problem $\mathcal{D}=(u,A)$: $A=\{a_{0},a_{1},a_{2}\}$, $u(a_i,\omega)=0$ for $i=0,1$ and $\omega=L,H$, $u(a_{2},H)=1-r$, and $u(a_{2},L)=-r$. We consider an equilibrium under $\pi$ in which each agent always chooses $a_{0}$ whenever they face an indifference. By construction, in this equilibrium, each agent $i$ chooses $a_{2}$ if and only if at least one of agents $1,2,\dots,i$ has received a conclusive signal regarding $H$. Similarly, under $\pi'$, we consider an equilibrium in which each agent chooses $a_{1}$ only when they themselves have received the signal $s'$ and chooses $a_{0}$ otherwise when they face an indifference. By construction, in this equilibrium, each agent $i$ chooses $a_{2}$ if at least one of agents $1,2,\dots,i$ has received a conclusive signal regarding $H$. Moreover, when $i$ is sufficiently large, agent $i$ will also choose $a_{2}$ whenever all preceding agents have chosen $a_{1}$. Since $\pi(\mu=1|H)=\pi'(\mu'=1|H)$, The latter yields a higher expected payoff.
    As in the construction behind the proof of Corollary \ref{observation_necessity_of_conclusive}, there exists some decision problem and its auxiliary problem, in which the observable signal setting of $\pi'$ yields strictly higher payoff than $\pi$ for some sufficiently later agent $i$ because observing signal $s'$ $i$-times induce the strictly better payoff by $\pi(\mu=1|H)=\pi'(\mu'=1|H)$ and $\pi(\mu=0|L)=\pi'(\mu'=0|L)$.  
\end{proof}
%%%%%%%%%%%%%%%%%%%%%%%%%%%%%%%
%%%%%%%%%%%%%%%%%%%%%%%%%%%%%%%
\subsection{Proofs of Proposition \ref{prop: large sample order and long run comparison}, Proposition \ref{thm: sufficient condition for long run comparison}, and Theorem \ref{thm: limit comparison}}
\begin{proof}[Proof of Proposition \ref{prop: large sample order and long run comparison}]
Suppose $\pi \succsim_{ES} \pi'$. 
Take $N\in \mathbb{N}$ such that $V_{i}^{\mathcal{D}}(\pi,\bm{\sigma}^{*})\geq \overline{V}_{i}^{\mathcal{D}}(\pi')$ for all $\mathcal{D}$, $i\geq N$, and equilibrium $\bm{\sigma}^{*}$ under $(\mathcal{D},\pi)$. 
By Lemma \ref{lemma_signal_is_more_informative}, we have $\overline{V}_{i}^{\mathcal{D}}(\pi)\geq \overline{V}_{i}^{\mathcal{D}}(\pi')$ for all $\mathcal{D}$ and $i\geq N$. 
Thus, $\pi \succsim_{EB} \pi'$.

Consider the same example as Example \ref{example_1}. 
Then, we have $V_{i}^{\mathcal{D}}(\pi',\bm{\sigma}')>V_{i}^{\mathcal{D}}(\pi,\bm{\sigma})$ for all $i\geq 2$. 
Hence, $\pi$ is not eventually more socially valuable than $\pi'$, although $\pi \succsim_{EB} \pi'$ holds as $\pi \succsim_{B} \pi'$ and Lemma \ref{observation_blackwell}.
\end{proof}
%%%%%%%%%%%%%%%%%%%%%%%%%%%%%%%%%
\begin{proof}[Proof of Proposition \ref{thm: sufficient condition for long run comparison}]
    Since $\succsim_{ES}$ satisfies transitivity, showing $\pi \succsim_{ES} \pi''$ and $\pi'' \succsim_{ES} \pi'$ is sufficient.
    First, we have $\pi'' \succsim_{ES} \pi'$ because
    \[
        V_{i}^{\mathcal{D}}(\pi'',\bm{\sigma}^{*})=\overline{V}_{i}^{\mathcal{D}}(\pi'')
        \geq \overline{V}_{i}^{\mathcal{D}}(\pi')\geq  V_{i}^{\mathcal{D}}(\pi',\bm{\sigma}^{**})
    \]
    for sufficiently large $i$, where $\bm{\sigma}^{*}$ is an arbitrary equilibrium under $(\mathcal{D},\pi'')$ and $\bm{\sigma}^{**}$ is an arbitrary equilibrium under $(\mathcal{D},\pi')$.
    Each equality or inequality follows from Lemma \ref{lemma_expected_payoff_3support}, $\pi'' \succsim_{EB} \pi'$, and Lemma \ref{lemma_signal_is_more_informative} respectively. 
    Thus, $\pi'' \succsim_{ES} \pi'$.
    
    To show $\pi \succsim_{ES} \pi''$, note that if $\supp{\mu''}=\{0,\mu_{0},1\}$, the support of private belief distribution induced by $\pi''^{\; \otimes i}$ is also $\{0,\mu_{0},1\}$ for all $i$.
    Then by the proof of Proposition \ref{prop_equivalent_sufficient_condition}, $\pi\succsim_{EB}\pi''$ if and only if there exists $N\in\mathbb{N}$ such that
    $\pi^{\otimes i}(\mu=0|L) \geq \pi''^{\; \otimes i}(\mu''=0|L)$ and $\pi^{\otimes i}(\mu=1|H) \geq \pi''^{\; \otimes i}(\mu''=1|H)$ for all $i\geq N$.
    By the nature of conclusive signals, agents form the extreme beliefs $\mu=1$ or $0$ if they observe the conclusive signal at least once.
    Thus, these conditions are equivalent to
    \[
     1-[1-\pi(\mu=0|L)]^{i}\geq 1-[1-\pi''(\mu''=0|L)]^{i},
     \]
     and 
     \[1-[1-\pi(\mu=1|H)]^{i}\geq 1-[1-\pi''(\mu''=1|H)]^{i},\]
     for all sufficiently large $i$.
     Furthermore, these inequalities are equivalent to 
     \[
     \pi(\mu=0|L)\geq \pi''(\mu''=0|L) \quad \text{and} \quad \pi(\mu=1|H)\geq \pi''(\mu''=1|H).
     \]
     Thus, under $\supp{\mu''}=\{0,\mu_{0},1\}$, $\pi\succsim_{EB} \pi''$ if and only if $\pi\succsim_{B} \pi''$.
     Therefore, we also have $\pi \succsim_{S} \pi''$ by Lemma \ref{observation_imitation}.
\end{proof}

%%%%%%%%%%%%%%%%%%%%%%
\begin{proof}[Proof of Theorem \ref{thm: limit comparison}]
    Suppose that $\pi$ does not induce unbounded beliefs. 
    By symmetry, we can assume $1\notin \supp{\mu}$ without loss of generality. 
    Take $r\in (0,1)$ such that $ \supp{\mu} \subset [0,r]$.
    Let $\mathcal{D}=(A,u)$ be the decision problem that is the same as in the proof of Corollary \ref{observation_necessity_of_conclusive}. 
    Define $\overline{\mathcal{D}}$ as the replicated decision problem in the same way as in the proof of Theorem \ref{thm_characterization}. 
    Then, we can construct the equilibrium $\bm{\sigma}^{**}$ under $(\overline{\mathcal{D}},\pi')$ whose history completely reveals the signals received by each agent. 
    Fix $k_{1}\in \supp{\pi'}.$
    Note that it is always an equilibrium under $(\overline{\mathcal{D}}, \pi)$ for all agents to choose $(a_{0}, k_{1})$.
    Denote this equilibrium by $\bm{\sigma}^{*}$.
    Then, it follows that $  V_{i}^{\overline{\mathcal{D}}}(\pi,\bm{\sigma}^{*})=0$ for all $i$.
    Since $\pi'$ is not no information, for any $\varepsilon>0$, $V_{i}^{\overline{\mathcal{D}}}(\pi',\bm{\sigma}^{**})>\mu_{0}(1-r)-\varepsilon$ for sufficiently large $i$. 
    Thus, $\lim_{i\to \infty}V_{i}^{\overline{\mathcal{D}}}(\pi',\bm{\sigma}^{**})>\lim_{i\to \infty}V_{i}^{\overline{\mathcal{D}}}(\pi,\bm{\sigma}^{*})$, and hence $\pi \succsim_{LES} \pi'$ does not hold.
    
    Conversely, suppose that $\pi$ induces unbounded beliefs.
    Note that $V_{i}^{\mathcal{D}}(\pi',\bm{\sigma}^{**})\leq \mu_{0} U_{1}+(1-\mu_{0})U_{0}$, where $U_{0}=\max_{a\in A} u(a,L)$ and  $U_{1}=\max_{a\in A} u(a,H)$, because the right-hand side is an expected utility under full information. 
    Hence, it is enough to show that, for any $\mathcal{D}$, $\lim_{i\to \infty}V_{i}^{\mathcal{D}}(\pi,\bm{\sigma}^{*})=\mu_{0} U_{1}+(1-\mu_{0})U_{0}$ for any equilibrium $\bm{\sigma}^{*}$ under $(\mathcal{D},\pi)$.
    Take any decision problem $\mathcal{D}=(A,u)$. 
    %Define $\pi^{*}$ as follows: $\pi^{*}:\Omega \to \Delta\{s_{0},s_{1},s_{2}\}$, $\pi^{*}(s_{0}|H)=\pi^{*}(s_{1}|L)=0$, $\pi^{*}(s_{1}|H)=\pi^{*}(s_{0}|L)=\min\{\pi(\mu=1|H),\pi(\mu=0|L)\}$, and  $\pi^{*}(s_{2}|H)=\pi^{*}(s_{2}|L)=1-\min\{\pi(\mu=1|H),\pi(\mu=0|L)\}$. Then, since $\pi$ induces unbounded beliefs,  $\pi^{*}(s_{1}|H)=\pi^{*}(s_{0}|L)>0$ holds. Note that $\supp{\mu^{*}}\subseteq \{0,\mu_{0},1\}$. Let $q_{L}=\frac{\pi^{*}(s_{0}|L)}{\pi(\mu=0|L)}$ and $q_{H}=\frac{\pi^{*}(s_{1}|H)}{\pi(\mu=1|H)}$.
    Let $q_{L}=\frac{\min\{\pi(\mu=1|H),\pi(\mu=0|L)\}}{\pi(\mu=0|L)}$ and $q_{H}=\frac{\min\{\pi(\mu=1|H),\pi(\mu=0|L)\}}{\pi(\mu=1|H)}$. 
    Note that $q_{L},q_{H}>0$ holds since $\pi$ induces unbounded beliefs. 
    Define $\bm{\sigma}^{**}$ in the same way as in the proof of Lemma \ref{observation_imitation}. 
    Then, by the same argument as the proof of Lemma \ref{observation_imitation}, we have $V_{i}^{\mathcal{D}}(\pi,\bm{\sigma}^{*})\geq V_{i}^{\mathcal{D}}(\pi,\bm{\sigma}^{**})$ for any equilibrium $\bm{\sigma}^{*}$ under $(\mathcal{D},\pi)$.
    Since 
    \[
   V_{i}^{\mathcal{D}}(\pi,\bm{\sigma}^{**})
        =\mu_{0}[(1-p^{i})U_{1}]+(1-\mu_{0})[(1-p^{i})U_{0}]+p^{i}U_{\mu_{0}},
    \]
    where $p=1-\min\{\pi(\mu=1|H),\pi(\mu=0|L)\}$ and $U_{\mu_{0}}=\max_{a\in A} [\mu_{0}u(a,H)+(1-\mu_{0})u(a,L)]$, it follows that
    \begin{align*}
    \mu_{0} U_{1}+(1-\mu_{0})U_{0} & \geq V_{i}^{\mathcal{D}}(\pi,\bm{\sigma}^{*})\\
    & \geq \mu_{0}[(1-p^{i})U_{1}]+(1-\mu_{0})[(1-p^{i})U_{0}]+p^{i}U_{\mu_{0}}.
    \end{align*}
    Since $p<1$, we have $\mu_{0}[(1-p^{i})U_{1}]+(1-\mu_{0})[(1-p^{i})U_{0}]+p^{i}U_{\mu_{0}}\to \mu_{0} U_{1}+(1-\mu_{0})U_{0}$ as $i\to \infty$. 
    Thus, by the squeeze theorem, it follows that $\lim_{i\to \infty}V_{i}^{\mathcal{D}}(\pi,\bm{\sigma}^{*})=\mu_{0} U_{1}+(1-\mu_{0})U_{0}$ for any $\mathcal{D}$ and any equilibrium $\bm{\sigma}^{*}$ under $(\mathcal{D},\pi)$.
    Thus, $\pi \succsim_{LES} \pi'$ for any $\pi'$.
    %  If there exists $a_{1}\in A$ such that $a_{1}\in argmax_{a\in A} u(a,H)\cap argmax_{a\in A} u(a,L)$, we have $V_{i}^{\mathcal{D}}(\pi,\bm{\sigma}^{*})=\mu_{0} u(a_{1},H)+(1-\mu_{0})u(a_{1},L)=\mu_{0} \max_{a\in A} u(a,H)+(1-\mu_{0})\max_{a\in A}u(a,L)$ for any $i$ and any equilibrium $\bm{\sigma}^{*}$ under $(\mathcal{D},\pi)$. Now, suppose that $argmax_{a\in A} u(a,H)\cap argmax_{a\in A} u(a,L)=\emptyset$. Take any equilibrium $\bm{\sigma}^{*}$ under $(\mathcal{D},\pi)$. 
    %Let $\bm{\sigma}^{*}(i)$ be the strategy profile such that $\bm{\sigma}^{*}_j=\bm{\sigma}^{*}(i)_j$ for all $i\neq j$ and $\bm{\sigma}^{*}(i)_i$ is following strategy: agent $i$ chooses an action from $argmax_{a\in A} u(a,H)$ if at least one agent $1,2,\dots,i-1$ chooses action from $argmax_{a\in A} u(a,H)$ and chooses an action from $argmax_{a\in A} u(a,L)$  otherwise.
\end{proof}

%%%%%%%%%%%%%%%%%%%%%%%%%%%%%%%%%%%%%%%%%%%%%%%%%%%%
\subsection{Proofs of Proposition \ref{prop_partial_order} and Proposition \ref{observation_weak_order_3support}}
\begin{proof}[Proof of Proposition \ref{prop_partial_order}]
    Suppose $\supp\mu\subseteq\{0,\mu_{0},1\}$. 
    Then, we have
    \begin{align*}
        1-\sum_{s\in\supp{\pi}}\min\{\pi(s|L),\pi(s|H)\} 
        & = 1-\pi(\mu=\mu_{0}|L) \\
        & = \pi(\mu=0|L)=\min \{\pi(\mu=0|L),\pi(\mu=1|H)\}.
    \end{align*}
    Therefore, by Theorem \ref{thm_sufficient_condition} and Proposition \ref{prop_equivalent_sufficient_condition}, we have $\pi\succsim_{S} \pi$.
    
    Now, we show that $\pi \succsim_{S} \pi$ does not hold if $\supp{\mu}\nsubseteq\{0,\mu_{0},1\} $. 
    It is sufficient to show for the case where there exists some $x\in(\mu_{0},1)$ such that $x\in \supp\mu $. 
    Take $r\in [0,1]$ that satisfies $x<r<\frac{x^2}{x^2+\frac{\mu_{0}}{1-\mu_{0}}(1-x)^2}$.
    Consider the decision problem $\mathcal{D}=(A,u) $: $A=\{a_{0},a_{1}\}$ and the payoff function is defined as $u(a_{0},H)=u(a_{0},L)=0$, $u(a_{1},H)=1-r$, and $u(a_{1},L)=-r$. 
    Take any equilibrium $\bm{\sigma}^{*}=(\sigma^{*}_{i})_{i\in\mathbb{N}}$ and $s_{1},s_{2}\in S(x)$.
    Then, it follows that $\sigma^{*}_1(a_{0}|s_{1})=1$ and $\sigma^{*}_2(a_{0}|a_{0},s_{2})=1$. 
    Thus, $V_{2}^{\mathcal{D}}(\pi,\bm{\sigma}^{*}|(s_{1},s_{2}))=0$.
   Additionally, we have $\overline{V}_2^{\mathcal{D}}(\pi|(s_{1},s_{2}))>0$ since $r<\frac{x^2}{x^2+\frac{\mu_{0}}{1-\mu_{0}}(1-x)^2}$. 
   Note that for all $s_{1}',s_{2}'\in S$, $ \overline{V}_2^{\mathcal{D}}(\pi|(s_{1}',s_{2}'))\geq V_{2}^{\mathcal{D}}(\pi,\bm{\sigma}^{*}|(s_{1}',s_{2}'))$.\footnote{This statement follows from the same argument as in Lemma \ref{lemma_signal_is_more_informative}.} 
   Therefore, $\overline{V}_2^{\mathcal{D}}(\pi)>V_{2}^{\mathcal{D}}(\pi,\bm{\sigma}^{*})$.
   By Theorem \ref{thm_characterization}, it follows that $\pi\succsim_{S}\pi$ does not hold.
\end{proof}

%%%%%%%%%%%%%%%%%%%%%%%%%%%%%%%%%%%%%%%%%%%%%%%%%%%%%%%%%%%%%%
\begin{proof}[Proof of Proposition \ref{observation_weak_order_3support}]
  Without loss of generality, assume that $x>\mu_{0}$, $\supp\pi=\{s_{0},s_{1},s_{2}\}$ and $\pi(s_{0}|H)=0$, $\pi(s_{1}|H)=1-\varepsilon$, $\pi(s_{2}|H)=\varepsilon$, $\pi(s_{0}|L)=1-\delta$, $\pi(s_{1}|L)=0$, and $\pi(s_{2}|L)=\delta$, where $\varepsilon$ and $\delta$ satisfy the condition that $x=\frac{\mu_{0}\varepsilon}{\mu_{0}\varepsilon+(1-\mu_{0})\delta}$. 
  We divide decision problem $\mathcal{D}$ into three cases and construct the following equilibrium $\bm{\sigma}^{*}$ under $(\mathcal{D},\pi)$. 
  
    \textit{Case (i): $B^{-1}(0)\cap B^{-1}(1)\neq \emptyset$.}
    Fix $a^{*}\in B^{-1}(0)\cap B^{-1}(1)$.
    In this case, all agents choose $a^{*}$ regardless of private signal and action histories.  
    
    % \textit{Case (ii): $B^{-1}(0)\cap B^{-1}(1)= \emptyset$ and $B^{-1}(1)\cap B^{-1}(x)\neq \emptyset$.}
    % Fix $a_{0}\in B^{-1}(0)$ and $a_{1}\in B^{-1}(1)\cap B^{-1}(x)$. 
    %Note that $a_{0}\neq a_{1}$.
    %In this case, agent 1 chooses $a_{0}$ if he receives $s_{0}$ and chooses $a_{1}$ otherwise. 
     %For $i\geq 2$, agent $i$ chooses $a_{0}$ if she receives $s_{0}$, chooses $a_{1}$ if she receives $s_{1}$, chooses the same action as agent $i-1$ if she receives $s_{2}$.
     
    \textit{Case (ii): $B^{-1}(1)\cap B^{-1}(x)= \emptyset$, 
    $B^{-1}(0)= B^{-1}(x)=\{a_{0}\}$ for some $a_{0}\in A$.}
    Fix any $a_{1}\in B^{-1}(1)$.
    Agent 1 chooses $a_{0}$ if he receives $s_{0}$ or $s_{2}$ and chooses $a_{1}$ otherwise. For $i\geq 2$, agent $i$ chooses $a_{0}$ if she receives $s_{0}$, or receives $s_{2}$ and all previous agents take $a_{0}$. 
    Otherwise, $i$ chooses $a_{1}$.
     
     \textit{Case (iii): Otherwise.}
     First, fix $a_{0}\in B^{-1}(0)$ such that for all $z\in [x,1]$, $B^{-1}(z)\neq \{a_{0}\}$. (Such $a_{0}$ must exist by Lemma~\ref{lemma_best_action}.)
     In this case, agent $1$ chooses action $a_{0}$ if he receives $s_{0}$, chooses action from $B^{-1}(1)$ if he receives $s_{1}$, and chooses action from $B^{-1}(x)$ if he receives $s_{2}$.
     For $i\geq 2,$ agent $i$ chooses action $a_{0}$ if she receives $s_{0}$ or at least one agent before $i$ has taken $a_{0}$, chooses action from $B^{-1}(1)$ if she receives $s_{1}$, and chooses action from $B^{-1}(\frac{x^{i}}{x^{i}+(\frac{\mu_{0}}{1-\mu_{0}})^{i-1}(1-x)^{i}})\setminus\{a_{0}\}$ if she receives $s_{2}$ and no one before $i$ has taken action $a_{0}$ or action from $B^{-1}(1)$. Otherwise, she chooses the same action as agent $i-1$.

    In \textit{Case (i)}, it is always optimal to take $a^{*}$ regardless of the posterior belief. Hence, this strategy $\bm{\sigma}^{*}$ is an equilibrium and we have $V_{i}^{\mathcal{D}}(\pi,\bm{\sigma}^{*})=\overline{V}_{i}^{\mathcal{D}}(\pi)$.
    %In \textit{Case (ii)}, each agent $i$ takes $a_{0}$ if at least one agent has observed $s_{0}$ in the past, and $a_{1}$ otherwise. Since $a_{1}$ is always optimal in the range $[x,1]$ for the posterior belief, we have $V_{i}^{\mathcal{D}}(\pi,\bm{\sigma}^{*})=\overline{V}_{i}^{\mathcal{D}}(\pi)$. Hence, $\bm{\sigma}^{*}$ is an equilibrium. 
    In \textit{Case (iii)}, action $a_{0}$ is taken if someone has received the signal $s_{0}$ in the past, an action from $B^{-1}(1)$ is taken if someone has received the signal $s_{1}$ in the past, and an action from $B^{-1}(\frac{x^{i}}{x^{i}+(\frac{\mu_{0}}{1-\mu_{0}})^{i-1}(1-x)^{i}})$ or an action yielding the same expected payoff is taken when all past agents have received $s_{2}$.
    Therefore, we have $ V_{i}^{\mathcal{D}}(\pi,\bm{\sigma}^{*})=\overline{V}_{i}^{\mathcal{D}}(\pi)$. Hence, $\bm{\sigma}^{*}$ is an equilibrium. 
    Then, in \textit{Case (i)} and \textit{Case (iii)}, by the same argument as Lemma~\ref{observation_strict_order_3support},
    \begin{align*}
       V_{i}^{\mathcal{D}}(\pi,\bm{\sigma}^{*})= \overline{V}_{i}^{\mathcal{D}}(\pi)\geq  \overline{V}_{i}^{\mathcal{D}}(\pi')\geq V_{i}^{\mathcal{D}}(\pi',\bm{\sigma}^{**}),
    \end{align*}
    for any equilibrium $\bm{\sigma}^{**}$ under $(\mathcal{D},\pi')$.
   
    The only case left is \textit{Case (ii)}. 
    In \textit{Case (ii)}, from Lemma \ref{lemma_expected_payoff_special_case}, 
    \begin{align*}
    V_{i}^{\mathcal{D}}(\pi',\bm{\sigma}^{**})
    & = \mu_{0}[(1-(1-\pi'(\mu'=1|H))^{i})u(a_{1},H)\\
    &\qquad + (1-\pi'(\mu'=1|H))^{i}u(a_{0},H)]+(1-\mu_{0})u(a_{0},L),
    \end{align*}
    for any equilibrium $\bm{\sigma}^{**}$ under $(\mathcal{D},\pi')$.
    Since $\pi'(\mu'=1|H)\leq 1-\varepsilon$ (by $\pi \succsim_{B} \pi'$) and $u(a_{1},H)>u(a_{0},H)$, it follows that
    \begin{align*}
        &\mu_{0}[(1-(1-\pi'(\mu'=1|H))^{i})u(a_{1},H)+(1-\pi'(\mu'=1|H))^{i}u(a_{0},H)]+(1-\mu_{0})u(a_{0},L)\\
        &= \mu_{0}u(a_{1},H)-\mu_{0}(1-\pi'(\mu'=1|H))^{i}[u(a_{1},H)-u(a_{0},H)]+(1-\mu_{0})u(a_{0},L)\\
        & \leq \mu_{0}u(a_{1},H)-\mu_{0}\varepsilon^{i}[u(a_{1},H)-u(a_{0},H)]+(1-\mu_{0})u(a_{0},L)\\
        & = V_{i}^{\mathcal{D}}(\pi,\bm{\sigma}^{*}),
    \end{align*}
    where $\bm{\sigma}^{*}$ is an equilibrium described above.
    Therefore, $\pi \succsim_{W} \pi'$.
\end{proof}

%%%%%%%%%%%%%%%%%%%%%%%%%%%%%%%%%%%%%%%%%%%%%%%%%%%%

%%%%%%%%%%%%%%%%%%%%%%%%%%%%%%%%%%%%%%%%%
\begingroup
\singlespacing
\def\bibinfo#1#2{#2\ignorespaces}
\setlength{\bibsep}{4.5pt plus 0.3ex}
\bibliography{Reference}
\addcontentsline{toc}{section}{References}
\endgroup
%%%%%%%%%%%%%%%%%%%%%%%%%%%%%%%%%%%%%%%%%

\clearpage
\begingroup
\onehalfspacing

% Main-paper references use the external-document prefix only when this
% file is compiled on its own.
\newcommand{\mainref}[1]{%
  \ifSubfilesClassLoaded{\ref{main-#1}}{\ref{#1}}}

\clearpage

\ifSubfilesClassLoaded{%
% \begin{titlepage}
\title{Online Appendix \\
Value of Information in Social Learning
}
\author{
\Large Hiroto Sato\thanks{\protect \onehalfspacing %\doublespacing 
Department of Economics, Nagoya University, Furo-cho, Chikusa-ku, Nagoya 464-8601, Japan. 
Email: \url{sato.hiroto.s9@f.mail.nagoya-u.ac.jp}.
}
\and
\Large Konan Shimizu\thanks{\protect \onehalfspacing %\doublespacing 
Faculty of Economics, Keio University, 2-15-45 Mita, Minato-ku, Tokyo 108-8345, Japan.
Email: \url{shimizu-konan@keio.jp}.}
} 
\date{\today}

% \setcounter{page}{0}
% \thispagestyle{empty}
% \end{titlepage}
\maketitle
}{%
  % \maketitle cannot be reused after the title of Main.tex because the
  % article class disables \thanks and \and after their first use.
  \begin{center}
    {\LARGE\bfseries Online Appendix\par}
    \vspace{0.5em}
    {\Large Value of Information in Social Learning\par}
    \vspace{1.25em}
    {\large Hiroto Sato \qquad Konan Shimizu\par}
    \vspace{0.75em}
    {\small
      Department of Economics, Nagoya University, Furo-cho, Chikusa-ku,
      Nagoya 464-8601, Japan.\\
      Email: \url{sato.hiroto.s9@f.mail.nagoya-u.ac.jp}.\\[0.5em]
      Faculty of Economics, Keio University, 2-15-45 Mita, Minato-ku,
      Tokyo 108-8345, Japan.\\
      Email: \url{shimizu-konan@keio.jp}.\par}
    \vspace{0.75em}
    {\today\par}
  \end{center}
  \vspace{1em}
}

% We can adjust margins of math-environment here 
%===================================================
\setlength{\abovedisplayskip}{3pt}
\setlength{\belowdisplayskip}{3pt}
%==================================================

%%%%%%%%%%%%%%%%%%%%%%%%%%%%%%%%%%%%%%%%%%%%%%%%%%%%%%%%%%%%%%%%%%%%%%%%%%%%%%%%%%%%%%%%%%%%%%%%%%%%%%%%
\appendix 

% Give the online appendix its own section and theorem numbering.  The
% hyperlink forms are prefixed so that they cannot clash with Main.tex.
\titleformat{\section}
  {\Large\bfseries}
  {Appendix \thesection:}
  {0.5em}
  {}
\renewcommand{\theHsection}{onlineappendix.\Alph{section}}

\numberwithin{theorem}{section}
\numberwithin{proposition}{section}
\numberwithin{corollary}{section}
\numberwithin{lemma}{section}
\numberwithin{claim}{section}
\numberwithin{definition}{section}
\numberwithin{assumption}{section}
\numberwithin{remark}{section}
\numberwithin{example}{section}
\numberwithin{conjecture}{section}
\numberwithin{observation}{section}

\renewcommand{\theHtheorem}{onlineappendix.\thetheorem}
\renewcommand{\theHproposition}{onlineappendix.\theproposition}
\renewcommand{\theHcorollary}{onlineappendix.\thecorollary}
\renewcommand{\theHlemma}{onlineappendix.\thelemma}
\renewcommand{\theHclaim}{onlineappendix.\theclaim}
\renewcommand{\theHdefinition}{onlineappendix.\thedefinition}
\renewcommand{\theHassumption}{onlineappendix.\theassumption}
\renewcommand{\theHremark}{onlineappendix.\theremark}
\renewcommand{\theHexample}{onlineappendix.\theexample}
\renewcommand{\theHconjecture}{onlineappendix.\theconjecture}
\renewcommand{\theHobservation}{onlineappendix.\theobservation}

%%%%%%%%%%%%%%%%%%%%%%%%%%%%%%%%%%%%%%%%%%%%%%%%%%%%%%%%
\section{Extensions} \label{sec:extension}
%%%%%%%%%%%%%%%%%%%%%%%%%%%%%%%%%%%%%%%%%%%%%%%%%%%%%%%%
%%%%%%%%%%%%%%%%%%%%%%%%%%%%%%%%%%%%%%%%%%%%%%%%%%%%%%%%%%%%%%%%%%%%%%%%%%%%%%
\subsection{Dynamic Information Structures} \label{sec:adaptive_info}
%%%%%%%%%%%%%%%%%%%%%%%%%%%%%%%%%%%%%%%%%%%%%%%%%%%%%%%%%%%%%%%%%%%%%%%%%%%%%%
Throughout Section \ref{sec:main}, we have assumed that agents receive private signals independently from an identical information structure.
In this section, we extend our analysis to dynamic information structures, following the dynamic extensions of the Blackwell comparison developed by \citet{renou2024comparing} and \citet{whitmeyer2024comparisons} under adaptive decision problems.
We show that our main results continue to hold in this setting.

Let $S=\prod_{i=1}^{\infty}S_{i}$, and let $\pi:\Omega\to \Delta(S)$ denote the dynamic information structure.
For each $i$, let $\pi_{i}$ be the marginal distribution of $\pi$ over $S_{i}$, and let $\mu_{i}$ denote the induced distribution of private beliefs.
For a decision problem $\mathcal{D}$ and a strategy profile $\bm{\sigma}$, define $V_{i}^{\mathcal{D}}(\pi,\bm{\sigma})$ as the expected payoff of agent $i$ when she observes both the past action history and her private signal.
Let $\overline{V}_{i}^{\mathcal{D}}(\pi)$ denote the maximized expected payoff under the imaginary setting wherein agent $i$ can observe signals drawn from $\pi_{\leq i}$, the marginal distribution over $S_{1}\times \dots\times S_{i}$ induced by $\pi$.

Note that private signals $s_{i}$ and $s_{j}$ may be correlated, so the public belief can depend on the realization of each agent’s private signal.
Moreover, the marginal distributions $\pi_{i}$ may differ across agents, allowing our framework to naturally encompass settings with heterogeneous private signals.

%%%%%%%%%%%%%%%%%%%%%%%
% Characterization 
Then, Theorem \ref{thm_characterization} is naturally extended as follows:
\begin{proposition}\label{prop_characterization_adaptive}
    $\pi\succsim_{S}\pi'$ holds if and only if $V_{i}^{\mathcal{D}}(\pi,\bm{\sigma}^{*})\geq \overline{V}_{i}^{\mathcal{D}}(\pi')$ for any decision problem $\mathcal{D}$, any agent $i$, and any equilibrium $\bm{\sigma}^{*}$ under $(\mathcal{D},\pi)$.
\end{proposition}
\begin{proof}[Proof of Proposition \ref{prop_characterization_adaptive}]
It suffices to show that $\overline{V}_{i}^{\mathcal{D}}(\pi)\geq V_{i}^{\mathcal{D}}(\pi,\bm{\sigma})$ holds for all $i$, $\mathcal{D}$, $\pi$, and $\bm{\sigma}$ because the remaining part of the proof is identical to that of Theorem \ref{thm_characterization}.
Take any $\mathcal{D},\pi$, and $\bm{\sigma}$. 
Note that $ \overline{V}_{1}^{\mathcal{D}}(\pi)= V_{1}^{\mathcal{D}}(\pi,\bm{\sigma})$.
Fix $i\geq 2$. 
For each $\bm{s}\in S_{1}\times \dots \times S_{i-1}$, define $f_{i-1}(\cdot|\bm{s})\in \Delta (A^{i-1})$ as 
\begin{align*}
    f_{i-1}(\bm{a}|\bm{s})=\prod_{k=1}^{i-1}\sigma_{k}(a_{k}|a_{1},\dots,a_{k-1},s_{k}).
\end{align*}
Thus, $f_{i-1}(\bm{a}|\bm{s})$ is the probability that agents $1,\dots,i-1$ take action $\bm{a}=(a_{1},\dots,a_{i-1})$ when they receive private signals $\bm{s}=(s_{1},\dots,s_{i-1})$. 
Let $g^{\omega}_{i}(\bm{a},s_{i}'|\pi,\bm{\sigma})$ denote the probability that agent $i$ observes $(\bm{a},s_{i}')$ when the state is $\omega$ and agents follow strategy profile $\bm{\sigma}$.
Note that 
\begin{align*}
    g^{\omega}_{i}(\bm{a},s_{i}'|\pi,\bm{\sigma})
    & =\sum_{\bm{s}\in S_{1}\times\dots \times S_{i-1}}\prod_{k=1}^{i-1}\sigma_{k}(a_{k}|a_{1},\dots,a_{k-1},s_{k})\pi_{\leq i}(\bm{s},s_{i}'|\omega) \\
    &= \sum_{\bm{s}\in S_{1}\times\dots \times S_{i-1}}f_{i-1}(\bm{a}|\bm{s})\pi_{\leq i}(\bm{s},s_{i}'|\omega) \\
    & = \sum_{\bm{s}\in S_{1}\times\dots \times S_{i}}f_{i-1}(\bm{a}|\bm{s})\pi_{\leq i}(\bm{s}|\omega)\mathds{1}_{s_{i}=s_{i}'}.
\end{align*}
Then, $g_{i}(\cdot|\pi,\bm{\sigma})$ is a garbling of $\pi_{\leq i}$, where $g_{i}(\cdot|\pi,\bm{\sigma})$ is the ex-ante distribution induced by $g_{i}^{\omega}(\cdot|\pi,\bm{\sigma})$.
\end{proof}
%%%%%%%%%%%%%%%%
% Note that Theorem \ref{thm_sufficient_condition} continues to provide a sufficient condition under the general information structures. 
% Proposition \ref{prop_characterization_adaptive} thus shows that our characterization remains valid even when private signals are correlated and heterogeneous across agents.
% By contrast, the sufficient condition in Theorem \ref{thm_sufficient_condition} does not directly extend to this broader setting, because its proof relies on the independence of private signals across agents.
% Establishing analogous sufficient conditions for general dynamic information structures is left for future research.
The following example shows that the sufficient condition in Theorem \ref{thm_sufficient_condition} does not naturally extend to correlated information structures by imposing its Blackwell comparisons separately on each agent's marginal experiment.\footnote{We thank the Associate Editor for providing this example.}
\begin{example}\label{example_correlated_signals}
Fix $\varepsilon\in(0,1)$.
For each $i$, let $S_{i}=\{s_{0},s_{\mu_{0}},s_{1}\}$, where $s_{0}$ and $s_{1}$ conclusively reveal states $L$ and $H$, respectively, whereas $s_{\mu_{0}}$ is uninformative.
Let $\pi''$ generate signals independently across agents according to the following common marginal experiment:
\begin{table}[H]
\centering
\begin{tabular}{l|ccc}
\Xhline{1.2pt}
$\pi_i''$ & $s_{0}$ & $s_{\mu_{0}}$ & $s_{1}$ \\
\hline
$H$ & $0$ & $\varepsilon$ & $1-\varepsilon$ \\
$L$ & $1-\varepsilon$ & $\varepsilon$ & $0$ \\
\Xhline{1.2pt}
\end{tabular}
\end{table}
\noindent
Set $\pi'=\pi''$.

To define $\pi$, suppose that all agents receive the perfectly same signal realization.
Conditional on the realized state, they all receive the corresponding conclusive signal with probability $1-\varepsilon$ and all receive $s_{\mu_0}$ with probability $\varepsilon$.
Although the joint distributions under $\pi$ and $\pi''$ differ, their one-agent marginals coincide.
Since $\pi'=\pi''$, for every agent $i$ we have
\[
    \pi_i=\pi_i''=\pi_i'
    \qquad\text{and}\qquad
    \supp{\mu_i''}=\{0,\mu_0,1\}.
\]
Thus, both marginal Blackwell comparisons
$\pi_i\succsim_B\pi_i''\succsim_B\pi_i'$ hold with equality, exactly as required by the direct marginal extension of Theorem \ref{thm_sufficient_condition}.

We now show that these conditions do not imply $\pi\succsim_S\pi'$.
Consider the decision problem with $A=\{a_{0},a_{1}\}$ and, for some $r\in(\mu_{0},1)$,
\[
u(a_{0},L)=u(a_{0},H)=0,\qquad
u(a_{1},H)=1-r,\qquad
u(a_{1},L)=-r.
\]
Take any equilibrium $\bm{\sigma}^{*}$ under $(\mathcal{D},\pi)$ and any equilibrium $\bm{\sigma}^{**}$ under $(\mathcal{D},\pi')$.
In either equilibrium, on the equilibrium path, an agent selects $a_1$ whenever she receives $s_1$ or some predecessor has selected $a_1$, and selects $a_0$ in all other cases.
No indifference is involved in these choices, so all equilibria induce the same on-path behavior.
Under $\pi$, the realization in which all agents receive $s_{\mu_0}$ occurs with probability $\varepsilon$ in either state, so observing $s_{\mu_0}$ leaves the posterior equal to $\mu_0<r$.
Under $\pi'$, if all predecessors chose $a_0$ and agent $i$ receives $s_{\mu_0}$, her posterior is
\[
    \frac{\mu_{0}\varepsilon^{i}}{\mu_{0}\varepsilon^{i}+(1-\mu_{0})\varepsilon}
    =\frac{\mu_{0}\varepsilon^{i-1}}{\mu_{0}\varepsilon^{i-1}+1-\mu_{0}}
    \leq \mu_{0}<r.
\]

Under $\pi$, action $a_1$ is taken in state $H$ precisely when the common signal is $s_1$. Hence,
\[
    V_{i}^{\mathcal{D}}(\pi,\bm{\sigma}^{*})=\mu_{0}(1-r)(1-\varepsilon).
\]
Under the independent structure $\pi'=\pi''$, agent $i$ takes $a_1$ in state $H$ whenever at least one of the first $i$ signals is $s_1$. It follows that
\[
    V_{i}^{\mathcal{D}}(\pi',\bm{\sigma}^{**})=\mu_{0}(1-r)(1-\varepsilon^{i}).
\]
For every $i\geq2$,
\[
    V_{i}^{\mathcal{D}}(\pi',\bm{\sigma}^{**})-V_{i}^{\mathcal{D}}(\pi,\bm{\sigma}^{*})
    =\mu_{0}(1-r)(\varepsilon-\varepsilon^{i})>0.
\]
Therefore, the marginal Blackwell comparisons are satisfied, but $\pi\succsim_S\pi'$ does not hold.
The failure comes from information persistence across agents.
Under $\pi$, an uninformative realization indicates that no agent in the sequence receives conclusive information, whereas under $\pi''$ each additional agent provides another independent opportunity to reveal the state. \qed
\end{example}

\subsection{Heterogeneous Decision Problems} \label{sec:heterogeneous}
%%%%%%%%%%%%%%%%%%%%%%%%%%%%%%%%%%%%%%%%%%%%%%%%%%%%%%%%%%%%%%%%%%%%%%%%%%%%%%
We assume that all agents are homogeneous, that is, each agent faces the same decision problem.
In this section, we discuss how our results persist when we relax this assumption.

Let $\mathcal{D}_{i}=(A_{i},u_{i})$ be the agent $i$'s decision problem.
\begin{proposition} \label{prop_characterization_hetero}
    $\pi\succsim_{S}\pi'$ holds if and only if $V_{i}^{\mathcal{D}}(\pi,\bm{\sigma}^{*})\geq \overline{V}_{i}^{\mathcal{D}}(\pi')$ for any decision problem $\mathcal{D}=(\mathcal{D}_{i})_{i\in\mathbb{N}}$, any agent $i$, and any equilibrium $\bm{\sigma}^{*}$ under $(\mathcal{D},\pi)$.
\end{proposition}
Thus, the same consequence from Theorem \ref{thm_characterization} holds, and hence, Corollary \ref{observation_necessity_of_conclusive} is still a necessary condition.

However, the following Example \ref{example_hetero} shows that our sufficient condition in Theorem \ref{thm_sufficient_condition} does not hold under the heterogeneous decision problems.
\begin{example}\label{example_hetero}
%     Suppose that an information structure $\pi$ consists of $S=\{s_{0},s_{\mu_{0}}, s_{1}\}$, $\pi(s_{1}|H)=\pi(s_{0}|L)=1-\varepsilon$, and $\pi(s_{\mu_{0}}|H)=\pi(s_{\mu_{0}}|L)=\varepsilon$, which induces $\supp{\mu}=\{0,\mu_{0},1\}$.
% In addition, we consider another information structure $\pi'$ such that $S'=\{s'_{l},s'_{1}\}$, $\pi'(s'_{1}|H)=1-\varepsilon$, $\pi'(s'_{1}|L)=\varepsilon$, and $\pi'(s'_{l}|L)=1$, which induces $\supp{\mu'}=\{l,1\}$ where $l=\frac{\mu_{0}\varepsilon}{\mu_{0}\varepsilon+1-\mu_{0}}$.
Consider two information structures $\pi:\Omega\to\Delta(\{s_{0},s_{\mu_{0}},s_{1}\})$ and $\pi':\Omega\to\Delta(\{s'_{l},s'_{1}\})$ defined for fixed $\varepsilon\in(0,1)$ as follows:
\begin{table}[H]
\centering
\makebox[\textwidth][c]{%
\makebox[0pt][r]{%
\begin{tabular}{l|ccc}
\Xhline{1.2pt}
$\pi$ & $s_{0}$ & $s_{\mu_{0}}$ & $s_{1}$ \\
\hline
$H$ & $0$ & $\varepsilon$ & $1-\varepsilon$ \\
$L$ & $1-\varepsilon$ & $\varepsilon$ & $0$ \\
\Xhline{1.2pt}
\end{tabular}
}%
\hspace{1.5cm}%
\makebox[0pt][l]{%
\begin{tabular}{l|cc}
\Xhline{1.2pt}
$\pi'$ & $s'_{l}$ & $s'_{1}$ \\
\hline
$H$ & $\varepsilon$ & $1-\varepsilon$ \\
$L$ & $1$ & $0$ \\
\Xhline{1.2pt}
\end{tabular}
}%
}
\end{table}
\noindent
The information structures induce $\supp{\mu}=\{0,\mu_{0},1\}$ and $\supp{\mu'}=\{l,1\}$, respectively, where $l=\frac{\mu_{0}\varepsilon}{\mu_{0}\varepsilon+1-\mu_{0}}$.
Then, by the construction, we have $\pi\succsim_{B} \pi'$ and $\pi(\mu=1)=\pi'(\mu=1)$.

We now consider the following decision problem $(\mathcal{D}_{1},\mathcal{D}_{2})$:
Let $\mathcal{D}_{1}=(A_{1}, u_{1})$, where $A_{1}=\{a_{0},a_{1}\}$, $u_{1}(a_{0},\omega)=0$, $u_{1}(a_{1},H)=1-r_{1}$, $u_{1}(a_{1},L)=-r_{1}$, and $r_{1}\in (l,\mu_{0})$. 
Additionally, let $\mathcal{D}_{2}=(A_{2}, u_{2})$, where $A_{2}=\{b_0,b_1\}$, $u_{2}(b_0,\omega)=0$, $u_{2}(b_1,H)=1-r_{2}$, $u_{2}(b_1,L)=-r_{2}$, and $r_{2}\in (\frac{\mu_{0}}{\mu_{0}+(1-\mu_{0})\varepsilon},1)$.
Agent $1$ faces $\mathcal{D}_{1}$ and agent 2 faces $\mathcal{D}_{2}$.

Take any equilibrium $\bm{\sigma}$ under $\pi$.
Then, agent $1$ chooses action $a_{1}$ if and only if she receives signals either $s_{1}$ or $s_{\mu_{0}}$.
Now, agent $2$'s posterior belief after observing $a_{1}$ and $s_{\mu_{0}}$ is below the cutoff $r_{2}$, and thus agent $2$ chooses action $b_{1}$ if and only if he observes a conclusive signal $s_{1}$. 
Thus, the expected payoffs are $V_{1}(\pi,\bm{\sigma})=\mu_{0}(1-r_{1})-(1-\mu_{0})\varepsilon r_{1}$ and $V_{2}(\pi,\bm{\sigma})=\mu_{0}(1-\varepsilon)(1-r_{2})$.

By contrast, take any equilibrium $\bm{\sigma}'$ under $\pi'$.
Then, agent $1$ chooses action $a_{1}$ if and only if she observes a conclusive signal $s'_{1}$.
Given this, agent 2 can infer agent 1's private signal from her action.
Thus, agent $2$ chooses action $b_{1}$ if and only if agent $1$ chose action $a_{1}$ or agent 2's private signal is $s'_{1}$.
The expected payoffs are given by $V_{1}(\pi',\bm{\sigma}')=\mu_{0}(1-\varepsilon)(1-r_{1})$ and $V_{2}(\pi',\bm{\sigma}')=\mu_{0}(1-\varepsilon^{2})(1-r_{2})$.

Therefore, $V_{1}(\pi,\bm{\sigma})>V_{1}(\pi',\bm{\sigma}')$ but $V_{2}(\pi,\bm{\sigma})<V_{2}(\pi',\bm{\sigma}')$ even though $\supp{\mu}=\{0,\mu_{0},1\}$ and $\pi\succsim_{B} \pi'$.
This indicates that Theorem \ref{thm_sufficient_condition} does not hold under the heterogeneous decision problems. \qed
\end{example}

%%%%%%%%%%%%%%%%%%%%%%%%%%%%%%%%%%%%%%%%%%%%%%%%%%%%%%%%
%%%%%%%%%%%%%%%%%%%%%%%%%%%%%%%%%%%%%%%%%%%%%%%%%%%%%%%%%%%%%
\section{Proof of Theorem \ref{prop: binary binary}} \label{sec: proof binary}
When the equilibrium strategy profile need not be specified explicitly, we sometimes write $V_i^{\mathcal{D}}(\pi)$ in place of $V_i^{\mathcal{D}}(\pi,\bm{\sigma})$.

% Tie breaking
We first formalize the notion of equilibrium uniqueness without tie-breaking.
\begin{proposition}\label{prop: pure strategy equilibrium}
    Take $\pi\in \Pi^{B}$ and $\mathcal{D}\in \mathscr{D}^{B}$. Let $\bm{\sigma}^{*}$ be a pure-strategy equilibrium of $(\mathcal{D},\pi)$. Then, under $(\mathcal{D},\pi,\bm{\sigma}^{*})$, each agent's posterior belief after observing her history and private signal belongs to the set
    \begin{align*}
\left\{\frac{\mu_0[\pi(s_{h}|H)]^n[\pi(s_{l}|H)]^m}{\mu_0[\pi(s_{h}|H)]^n[\pi(s_{l}|H)]^m+(1-\mu_0)[\pi(s_{h}|L)]^n[\pi(s_{l}|L)]^m}\,\middle|\,
\begin{array}{c}
(n,m)\in\mathbb{Z}_+^2\\
1\leq n+m\le N
\end{array}
\right\}.
\end{align*}
\end{proposition}
\begin{proof}[Proof of Proposition \ref{prop: pure strategy equilibrium}]
    Since $\bm{\sigma}^{*}$ is a pure strategy equilibrium and $\pi(s_{h}|H)\geq \pi(s_{h}|L)$, for every agent $i$ and every history of observed actions $\bm{a}\in A^{i-1}$, the equilibrium action rule of agent $i$ must be one of the following:
    (i) choose $a_1$ after observing $s_{h}$ and $a_0$ after observing $s_{l}$, (ii) choose $a_0$ regardless of the realized signal, or (iii) choose $a_1$ regardless of the realized signal.
    
    In cases (ii) and (iii), agent $i$'s action is independent of her private signal and thus provides no information to subsequent agents. In case (i), agent $i$'s action perfectly reveals her private signal, so every subsequent agent can infer that signal from the observed action.
    Hence, for every agent $i$, her posterior belief coincides with the posterior that would arise from observing $k$ private signals for some $k\leq i$. Consequently, the posterior must be an element of the set above. 
\end{proof}

% Family of binary decision problems
To establish the theorem, we first reduce the class of binary decision problems that need to be considered. Since agents' behavior depends only on the cutoff belief
\[
\frac{u(a_0,L)-u(a_1,L)}
{u(a_0,L)-u(a_1,L)+u(a_1,H)-u(a_0,H)},
\]
it is without loss of generality to focus on the family of decision problems $(\mathcal{D}_r)_{r\in(0,1)}\subset\mathscr{D}^{B}$ defined as follows:
for each $r\in (0,1),$ let $\mathcal{D}_r=(A,u_r)$ be the following decision problem: $u_r(a_0,H)=u_r(a_0,L)=0$, $u_r(a_1,H)=1-r$, and $u_r(a_1,L)=-r$.
\begin{lemma}\label{lem: simple decision problem}
  Let $\pi,\pi'\in \Pi^{B}$. 
  Take any $\mathcal{D}\in \mathscr{D}^{B}$ which induces no tie-break under $\pi$ and $\pi'$. 
  Then, for each $i$, $V_{i}^{\mathcal{D}}(\pi)\geq V_{i}^{\mathcal{D}}(\pi')$ if $V_i^{\mathcal{D}_{r}}(\pi)\geq V_i^{\mathcal{D}_{r}}(\pi')$, where $r=\frac{u(a_0,L)-u(a_1,L)}{u(a_0,L)-u(a_1,L)+u(a_1,H)-u(a_0,H)}$.
\end{lemma}
\begin{proof}[Proof of Lemma~\ref{lem: simple decision problem}]
Since the equilibrium is essentially unique under both $\pi$ and $\pi'$ in $\mathcal{D}$, and since each agent's optimal action as a function of her posterior belief is identical in $\mathcal{D}$ and $\mathcal{D}_r$, the equilibrium strategy profile under $\pi$ in $\mathcal{D}$ must coincide with the equilibrium strategy profile under $\pi$ in $\mathcal{D}_r$. Likewise, the equilibrium strategy profile under $\pi'$ in $\mathcal{D}$ must coincide with the equilibrium strategy profile under $\pi'$ in $\mathcal{D}_r$. Therefore, for every agent $i$, her equilibrium strategy is the same in $\mathcal{D}$ and $\mathcal{D}_r$ under both information structures.

Fix an arbitrary agent $i$. Let $p$ and $q$ be the equilibrium probabilities that agent $i$ chooses action $a_1$ under $\pi$ when the true state is $H$ and $L$, respectively. Likewise, let $p'$ and $q'$ be the corresponding equilibrium probabilities under $\pi'$.
Then, we have
\begin{align*}
    & V_i^{\mathcal{D}_{r}}(\pi)=\mu_0p(1-r)-(1-\mu_0)qr, \\
    & V_i^{\mathcal{D}_{r}}(\pi')=\mu_0p'(1-r)-(1-\mu_0)q'r, \\
    & V_{i}^{\mathcal{D}}(\pi)=\mu_0[pu(a_1,H)+(1-p)u(a_0,H)]+(1-\mu_0)[qu(a_1,L)+(1-q)u(a_0,L)], \\
    & V_{i}^{\mathcal{D}}(\pi')=\mu_0[p'u(a_1,H)+(1-p')u(a_0,H)]+(1-\mu_0)[q'u(a_1,L)+(1-q')u(a_0,L)].
\end{align*}
Since $V_{i}^{\mathcal{D}_r}(\pi)\geq V_{i}^{\mathcal{D}_r}(\pi')$, we have
\begin{align*}
V_i^{\mathcal{D}_{r}}(\pi)-V_i^{\mathcal{D}_{r}}(\pi') = \mu_0(1-r)(p-p') - (1-\mu_0)r(q-q')\geq 0.
\end{align*}
By the definition of $r$, we have
\begin{align*}
(1-r)\bigl[u(a_0,L)-u(a_1,L)\bigr]=r\bigl[u(a_1,H)-u(a_0,H)\bigr].
\end{align*}
Hence,
\begin{align*}
V_i^{\mathcal D}(\pi)-V_i^{\mathcal D}(\pi')&=\mu_0(p-p')\bigl[u(a_1,H)-u(a_0,H)\bigr] \\
&\quad\quad+(1-\mu_0)(q-q')\bigl[u(a_1,L)-u(a_0,L)\bigr]\\
&= \bigl[u(a_0,L)-u(a_1,L)\bigr]
\left(
\mu_0\frac{1-r}{r}(p-p')
-(1-\mu_0)(q-q')
\right) \\
&\geq 0
\end{align*}
\end{proof}
%%%%%%%%%%%%%%%%%%%%%
Theorem \ref{prop: binary binary} is easy to prove when $\pi$ contains a conclusive signal, that is, when $\pi(s_{h}|H)=1$ or $\pi(s_{h}|L)=0$, or when $\pi'$ is completely uninformative, that is, when $\pi'(s_{h}|H)=\pi'(s_{h}|L)$. Thus, it suffices to consider the remaining cases.

%%%%%%%%%%%%%%%%%%%%%%%%
% Proof Sketch
In the following, the proof proceeds as follows. 
We first show that if $\pi$ and $\pi'$ are sufficiently "close", then the equilibrium strategies under $\pi$ and $\pi'$ coincide on the equilibrium path. For such pairs of information structures, we prove that $\pi \succsim_B \pi'$ implies $V_i^{\mathcal{D}_{r}}(\pi)\geq V_i^{\mathcal{D}_{r}}(\pi')$ for every agent $i$ (Proposition \ref{prop: same r region}).

Next, for an arbitrary pair $\pi \succsim_B \pi'$, we construct a path from $\pi$ to $\pi'$ along which information becomes gradually less informative in the sense of Blackwell. 
By Proposition~\ref{prop: same r region}, agents' payoffs decrease as long as the on-path equilibrium strategy remains unchanged. 
Therefore, it remains to show that payoffs do not jump at points where the equilibrium strategy changes, or at points where ties arise. 
Proposition \ref{prop: continuity} establishes precisely this continuity property.
Combining them, it follows that agents' payoffs respect the Blackwell order in all cases.

%%%%%%%%%%%%%%%%%%%%%%%%
% Monotonicity within the same region
To analyze the case in which $\pi$ and $\pi'$ are sufficiently "close", we first define the notion of the same \(r\)-region associated with each decision problem \(\mathcal{D}_{r}\). 
For each $r\in (0,1)$, let $K_r=\frac{1-\mu_{0}}{\mu_{0}}\cdot \frac{r}{1-r}$. 
Fix $i$ and agent $i$'s information $(\bm{a},s)\in A^{i-1}\times S$. 
Then, agent $i$ takes $a_0$ under $(\mathcal{D}_{r},\pi)$ if  $\frac{\mathbb{P}((\bm{a},s)|H)}{\mathbb{P}((\bm{a},s)|L)}\leq K_r$ and takes $a_1$ if $\frac{\mathbb{P}((\bm{a},s)|H)}{\mathbb{P}((\bm{a},s)|L)}\geq K_r$ at the equilibrium.
In addition, if we only focus on pure strategy equilibrium, we have
\begin{align*}
    \frac{\mathbb{P}((\bm{a},s)|H)}{\mathbb{P}((\bm{a},s)|L)}\in \{x^ny^m\mid (n,m)\in\mathbb{Z}_+^2, 1\leq n+m\leq N\},
\end{align*}
where $x=\frac{\pi(s_{h}|H)}{\pi(s_{h}|L)}$ and $y=\frac{\pi(s_{l}|H)}{\pi(s_{l}|L)}$ by the same argument as Proposition \ref{prop: pure strategy equilibrium}.

Consider the subset $X=\{(x,y)\in\mathbb{R}^2 \mid x>1,\ 0<y<1\}$ of $\mathbb{R}^{2}$. 
For every point $(x,y)\in X$, there exists a unique pair of signal probabilities satisfying $
x=\frac{\pi(s_{h}|H)}{\pi(s_{h}|L)}$ and $y=\frac{1-\pi(s_{h}|H)}{1-\pi(s_{h}|L)}$, with $0<\pi(s_{h}|L)<\pi(s_{h}|H)<1$.
Hence, the set of full-support and not no information in $\Pi^{B}$ (hereafter denoted by $\overline{\Pi}^{B}$) can be identified one-to-one with the points in $X$. 
Define $\varphi:\overline{\Pi}^{B}\to X$ as such mapping, that is, $\varphi(\pi)=(\frac{\pi(s_{h}|H)}{\pi(s_{h}|L)},\frac{1-\pi(s_{h}|H)}{1-\pi(s_{h}|L)})$. 
Then, we have $\varphi^{-1}(x,y)=\pi^{*}$, where $\pi^{*}(s_{h}|H)=\frac{x(1-y)}{x-y}$ and $\pi^{*}(h|L)=\frac{1-y}{x-y}$.

Moreover, for any $\pi,\pi'\in\overline{\Pi}^{B}$, let $(x,y)=\varphi(\pi)$ and $(x',y')=\varphi(\pi')$. 
Then, $\pi \succsim_B \pi'$ if and only if $x\geq x'$ and $y\leq y'$.

Identify each information structure $\pi\in\overline{\Pi}^{B}$ with the pair $(\pi(s_{h}|H),\pi(s_{h}|L))$. 
Then, we endow $\overline{\Pi}^{B}$ with the subspace topology inherited from $\mathbb{R}^2$. 
Note that the map $\varphi:\overline{\Pi}^{B}\to X$ is a diffeomorphism.

Now fix $r\in(0,1)$.
Partition $X$ into finitely many regions by the curves $x^n y^m = K_r$, where $(n,m)\in\mathbb{Z}_+^2$ and $1\le n+m\le N$.
Let $\operatorname{sgn}:\mathbb{R}\to\{-1,0,1\}$ denote the sign function, defined by $\operatorname{sgn}(z)=-1$ for $z<0$, $\operatorname{sgn}(z)=0$ for $z=0$, and $\operatorname{sgn}(z)=1$ for $z>0$.
%%%%%%%%%%%%%%%%%%%%%%%%%%%%%
\begin{definition}
Fix $r\in(0,1)$.
We say that $(x,y)\in X$ and $(x',y')\in X$ belong to the {\it same $r$-region} if (i) $\operatorname{sgn}(x^ny^m-K_r) \neq 0$ and (ii) $\operatorname{sgn}(x^ny^m-K_r) =\operatorname{sgn}((x')^n(y')^m-K_r)$ for every $(n,m)\in\mathbb Z_+^2$ with $1\le n+m\le N$.
Also, we say that $\pi\in \overline{\Pi}^{B}$ and $\pi'\in \overline{\Pi}^{B}$ belong to the same $r$-region if $\varphi(\pi)$ and $\varphi(\pi')$ belong to the same $r$-region.
\end{definition}
%%%%%%%%%%%%%%%%%%%%%%%%%%%%%
Note that if $\pi\in \overline{\Pi}^{B}$ and $\pi'\in \overline{\Pi}^{B}$ belong to the same $r$-region, $\mathcal{D}_r$ induces no tie-break under $\pi$ and $\pi'$. 
In addition, as discussed above, in the absence of tie-breaking, the action chosen by each agent is determined solely by whether $x^n y^m<K_r$ or $x^n y^m>K_r$. 
Therefore, any two information structures whose corresponding points on $X$ lie in the interior of the same region induce the same equilibrium strategy profile on the path.
%Given $\pi\in \overline{\Pi}^{B}$, let $R(\pi)\subset \overline{\Pi}^{B}$ be the set of information structures that belongs to the same $r$-region with $\pi$. Similarly, given $(x,y)\in X$, let $R(x,y)\subset X$ be the set of information structures that belongs to the same $r$-region with $(x,y)$.
%%%%%%%%%%%%%%%%%%%%%%%%%%%%%%%%%%%%%%%%%%%%%%%%%%%%%%%%%%

We now establish monotonicity and continuity of payoffs within the same $r$-region.
\begin{proposition}\label{prop: same r region}
    Fix $r\in(0,1)$.
Suppose that $\pi\in \overline{\Pi}^{B}$ and $\pi'\in \overline{\Pi}^{B}$ belong to the same $r$-region.
 If $\pi \succsim_B \pi'$, it follows that $V_i^{\mathcal{D}_{r}}(\pi)\geq V_i^{\mathcal{D}_{r}}(\pi')$. In addition, $V_i^{\mathcal{D}_{r}}(\pi)$ is continuous on each $r$-region.
\end{proposition}
At first glance, Proposition \ref{prop: same r region} may appear straightforward. Indeed, within the same $r$-region, equilibrium strategies remain unchanged. Nevertheless, this observation alone is insufficient. Even when the equilibrium strategy profile is fixed, the resulting history may fail to be Blackwell ordered across information structures. Consequently, the monotonicity of agents' payoffs cannot be established directly and requires a more delicate argument.

To prove Proposition \ref{prop: same r region}, we first show that if agent $i$ follows her private signal, the signal profile $(s_1,\dots,s_{i-1})$ is uniquely determined.

\begin{lemma}\label{lem: unique history under no cascade}
    Take $\pi\in \overline{\Pi}^{B}$ and $r\in (0,1)$. Suppose that $y<K_r<x$. Let $\bm{\sigma}^*$ be a pure-strategy equilibrium of $(\mathcal{D}_r,\pi)$ such that, whenever a tie occurs, all agents choose the same action.  In this case, for each $i\geq2$, there exists a unique signal profile $(s_1,\ldots,s_{i-1})\in S^{i-1}$
such that, after the action history induced by $(s_1,\ldots,s_{i-1})$, agent $i$ follows her private signal; that is, agent $i$ chooses $a_1$ upon receiving $s_h$ and chooses $a_0$ upon receiving $s_l$.  
\end{lemma}
\begin{proof}[Proof of Lemma \ref{lem: unique history under no cascade}]
Without loss of generality, we assume that each agent takes $a_0$ whenever a tie occurs.
We prove this claim by mathematical induction. First, consider $i=2$. Note that agent 2 can perfectly infer agent 1's private signal from agent 1's action. Indeed, since $y<K_r<x$,
agent 1 strictly prefers to follow her private signal: she chooses $a_1$ upon receiving $s_h$ and $a_0$ upon receiving $s_l$. Then, if $xy\leq K_r$, agent 2 always chooses $a_0$ if $s_1=s_l$, and agent 2 follows her private signal if $s_1=s_h$ because $x^2>x>K_r$. If $xy>K_r$, agent 2 always chooses $a_1$ if $s_1=s_h$, and agent 2 follows her private signal if  $s_1=s_l$ because $y^2<y<K_r$.

Next, suppose that the statement holds at $i=j$. Consider the case of $i=j+1$. If agent $j$ chooses an action independently of her private signal, then her action is uninformative. As a result, agent $j+1$ faces exactly the same situation as $j$, and therefore chooses the same action as $j$ regardless of her own private signal. Hence, whenever agent $j+1$ follows her private signal, agent $j$ must also follow hers. In this case, by the induction hypothesis, the signal profile $(s_1,\ldots,s_{j-1})$
is uniquely determined.  Thus, when this signal profile is realized, agent $j$ and all subsequent agents can infer that the signals received by agents $1,\ldots,j-1$ were $(s_1,\ldots,s_{j-1})$.

Take this signal profile and let
\[
n_j=\left|\{k\in \{1,\dots,j-1\}\mid s_k=s_h\}\right| \quad \text{ and } \quad m_j=\left|\{k\in \{1,\dots,j-1\}\mid s_k=s_l\}\right|.
\]
Note that $n_j+m_j=j-1$. If $x^{n_j+1}y^{m_j+1}\leq K_r$, agent $j+1$ always chooses $a_0$ if $s_j=s_l$ because $x^{n_j}y^{m_j+2}<x^{n_j+1}y^{m_j+1}\leq  K_r$, and agent $j+1$ follows his private signal if $s_j=s_h$ because $x^{n_j+1}y^{m_j+1}\leq  K_r$ and $x^{n_j+2}y^{m_j}>x^{n_j+1}y^{m_j}>  K_r$, where the last inequality holds by the fact that agent $j$ chooses $a_1$ if she receives $s_h$ after $(s_1,\dots,s_{j-1})$. Similarly, if $x^{n_j+1}y^{m_j+1}> K_r$, agent $j+1$ always chooses $a_1$ if $s_j=s_h$ because $x^{n_j+2}y^{m_j}>x^{n_j+1}y^{m_j+1}>  K_r$, and agent $j+1$ follows her private signal if $s_j=s_l$ because $x^{n_j+1}y^{m_j+1}>  K_r$ and $x^{n_j}y^{m_j+2}<x^{n_j}y^{m_j+1}\leq   K_r$, where the last inequality holds by the fact that agent $j$ chooses $a_0$ if she receives $s_l$ after $(s_1,\dots,s_{j-1})$.
In either case, there exists a unique signal profile under which agent $j+1$ follows her private signal. This completes the induction.
\end{proof}
%%%%%%%%%%%%%%%%%%%%%%%

Next, we establish two mathematical properties. Lemma \ref{lem: increasing with respect to x} shows that, within a given $r$-region, each agent's payoff is weakly increasing in $x=\frac{\pi(s_h| H)}{\pi(s_h| L)}.$
Lemma \ref{lem: decreasing with respect to y}, on the other hand, shows that, within the same $r$-region, each agent's payoff is weakly decreasing in $y=\frac{\pi(s_l\mid H)}{\pi(s_l\mid L)}.$
%%%%%%%%%%%%%%%%%%%%%%%
\begin{lemma}\label{lem: increasing with respect to x}
Take $i\in \mathbb{N}$. Let $\bm{b}$ be a sequence with length $i$. Suppose $b_j\in \{0,1\}$ for all  $j=1,\dots,i-1$ and $b_i=1$.  Assume that $p,q\in (0,1)$ satisfies $0<q<p<1.$
Define $c_j=\left|\{l\in \mathbb{N}\mid l<j,\,b_l=1\}\right|$. Suppose that $K_r\in \mathbb{R}$ satisfies
\begin{align*}
\begin{cases}
        p^{j-c_j}(1-p)^{c_j+1}<K_r  q^{j-c_j}(1-q)^{c_j+1} & \text{ if } b_j=0\\
        p^{j-c_j}(1-p)^{c_j+1}>K_r  q^{j-c_j}(1-q)^{c_j+1} & \text{ if } b_j=1\\
        1-p<K_r(1-q)\\
        p>K_rq
\end{cases}
\end{align*}
for all $j=1,\dots,i-1$.
Let $f(p)=\sum_{j=1}^ib_j p^{j-c_j}(1-p)^{c_j}$. Then, $(1-p)f'(p)<K_r(1-q)f'(q)$.
\end{lemma}
\begin{proof}[Proof of Lemma \ref{lem: increasing with respect to x}]
The proof is by induction. First, consider the case of $i=1$. Then, $b_1=1$, $c_1=0$, and $f(p)=p$. Thus, $(1-p)f'(p)=1-p<K_r(1-q)=K_r(1-q)f'(q)$.

Next, fix $n\in\mathbb{N}$ with $n\geq 2$ and suppose that the statement holds for all $i=1,\dots,n-1$. Consider the case $i=n$. Take an arbitrary sequence $\bm{b}$ of length $n$ such that $b_j\in\{0,1\}$ for all $j=1,\ldots,n-1$ and $b_n=1$.
Let $a$ be the largest integer satisfying $b_n=b_{n-1}=\cdots=b_{n-a}=1.$ By definition, we have $0\leq a\leq n-1$. 
Note that, by the definition of $c_{j}$, we have $j-c_j=n-c_n$ for all $j=n,n-1,\dots,n-a$. Then, it follows that
\begin{align*}
    f(p)&=\sum_{j=1}^{n-1-a}b_jp^{j-c_j}(1-p)^{c_j}+\sum_{j=n-a}^n b_jp^{j-c_j}(1-p)^{c_j}\\
    &=\sum_{j=1}^{n-1-a}b_jp^{j-c_j}(1-p)^{c_j}+\sum_{j=n-a}^n p^{n-c_n}(1-p)^{c_n-n+j}\\
    &=\sum_{j=1}^{n-1-a}b_jp^{j-c_j}(1-p)^{c_j}+ \left(p^{n-c_n-1}(1-p)^{c_n-a}[1-(1-p)^{a+1}]\right).
\end{align*}
Let $h(p)=\sum_{j=1}^{n-1-a}b_jp^{j-c_j}(1-p)^{c_j}$.\footnote{When $a=n-1$, we define $h(p)=0$.} Taking the derivative with respect to $p$ and multiplying by $1-p$ imply that $(1-p)f'(p)$ can be written as 
\begin{align*}
    & (1-p)h'(p) +p^{n-c_n-2}(1-p)^{c_n-a} \Bigl( \bigl[(n-c_n-1)(1-p)-(c_n-a)p\bigr] \bigl[1-(1-p)^{a+1}\bigr] \\
    &\hspace{8cm} +(a+1)p(1-p)^{a+1} \Bigr)\\
    % &=(1-p)h'(p) \\
    % &\quad +p^{n-c_n-1}(1-p)^{c_n-a} \Bigl( \bigl[(n-a-1)(1-p)-(c_n-a)\bigr] \cdot \sum_{j=0}^{a}(1-p)^j \\
    % &\hspace{8cm} +(a+1)(1-p)^{a+1} \Bigr)\\
    &=(1-p)h'(p) \\
    &\quad +p^{n-c_n-1}(1-p)^{c_n-a} \Bigl(-(c_n-a)+(n-c_n-1)\sum_{j=1}^{a}(1-p)^j+n(1-p)^{a+1}\Bigr).
\end{align*}
Since $n-1-a\leq n-1$, the induction hypothesis implies that $h(p)=0$ or $(1-p)h'(p)<K_r(1-q)h'(q)$. We will later show that the following inequalities hold. 
(i): $p^{n-c_n-1}(1-p)^{c_n-a}>K_rq^{n-c_n-1}(1-q)^{c_n-a}$ if $c_n\neq a$ and (ii): $p^{n-c_n-1}(1-p)^{c_n-a+j}<K_rq^{n-c_n-1}(1-q)^{c_n-a+j}$ for all $j\geq 1$.
Assuming for the moment that these inequalities hold, and noting that $-(c_n-a)=0$ if $c_n=a$ and $-(c_n-a)<0$ if $c_n\neq a$, $n-c_n-1\geq 0$, and $n>0$, we obtain
\[
    (1-p)f'(p)<K_r(1-q)f'(q).
\]

Therefore, it remains to verify inequalities (i) and (ii). We begin with (i). Since $c_n\neq a$, we have $a<n-1$ and $c_{n-a-1}=c_{n-a}=c_n-a\neq 0$.\footnote{Note that $b_{n-a-1}=0$ holds by the definition of $a$. Thus, $c_{n-a-1}=c_{n-a}.$} 
Hence, there exists at least one entry equal to $1$ among $b_{n-a-2}, b_{n-a-3},\dots.$
Let $b_m$ be the first such entry when moving backward from $n-a-2$. 
By assumption,
\[
p^{m-c_m}(1-p)^{c_m+1} > K_rq^{m-c_m}(1-q)^{c_m+1}.
\]
Since $m$ is the last index prior to $n-a-1$ at which $b_m=1$, we have $c_m=c_{n-a-1}-1=c_n-a-1$.
Hence,
\[
p^{m-c_n+a+1}(1-p)^{c_n-a} > K_rq^{m-c_n+a+1}(1-q)^{c_n-a}. 
\]
$1-p<K_r(1-q)$ and $p>K_r q$ implies $p>q$.
Since  $n-m-a-2\geq n-(n-a-2)-a-2=0$ and $p>q$, it follows that
\begin{align*}
    p^{n-c_n-1}(1-p)^{c_n-a}
    & = p^{n-m-a-2}p^{m-c_n+a+1}(1-p)^{c_n-a}\\
    & \geq q^{n-m-a-2}p^{m-c_n+a+1}(1-p)^{c_n-a}\\
    & > K_r q^{n-m-a-2}q^{m-c_n+a+1}(1-q)^{c_n-a} = K_r q^{n-c_n-1}(1-q)^{c_n-a}.
\end{align*}
This establishes (i).

Next, show (ii). Take $j\geq 1$. By assumption, $b_{n-a-1}=0$ implies\footnote{If $a=n-1$, this inequality is equivalent to $1-p<K_r(1-q)$, which is already assumed. Hence, we can ignore this case.}
\[
p^{(n-a-1)-c_{n-a-1}}(1-p)^{c_{n-a-1}+1}<K_rq^{(n-a-1)-c_{n-a-1}}(1-q)^{c_{n-a-1}+1}.
\]
Since $c_{n-a-1}=c_n-a$, we have
\[
p^{n-c_n-1}(1-p)^{c_{n}-a+1}<K_rq^{n-c_n-1}(1-q)^{c_{n}-a+1}.
\]
Note that $1-p<1-q$, then, it follows that 
\begin{align*}
    p^{n-c_n-1}(1-p)^{c_n-a+j}
    &= p^{n-c_n-1}(1-p)^{c_n-a+1}(1-p)^{j-1}\\
    &\leq  p^{n-c_n-1}(1-p)^{c_n-a+1}(1-q)^{j-1}\\
    &<K_rq^{n-c_n-1}(1-q)^{c_n-a+1}(1-q)^{j-1} = K_r q^{n-c_n-1}(1-q)^{c_n-a+j}.
\end{align*}
Therefore, (ii) holds.

Hence, $(1-p)f'(p)<K_r(1-q)f'(q),$ as desired.
\end{proof}
%%%%%%%%%%%%%%%%%%%%%%%
\begin{lemma}\label{lem: decreasing with respect to y}
Take $i\in \mathbb{N}$. Let $\bm{b}$ be a sequence with length $i$. Suppose $b_j\in \{0,1\}$ for all  $j=1,\dots,i-1$ and $b_i=1$.  Assume that $p,q\in (0,1)$ satisfies $0<q<p<1.$
Define $c_j=\left|\{l\in \mathbb{N}\mid l<j,\,b_l=1\}\right|$. Suppose that $K_r\in \mathbb{R}_+$ satisfies
\[
\begin{cases}
        p^{c_j+1}(1-p)^{j-c_j}>K_r  q^{c_j+1}(1-q)^{j-c_j} & \text{ if } b_j=0\\
       p^{c_j+1}(1-p)^{j-c_j}<K_r  q^{c_j+1}(1-q)^{j-c_j} & \text{ if } b_j=1\\
        1-p<K_r(1-q)\\
        p>K_rq
\end{cases}
\]
for all $j=1,\dots,i-1$.
Let $g(p)=\sum_{j=1}^ib_j p^{c_j}(1-p)^{j-c_j}$. Then, $pg'(p)<K_rqg'(q)$.
\end{lemma}
\begin{proof}[Proof of Lemma \ref{lem: decreasing with respect to y}]
Prove is by induction. 
First, consider the case of $i=1$. Then, $b_1=1$, $c_1=0$, and $g(p)=1-p$. Thus, $pg'(p)=-p<-K_rq=K_rqg'(q)$.

Next, fix $n\in\mathbb{N},n\geq 2$ and suppose that the statement holds for all $i=1,\dots,n-1$.
Consider the case $i=n$.
Take an arbitrary sequence $\bm{b}$ of length $n$ such that $b_j\in\{0,1\}$ for all $j=1,\ldots,n-1$ and $b_n=1$.
Let $a$ be defined as in Lemma \ref{lem: increasing with respect to x}. Then, it follows that
\begin{align*}
    g(p)&=\sum_{j=1}^{n-1-a}b_jp^{c_j}(1-p)^{j-c_j}+ p^{c_n-a}(1-p)^{n-c_n-1}[1-p^{a+1}].
\end{align*}
Let $\overline{h}(p)=\sum_{j=1}^{n-1-a}b_jp^{c_j}(1-p)^{j-c_j}$.\footnote{When $a=n-1$, we define $\overline{h}(p)=0$.} Taking the derivative with respect to $p$ and multiplying by $p$ yields
\[
    pg'(p) = p\overline{h}'(p) + p^{c_n-a}(1-p)^{n-c_n-1} \Bigl( (c_n-a)-(n-c_n-1)\sum_{j=1}^{a}p^j-np^{a+1} \Bigr).
\]
Since $n-1-a\leq n-1$, the induction hypothesis implies that $\overline{h}(p)=0$ or $p\overline{h}'(p)<K_rq\overline{h}'(q)$. We will later show that the following inequalities hold. 
(i): $p^{c_n-a}(1-p)^{n-c_n-1}<K_rq^{c_n-a}(1-q)^{n-c_n-1}$ if $c_n\neq a$ and (ii): $p^{c_n-a+j}(1-p)^{n-c_n-1}>K_rq^{c_n-a+j}(1-q)^{n-c_n-1}$ for all $j\geq 1$.
Assuming for the moment that these inequalities hold, and noting that $c_n-a>0$ if $c_n\neq a$, $-(n-c_n-1)\leq 0$, and $-n<0,$ we obtain 
\begin{align*}
    pg'(p)<K_rqg'(q).
\end{align*}

Therefore, it remains to verify inequalities (i) and (ii). We begin with (i). Since $c_n\neq a$, we can take $b_m$ in the same way as Lemma \ref{lem: increasing with respect to x}. 
By assumption, $b_m=1$ implies
\begin{align*}
p^{c_m+1}(1-p)^{m-c_m}
<
K_rq^{c_m+1}(1-q)^{m-c_m}.
\end{align*}
Since  $c_m=c_{n-a-1}-1=c_n-a-1$,
\begin{align*}
p^{c_n-a}(1-p)^{m-c_n+a+1}
<
K_rq^{c_n-a}(1-q)^{m-c_n+a+1}. 
\end{align*}
Since  $n-m-a-2\geq n-(n-a-2)-a-2=0$ and $0<1-p<1-q$, it follows that
\begin{align*}
    p^{c_n-a}(1-p)^{n-c_n-1}&=p^{c_n-a}(1-p)^{m-c_n+a+1}(1-p)^{n-m-a-2}\\
    &\leq p^{c_n-a}(1-p)^{m-c_n+a+1}(1-q)^{n-m-a-2}\\
    &<K_r  q^{c_n-a}(1-q)^{m-c_n+a+1}(1-q)^{n-m-a-2} = K_r q^{c_n-a}(1-q)^{n-c_n-1}.
\end{align*}
This establishes (i).

Next, show (ii). Take $j\geq 1$. By assumption, $b_{n-a-1}=0$ implies\footnote{If $a=n-1$, this inequality is equivalent to $p>K_rq$, which is already assumed. Hence, we can ignore this case.}
\begin{align*}
p^{c_{n-a-1}+1}(1-p)^{(n-a-1)-c_{n-a-1}}>K_rq^{c_{n-a-1}+1}(1-q)^{(n-a-1)-c_{n-a-1}}.
\end{align*}
Since $c_{n-a-1}=c_n-a$, we have
\begin{align*}
p^{c_{n}-a+1}(1-p)^{n-c_n-1}>K_rq^{c_{n}-a+1}(1-q)^{n-c_n-1}.
\end{align*}
Note that $p>q$, then, it follows that 
\begin{align*}
    p^{c_n-a+j}(1-p)^{n-c_n-1}&=p^{j-1}p^{c_n-a+1} (1-p)^{n-c_n-1}\\
    &\geq  q^{j-1}p^{c_n-a+1} (1-p)^{n-c_n-1}\\
    &>K_rq^{j-1}q^{c_n-a+1} (1-q)^{n-c_n-1} = K_r  q^{c_n-a+j}(1-q)^{n-c_n-1}.
\end{align*}
Therefore, (ii) holds.

Hence, $pg'(p)<K_rqg'(q),$ as desired.
\end{proof}
%%%%%%%%%%%%%%%%%%%%%%%%%%%%%%%%%%%%%%%%%%%%%%%%%%%%%%%%%%
\begin{proof}[Proof of Proposition \ref{prop: same r region}]
First, show the continuity of $V_i^{\mathcal{D}_{r}}(\pi)$.
Suppose that $\pi \in \overline{\Pi}^{B}$ and $\pi' \in \overline{\Pi}^{B}$ belong to the same $r$-region.
  Let $\bm{\sigma}^{*}$ be an equilibrium under $(\mathcal{D}_{r},\pi)$. Since equilibrium is essentially unique, every signal profile induces a unique action profile. Let $\tau^{*}=(\tau_1^{*},\ldots,\tau_N^{*})$
denote the induced mapping from signal histories to actions. In particular, 
$\tau_i^{*}:S^i\to A $
gives the action chosen by agent $i$ as a function of the realized signal history $(s_1,\ldots,s_i)$. Then, we have
\begin{align*}
    V_i^{\mathcal{D}_{r}}(\pi)=\mu_0(1-r)\sum_{(s_1,\dots,s_i)\in S^i}\prod_{j=1}^i \pi(s_j|H)\chi_{T_i^{*}}-(1-\mu_0)r\sum_{(s_1,\dots,s_i)\in S^i}\prod_{j=1}^i \pi(s_j|L)\chi_{T_i^{*}},
\end{align*}
where $T_i^{*}=\{(s_1,\dots,s_i)\in S^i\mid \tau^{*}_i(s_1,\dots,s_i)=a_1\}$, and $\chi$ is an indicator function. As noted above, the equilibrium strategy profile is unchanged on the equilibrium path within a given $r$-region. Hence, the induced mapping $\tau^{*}$ is constant throughout the region. Since $V_i^{\mathcal{D}_{r}}(\pi)$ is determined by a fixed finite sum whose coefficients depend continuously on $\pi$, it follows that $V_i^{\mathcal{D}_{r}}(\pi)$ is continuous on each $r$-region. In addition, since the mapping $\varphi:(\pi(s_h|H),\pi(s_h|L))\mapsto\left(\frac{\pi(s_h|H)}{\pi(s_h|L)}, \frac{\pi(s_l|H)}{\pi(s_l|L)}\right)$ is a diffeomorphism, $V_i^{\mathcal D_r}(\pi)$ can be viewed as a differentiable function of $(x,y)$.

Next, show that $V_i^{\mathcal D_r}(\pi)\geq V_i^{\mathcal D_r}(\pi')$ if  $\pi \in \overline{\Pi}^{B}$ and $\pi' \in \overline{\Pi}^{B}$ belong to the same $r$-region, and $\pi \succsim_B \pi'$.
Suppose that $x<K_r$. In this case, under both $\pi$ and $\pi'$, every agent always chooses action $a_0$. Therefore, the expected payoff is zero under the two information structures, and the desired inequality holds trivially. Similarly, when $y>K_r$, every agent always chooses action $a_1$, and hence the expected payoff is the same under both $\pi$ and $\pi'$. Therefore, it suffices to consider the case
$y<K_r<x$.

Since $\pi \succsim_B\pi'$ is equivalent to
$x\geq x'$ and $y\leq y'$, it is enough to show that $V_i^{\mathcal D_r}(\pi)$ is non-decreasing with respect to $x$ and non-increasing with respect to $y$. First, show that $V_i^{\mathcal D_r}(\pi)$ is non-decreasing with respect to $x$.
Note that $\mathcal{D}_r$ induces no tie-break under $\pi$. Hence, by Lemma \ref{lem: unique history under no cascade}, there exists a unique signal profile $(s_1,\ldots,s_{i-1})$
under which agent $i$ follows her private signal. Fix this signal profile. Define a binary sequence $(b_1,\ldots,b_i)$ by
\begin{align*}
b_j=
\begin{cases}
0 & \text{ if } s_j=s_h,\\
1 & \text{ if } s_j=s_l,
\end{cases}
\qquad j=1,\ldots,i-1,
\end{align*}
and let $b_i=1$. For each $j$, define $c_j=\left|\{l\in \mathbb{N}\mid l<j,b_l=1\,\}\right|,$ $p=\pi(s_h|H)$, and $q=\pi(s_h|L)$. Note that $1>p>q>0$.
Agent $i$ chooses $a_1$ in either of the following cases.
(I): the realized signal profile $\bm{s}^*$ satisfies
$s_j^*=s_j$ for all  $j=1,\ldots,i-1$, and $s_i^*=s_h$. 
The probability of this event is
$(1-p)^{c_i}p^{i-c_i}$ at the state $H$ and $(1-q)^{c_i}q^{i-c_i}$ at $L$.
(II): there exists a smallest index $j\in\{1,\ldots,i-1\}$ such that
$s_j^*\neq s_j$, and this first deviation is favorable, namely
$s_j^*=s_h$. Equivalently, $b_j=1$. The probability of this event is $b_j(1-p)^{c_j}p^{j-c_j}$ at $H$ and $b_j(1-q)^{c_j}q^{j-c_j}$ at $L$.
Therefore,
\[
\mathbb{P}(a_i=a_1\mid H)=\sum_{j=1}^{i}
b_j(1-p)^{c_j}p^{j-c_j} \,\text{ and }\,
\mathbb{P}(a_i=a_1\mid L)=\sum_{j=1}^{i}
b_j(1-q)^{c_j}q^{j-c_j}.
\]
Let $f(p)=\sum_{j=1}^{i}
b_jp^{j-c_j}(1-p)^{c_j}$. Then, it follows that
\begin{align*}
    V_i^{\mathcal{D}_r}(\pi)=\mu_0(1-r)f(p)-(1-\mu_0)rf(q).
\end{align*}

Next, we verify that the conditions of Lemma~\ref{lem: increasing with respect to x} are satisfied. If $b_1=0$, agent 2 always chooses $a_0$ if agent 1 receives $s_l$. Hence, $xy<K_r$, or $p(1-p)<K_r q(1-q)$. Since $c_1=0$, this is equivalent to $p^{1-c_1}(1-p)^{c_1+1}<K_rq^{1-c_1}(1-q)^{c_1+1}$. If $b_1=1$, agent 2 always chooses $a_1$ if agent 1 receives $s_h$. Hence, $xy>K_r$, or $p(1-p)>K_r q(1-q)$. Since $c_1=0$, this is equivalent to $p^{1-c_1}(1-p)^{c_1+1}>K_rq^{1-c_1}(1-q)^{c_1+1}$.  Take $j\in\{2,\dots,i-1\}$. Suppose that $s_n^*=s_n$ for all $n=1,\dots,j-1$ and $s_j^*\neq s_j$.
In this case, agent $j$ is the first agent at whom a cascade occurs. If $s_j^*=s_h$, then agent $j+1$ chooses $a_1$ even when her private signal is $s_l$. Since among agents $1,\ldots,j-1$, exactly $c_j$ have received $s_l$ and the remaining $j-1-c_j$ have received $s_h$, the posterior likelihood ratio of agent $j+1$ after observing $s_l$ is $x^{\,j-1-c_j}y^{\,c_j}xy,$
which must exceed $K_r$. Hence, $x^{\,j-1-c_j}y^{\,c_j}xy>K_r.$ Equivalently, we have
\begin{align*}
    p^{j-c_j}(1-p)^{c_j+1}>K_rq^{j-c_j}(1-q)^{c_j+1}.
\end{align*}
If $s_j^*=s_l$, the analogous argument yields
\begin{align*}
    p^{j-c_j}(1-p)^{c_j+1}<K_rq^{j-c_j}(1-q)^{c_j+1}.
\end{align*}
Since $y<K_r<x$, we also have $1-p<K_r(1-q)$ and $p>K_rq$.
Hence, all the assumptions of Lemma \ref{lem: increasing with respect to x} are satisfied.

Since $p=\frac{x(1-y)}{x-y}$ and $q=\frac{1-y}{x-y}$, we have $\frac{\partial p}{\partial x}=-\frac{y(1-y)}{(x-y)^2}$ and $\frac{\partial q}{\partial x}=-\frac{1-y}{(x-y)^2}$. Hence,
\begin{align*}
    \frac{\partial }{\partial x}V_i^{\mathcal{D}_r}&=\mu_0(1-r)f'(p)\frac{\partial p}{\partial x}-(1-\mu_0)rf'(q)\frac{\partial q}{\partial x}\\
    &=\frac{1-y}{(x-y)^2}\frac{1}{1-q}\mu_0(1-r)[K_r(1-q)f'(q)-(1-p)f'(p)]\geq 0.
\end{align*}
The last inequality comes from Lemma \ref{lem: increasing with respect to x}.

Next, show that $V_i^{\mathcal D_r}(\pi)$ is non-increasing with respect to $y$. Take the same signal profile $(s_1,\dots,s_{i-1})$. 
Define a binary sequence $(b_1,\ldots,b_i)$ by
\begin{align*}
b_j=
\begin{cases}
0 & \text{ if } s_j=s_l,\\
1 & \text{ if } s_j=s_h,
\end{cases}
\qquad j=1,\ldots,i-1,
\end{align*}
and let $b_i=1$. For each $j$, define $c_j=\left|\{l\in \mathbb{N}\mid l<j,b_l=1\,\}\right|.$
Agent $i$ chooses $a_0$ in either of the following cases.
(I): the realized signal profile $\bm{s}^*$ satisfies
$s_j^*=s_j$ for all  $j=1,\ldots,i-1$, and $s_i^*=s_l$. 
The probability of this event is
$(1-p)^{i-c_i}p^{c_i}$ at the state $H$ and $(1-q)^{i-c_i}q^{c_i}$ at $L$.
(II): there exists a smallest index $j\in\{1,\ldots,i-1\}$ such that
$s_j^*\neq s_j$, and $s_j^*=s_l$. Equivalently, $b_j=1$. The probability of this event is $b_j(1-p)^{j-c_j}p^{c_j}$ at $H$ and $b_j(1-q)^{j-c_j}q^{c_j}$ at $L$.
Therefore,
\[
\mathbb{P}(a_i=a_0\mid H)=\sum_{j=1}^{i}
b_j(1-p)^{j-c_j}p^{c_j} \;\text{ and }\;
\mathbb{P}(a_i=a_0\mid L)=\sum_{j=1}^{i}
b_j(1-q)^{j-c_j}q^{c_j}.
\]
Let $g(p)=\sum_{j=1}^{i}
b_jp^{c_j}(1-p)^{j-c_j}$. Then, it follows that
\begin{align*}
    V_i^{\mathcal{D}_r}(\pi)=\mu_0(1-r)[1-g(p)]-(1-\mu_0)r[1-g(q)].
\end{align*}

A symmetric argument to that in Lemma \ref{lem: increasing with respect to x} shows that all the assumptions of Lemma \ref{lem: decreasing with respect to y} are satisfied. Since $\frac{\partial p}{\partial y}=-\frac{x(x-1)}{(x-y)^2}$ and $\frac{\partial q}{\partial y}=-\frac{x-1}{(x-y)^2}$, we have
\begin{align*}
     \frac{\partial }{\partial y}V_i^{\mathcal{D}_r}&=-\mu_0(1-r)g'(p)\frac{\partial p}{\partial y}+(1-\mu_0)rg'(q)\frac{\partial q}{\partial y}\\
    &=\frac{x-1}{(x-y)^2}\frac{1}{q}\mu_0(1-r)[pg'(p)-K_rq g'(q)]\leq 0.
\end{align*}
The last inequality comes from Lemma \ref{lem: decreasing with respect to y}.

Therefore, the payoff is weakly increasing in $x$ and weakly decreasing in $y$. Hence, the Blackwell order is respected within each $r$-region.
\end{proof}

%%%%%%%%%%%%%%%%%%%%%%%%%%%%%%%%%%%%%%%%%%%%%%%%%%%%%%%%%%%%%%%%%%%%%%%%%%%%%%%%%%%%%%%
% Continuity
The continuity property can be summarized in the following statement.
\begin{proposition}\label{prop: continuity}
      Take $\pi\in \overline{\Pi}^{B}$ and $r\in (0,1)$. Suppose that $y<K_r<x$. Let $\bm{\sigma}^0$ and  $\bm{\sigma}^{1}$ be pure-strategy equilibria of $(\mathcal{D}_r,\pi)$ such that, whenever a tie occurs, all agents choose the same action.  Then, $V_i^{\mathcal{D}_r}(\pi,\bm{\sigma}^0)=V_i^{\mathcal{D}_r}(\pi,\bm{\sigma}^{1})$.
\end{proposition}
%%%%%%%%%%%%
\begin{proof}[Proof of Proposition \ref{prop: continuity}]
Without loss of generality, we assume that all agents break ties in favor of $a_0$ under $\bm{\sigma}^0$ and in favor of $a_1$ under $\bm{\sigma}^1$. If $\mathcal{D}_r$ induces no tie-break, the statement holds since there is an essentially unique equilibrium. Therefore, we consider the case $\{(n,m)\in \mathbb{Z}_+^2 \mid x^n y^m=K_r, 1\leq n+m\leq i\}\neq \emptyset$. Among the elements of this set, let $(n_1,m_1)$ be one that minimizes $n+m$. Then, the strategies of agents $1,\ldots,n_1+m_1-1$ coincide on the equilibrium path under both $\bm{\sigma}^0$ and $\bm{\sigma}^1$.  If $n_1+m_1=i$, then agent $i$ is indifferent between $a_0$ and $a_1$ at the tie, so her expected payoff is independent of the tie-breaking rule. Therefore, $V_i^{\mathcal{D}_r}(\pi,\bm{\sigma}^0)=V_i^{\mathcal{D}_r}(\pi,\bm{\sigma}^{1})$. Thus, in what follows, we restrict attention to the case $n_1+m_1\leq i-1$. Note that $n_1+m_1\geq 2$ since $y<K_r<x$.

By Lemma \ref{lem: unique history under no cascade}, there exists a unique signal profile $(s_1,\ldots,s_{n_1+m_1-2})$ under which agent $n_1+m_1-1$ follows her private signal. 
We first show that among these $n_1+m_1-2$ signals, exactly $n_1-1$ are equal to $s_h$ and exactly $m_1-1$ are equal to $s_l$. That is, $\left|\{\,j<n_1+m_1-1 \mid s_j=s_h\,\}\right|=n_1-1$ and $\left|\{\,j<n_1+m_1-1 \mid s_j=s_l\,\}\right|=m_1-1.$ Suppose that $s_h$ appears at least $n_1$ times. Then, when agent $n_1+m_1-1$ receives $s_l$, her likelihood ratio is at least $x^{n_1}y^{m_1-1}$. Since $x^{n_1}y^{m_1}=K_r$ and $y<1$, we have $x^{n_1}y^{m_1-1}=\frac{K_r}{y}>K_r.$
Hence, agent $n_1+m_1-1$ would choose $a_1$ regardless of her private signal, contradicting the fact that she follows her private signal. Therefore, $s_h$ appears fewer than $n_1$ times. By a symmetric argument, it is also impossible for $s_l$ to appear at least $m_1$ times.

Under $\bm{\sigma}^0$, if the signal profile $(s_1,\ldots,s_{n_1+m_1-2})$ is realized and agent $n_1+m_1-1$ receives $s_l$, then even if agent $n_1+m_1$ receives $s_h$, the likelihood ratio is $x^{n_1-1}y^{m_1-1}yx=x^{n_1}y^{m_1}=K_r.$
Since ties are broken in favor of $a_0$ under $\bm{\sigma}^0$, all agents from agent $n_1+m_1$ onward choose $a_0$. On the other hand, if the signal profile $(s_1,\ldots,s_{n_1+m_1-2})$ is realized and both agents $n_1+m_1-1$ and $n_1+m_1$ receive $s_h$, then even if agent $n_1+m_1+1$ receives $s_l$, the likelihood ratio is $x^{n_1+1}y^{m_1}>K_r.$
Hence, all agents from agent $n_1+m_1+1$ onward choose $a_1$.
Therefore, under $\bm{\sigma}^0$, agent $i$ chooses $a_1$ if and only if one of the following events occurs.
\begin{enumerate}
\setlength{\itemsep}{0.5pt}
\setlength{\parskip}{0.5pt}
\item[0-(i)] A cascade toward $a_1$ occurs before agent $n_1+m_1-1$.
\item[0-(ii)] The signal profile $(s_1,\ldots,s_{n_1+m_1-2})$ is realized, and both agents $n_1+m_1-1$ and $n_1+m_1$ receive $s_h$.
\item[0-(iii)] The signal profile $(s_1,\ldots,s_{n_1+m_1-2})$ is realized, agent $n_1+m_1-1$ receives $s_h$, agent $n_1+m_1$ receives $s_l$, and thereafter a sequence of signals is realized under which agent $i$ chooses $a_1$.
\end{enumerate}

Under $\bm{\sigma}^1$, if the signal profile $(s_1,\ldots,s_{n_1+m_1-2})$ is realized and agent $n_1+m_1-1$ receives $s_h$, then even if agent $n_1+m_1$ receives $s_l$, the likelihood ratio is $x^{n_1-1}y^{m_1-1}xy=x^{n_1}y^{m_1}=K_r.$
Since ties are broken in favor of $a_1$ under $\bm{\sigma}^1$, all agents from agent $n_1+m_1$ onward choose $a_1$. On the other hand, if the signal profile $(s_1,\ldots,s_{n_1+m_1-2})$ is realized and both agents $n_1+m_1-1$ and $n_1+m_1$ receive $s_l$, then even if agent $n_1+m_1+1$ receives $s_h$, the likelihood ratio is $x^{n_1}y^{m_1+1}<K_r.$
Hence, all agents from agent $n_1+m_1+1$ onward choose $a_0$.
Therefore, under $\bm{\sigma}^1$, agent $i$ chooses $a_1$ if and only if one of the following events occurs.
\begin{enumerate}
\setlength{\itemsep}{0.5pt}
\setlength{\parskip}{0.5pt}
\item[1-(i)] A cascade toward $a_1$ occurs before agent $n_1+m_1-1$.
\item[1-(ii)] The signal profile $(s_1,\ldots,s_{n_1+m_1-2})$ is realized, and agent $n_1+m_1-1$ receives $s_h$.
\item[1-(iii)] The signal profile $(s_1,\ldots,s_{n_1+m_1-2})$ is realized, agent $n_1+m_1-1$ receives $s_l$, agent $n_1+m_1$ receives $s_h$, and thereafter a sequence of signals is realized under which agent $i$ chooses $a_1$.
\end{enumerate}
Since no tie occurs before agent $n_1+m_1$, the probabilities of event 0-(i) and 1-(i) are the same.

Moreover, note that the probability that $(s_1,\ldots,s_{n_1+m_1-2})$ is realized is $p^{n_1-1}(1-p)^{m_1-1}$ under state $H$, and $q^{n_1-1}(1-q)^{m_1-1}$ under state $L$, where $p=\pi(s_h|H)$ and $q=\pi(s_h|L)$.
Therefore, under $\bm{\sigma}^0$, the probability of event 0-(ii) is $p^{n_1+1}(1-p)^{m_1-1}$ conditional on state $H$, and $q^{n_1+1}(1-q)^{m_1-1}$ conditional on state $L$.
Similarly, under $\bm{\sigma}^1$, the probability of event 1-(ii) is $p^{n_1}(1-p)^{m_1-1}$ conditional on state $H$, and $q^{n_1}(1-q)^{m_1-1}$ conditional on state $L$.

Next, consider the probabilities of events 0-(iii) and 1-(iii). In either case, after the realization of $(s_1,\ldots,s_{n_1+m_1-2})$, the next two agents receive $s_h$ once and $s_l$ once. Consequently, by agent $n_1+m_1$, the signal $s_h$ has occurred exactly $n_1$ times, and the signal $s_l$ has occurred exactly $m_1$ times.
Furthermore, all agents from agent $n_1+m_1+1$ onward can infer this signal history from the observed action history. Indeed, under both 0-(iii) and 1-(iii), no cascade occurs before agent $n_1+m_1$.
Since $x^{n_1}y^{m_1}=K_r,$ the posterior likelihood ratio after these observations is exactly equal to the cutoff $K_r$.
Therefore, after conditioning on the history up to agent $n_1+m_1$ described above, the continuation game is equivalent to a social learning process with cutoff $K_r=1$. Hence, the conditional probability that agent $i$ chooses $a_1$ is equal to the probability that agent $i-(n_1+m_1)$ chooses $a_1$ in that process.
Let $\alpha^0$ and $\beta^0$ denote the probabilities that agent $i-(n_1+m_1)$ chooses action $a_1$ under the equilibrium that is consistent with $\bm{\sigma}^0$, that is, breaking ties in favor of $a_0$,
conditional on the state being $H$ and $L$, respectively, in the social learning process with cutoff $K_r=1$.
Similarly, let $\alpha^1$ and $\beta^1$ denote the corresponding probabilities under the equilibrium that is consistent with $\bm{\sigma}^1$.

Then, it follows that, conditional on state $H$, the probability that agent $i$ chooses $a_1$ under $\bm{\sigma}^1$ exceeds the corresponding probability under $\bm{\sigma}^0$ by
\begin{align*}
    & p^{n_1}(1-p)^{m_1-1}-p^{n_1+1}(1-p)^{m_1-1}+p^{n_1}(1-p)^{m_1}\alpha^1-p^{n_1}(1-p)^{m_1}\alpha^0\\
    & = p^{n_1}(1-p)^{m_1}(1+\alpha^1-\alpha^0).
\end{align*}
Similarly, conditional on state $L$, the probability that agent $i$ chooses $a_1$ under $\bm{\sigma}^1$ exceeds the corresponding probability under $\bm{\sigma}^0$ by
\begin{align*}
 q^{n_1}(1-q)^{m_1}(1+\beta^1-\beta^0).
\end{align*}
Hence, we have
\begin{align*}
    &V_i^{\mathcal{D}_r}(\pi,\bm{\sigma}^1)-V_i^{\mathcal{D}_r}(\pi,\bm{\sigma}^0)\\
    &=\mu_0(1-r)p^{n_1}(1-p)^{m_1}[1+\alpha^1-\alpha^0]-(1-\mu_0)rq^{n_1}(1-q)^{m_1}[1+\beta^1-\beta^0]\\
    &=\mu_0(1-r)q^{n_1}(1-q)^{m_1}[x^{n_1}y^{m_1}(1+\alpha^1-\alpha^0)-K_r(1+\beta^1-\beta^0)]\\
    &=\mu_0(1-r)q^{n_1}(1-q)^{m_1}K_r(\alpha^1-\alpha^0-\beta^1+\beta^0).
    \end{align*}
    The last equality holds by $x^{n_1}y^{m_1}=K_r$. 
    Hence, what we need to show is $\alpha^1-\alpha^0=\beta^1-\beta^0$.

Now suppose that $\{(n,m)\in \mathbb{Z}_+^2 : x^n y^m=K_r,\; n_1+m_1<n+m\le i\}=\emptyset.$
Then no tie occurs in the new social learning process starting from agent $n_1+m_1+1$. Hence, $\alpha^0=\alpha^1$ and $\beta^0=\beta^1$, and the desired claim follows immediately.
We therefore consider the case in which $\{(n,m)\in \mathbb{Z}_+^2 : x^n y^m=K_r,\; n_1+m_1<n+m\leq i\}\neq\emptyset.$
Let $(n_2,m_2)$ be an element of this set that minimizes $n+m$.
If $n_2+m_2=i$, then only agent $i-(n_1+m_1)$ in the new social learning problem may face a tie. Since the tie-breaking rule does not affect the payoff of an agent who is indifferent, we have
\begin{align*}
    \mu_0(1-r')\alpha^1-(1-\mu_0)r'\beta^1=\mu_0(1-r')\alpha^0-(1-\mu_0)r'\beta^0,
\end{align*}
where $r'\in \mathbb{R}$ is chosen so that $\frac{(1-\mu_0)r'}{\mu_0(1-r')}=1.$
Then, it follows that $\alpha^1-\alpha^0=\beta^1-\beta^0.$

If $n_2+m_2<i$, applying the same argument as above to the new social learning process, we obtain
\begin{align*}
  &  \alpha^1-\alpha^0=p^{\,n_2-n_1}(1-p)^{m_2-m_1}
\bigl(1+\alpha^{\prime 1}-\alpha^{\prime 0}\bigr)\\
&\beta^1-\beta^0=q^{\,n_2-n_1}(1-q)^{m_2-m_1}
\bigl(1+\beta^{\prime 1}-\beta^{\prime 0}\bigr),
\end{align*}
where $\alpha^{\prime 0}$ and $\beta^{\prime 0}$ denote the probabilities that agent $i-(n_2+m_2)$ chooses action $a_1$ under the equilibrium that is consistent with $\bm{\sigma}^0$
conditional on the state is $H$ and $L$, respectively, in the social learning process with cutoff $K_r=1$, and $\alpha^{\prime 1}$ and $\beta^{\prime 1}$ are the corresponding probabilities under the equilibrium that is consistent with $\bm{\sigma}^1$.
Since $x^{n_2-n_1}y^{m_2-m_1}=x^{n_2}y^{m_2}(x^{n_1}y^{m_1})^{-1}=1$, we have
\begin{align*}
  &  (\alpha^1-\alpha^0)-(\beta^1-\beta^0)\\
    &=q^{n_2-n_1}(1-q)^{m_2-m_1}[x^{n_2-n_1}y^{m_2-m_1}\bigl(1+\alpha^{\prime 1}-\alpha^{\prime 0}\bigr)-\bigl(1+\beta^{\prime 1}-\beta^{\prime 0}\bigr)]\\
   &= q^{n_2-n_1}(1-q)^{m_2-m_1}(\alpha^{\prime 1}-\alpha^{\prime 0}-\beta^{\prime 1}+\beta^{\prime 0}).
\end{align*}
Hence, what we need to show is $\alpha^{\prime 1}-\alpha^{\prime 0}=\beta^{\prime 1}-\beta^{\prime 0}$.

By repeatedly applying the same argument, we obtain analogous expressions at each subsequent tie. Since only finitely many agents can face a tie, this procedure must terminate after finitely many steps. Consequently, we have
\begin{align*}
    V_i^{\mathcal D_r}(\pi,\bm{\sigma}^1)-V_i^{\mathcal D_r}(\pi,\bm{\sigma}^0)=0.
\end{align*}
\end{proof}

%%%%%%%%%%%%%%
\begin{proof}[Proof of Theorem \ref{prop: binary binary}]
Take arbitrary $\pi,\pi' \in \Pi^B$ and $\mathcal D \in \mathscr D^B$ satisfying the conditions of Theorem \ref{prop: binary binary}. 
Let $r=\frac{u(a_0,L)-u(a_1,L)}{u(a_0,L)-u(a_1,L)+u(a_1,H)-u(a_0,H)}$.
Then, by Lemma \ref{lem: simple decision problem}, it suffices to show that
$V_i^{\mathcal D_r}(\pi)\geq V_i^{\mathcal D_r}(\pi')$.

We first consider the case in which $\pi$ contains a conclusive signal. If $\pi$ is fully informative, that is, $\pi(s_h|H)=1$ and $\pi(s_h|L)=0$, then the claim is immediate.
Next, suppose that $\pi(s_h|H)=1$ and $\pi(s_h|L)\in (0,1).$ 
In this case, at the equilibrium under $\pi$, each agent's behavior must be one of the following: (i) The agent always chooses $a_0$, regardless of both the history and her private signal. (ii) The agent chooses $a_1$ if and only if all preceding agents and the agent herself have received $s_h$.
If case (i) occurs, then every agent also always chooses $a_0$ under $\pi'$ by $\pi\succsim_B \pi' $. Hence, the expected payoff is the same under $\pi$ and $\pi'$.
If case (ii) occurs, then agent $i$ chooses $a_1$ if and only if the first $i$ agents all receive $s_h$. Therefore, agent $i$'s expected payoff under $\pi$ is equal to $\overline{V}_i^{D_r}(\pi).$
Thus, $V_i^{\mathcal{D}_r}(\pi)=\overline{V}_i^{D_r}(\pi)\geq \overline{V}_i^{D_r}(\pi')\geq V_i^{\mathcal{D}_r}(\pi').$ A symmetric argument establishes the case in which $\pi(s_h|H) \in (0,1)$ and $\pi(s_h|L)=0$.

Moreover, if agent $1$ chooses an action independently of her private signal under $\pi'$, then every subsequent agent, including agent $i$, also chooses an action independently of her private signal and histories. Hence, $V_i^{\mathcal{D}_r}(\pi')$ is equal to the payoff obtained under no information. It therefore follows immediately that $V_i^{\mathcal{D}_r}(\pi)\geq V_i^{\mathcal{D}_r}(\pi')$.

Therefore, it remains to consider the case in which $\pi,\pi' \in \overline{\Pi}^{B}$ and $(x,y)=\varphi(\pi)$ and $(x',y')=\varphi(\pi')$ satisfy $y\leq y' < K_r < x'\leq x$. We henceforth focus on this case.

Define a continuous function $c:[0,2]\to \{(a,b)\in \mathbb{R}^2 \mid 1<a,0<b<1\}$ by $c(t)=(x'+t(x-x'),y')$  and $c(t+1)=(x,y'+t(y-y'))$ for $t\in [0,1]$. Let $\pi^t=\varphi^{-1}(c(t))$ for $t\in [0,2]$. By construction, we have $\pi^s\succsim_B \pi^u$ if $s\geq u$.
For each $(n,m)\in\mathbb Z_+^2 $, the curve $a^n b^m = K_r$
intersects the path $c(t)$ at most once on $[0,1]$ and at most once on $[1,2]$, since the slope of the curve is negative. Hence, it has at most two intersections with $c(t)$.
Therefore,
\begin{align*}
    \left\{(a,b)\in\mathbb{R}^2 \mid a>1,\,0<b<1,\,a^n b^m = K_r\text{ for some }(n,m)\in\mathbb{Z}_+^2\text{ with }n+m\leq N\right\}
\end{align*}
and the path $(a,b)=c(t)$ have only finitely many intersection points.
Let $0<\underline{t}_1<\cdots<\underline{t}_n<1$ be all values of $t$ corresponding to such intersection points with $0<t<1$, and let $1<\overline{t}_1<\cdots<\overline{t}_m<2$ be all values of $t$ corresponding to such intersection points with $1<t<2$.
Finally, define $\underline{t}_0:=0$, $\underline{t}_{n+1}:=1$, $\overline{t}_{0}=1$, and $\overline{t}_{m+1}:=2$.

Note that if $t^1\geq t^2$ satisfies $(t^1,t^2)\in (\underline{t}_j,\underline{t}_{j+1})^2\cup(\overline{t}_j,\overline{t}_{j+1})^2\cup[0,\underline{t}_{1})^2\cup(\overline{t}_m,2]^2$ for some $j$, $\pi^{t^1}$ and $\pi^{t^2}$ are in the same $r$-region. Then, by Proposition \ref{prop: same r region}, $V_i^{\mathcal{D}_r}(\pi^{t^1})\geq V_i^{\mathcal{D}_r}(\pi^{t^2})$ holds.

Since each $r$-region is an open set, there exists $\underline{t}^*\in (\underline{t}_n,1)$ and $\overline{t}^*\in (1,\overline{t}_1)$ such that $\underline{t}^*$ and $\overline{t}^*$ are in the same $r$-region. Then, it follows that $V_i^{\mathcal{D}_r}(\pi^{\overline{t}^*})\geq V_i^{\mathcal{D}_r}(\pi^{\underline{t}^*})$ by Proposition \ref{prop: same r region}.

By construction, for all $t<\underline{t}_j$ sufficiently close to $\underline{t}_j$, the equilibrium action profile under $\pi^t$ coincides with the equilibrium under $\pi^{\underline{t}_j}$ in which ties are broken in favor of $a_0$. Therefore,
\begin{align*}
    \lim_{t\nearrow \underline{t}_j}V_i^{\mathcal{D}_r}(\pi^{t})=V_i^{\mathcal{D}_r}(\pi^{\underline{t}_j},\bm{\sigma}^0),
\end{align*}
 where $\bm{\sigma}^0$ is an equilibrium under $\pi^{\underline{t}_j}$ that breaks ties in favor of $a_0$.
 Similarly, we have
 \begin{align*}
    \lim_{t\searrow \underline{t}_j}V_i^{\mathcal{D}_r}(\pi^{t})=V_i^{\mathcal{D}_r}(\pi^{\underline{t}_j},\bm{\sigma}^1),
\end{align*}
 where $\bm{\sigma}^1$ is an equilibrium under $\pi^{\underline{t}_j}$ that breaks tie in favor of $a_1$.
Thus, Proposition \ref{prop: continuity} implies
  \begin{align*}
   \lim_{t\nearrow \underline{t}_j}V_i^{\mathcal{D}_r}(\pi^{t})=  \lim_{t\searrow \underline{t}_j}V_i^{\mathcal{D}_r}(\pi^{t})
\end{align*}
for all $j$. Analogously, for all $j$,
\begin{align*}
   \lim_{t\nearrow \overline{t}_j}V_i^{\mathcal{D}_r}(\pi^{t})=  \lim_{t\searrow \overline{t}_j}V_i^{\mathcal{D}_r}(\pi^{t})
\end{align*}

Repeatedly applying the monotonicity within each $r$-region and the continuity at each boundary point, we obtain
\[
    V_i^{\mathcal D_r}(\pi^{0})
    \leq V_i^{\mathcal D_r}(\pi^{\underline{t}^*}) 
    \leq V_i^{\mathcal D_r}(\pi^{\overline{t}^*}) 
    \leq V_i^{\mathcal D_r}(\pi^{2}).
\]
Therefore, we obtain $V_i^{\mathcal D_r}(\pi)\geq V_i^{\mathcal D_r}(\pi')$, which completes the proof.
\end{proof}
%%%%%%%%%%%%%%%%%%%%%%%%%%%%%%%%%%%%%%%%
\begin{remark}
Note that, even in the binary-signal case, the above properties may fail when tie-breaking arises, depending on the equilibrium selection. The following examples demonstrate these phenomena.
\begin{example}
Let $\mu_0=1/2$ and $r=3/4$. Suppose that $\pi(s_h|H)=6/11$ and $\pi(s_h|L)=1/11$.
Then, $\varphi(\pi)=(6,1/2)$. For each $\varepsilon>0$, let $\pi'(\varepsilon),\pi''(\varepsilon)\in\overline{\Pi}^B$ be information structures satisfying $\varphi(\pi''(\varepsilon))=(6+\varepsilon,1/2)$ and $\varphi(\pi'(\varepsilon))=(6-\varepsilon,1/2)$. Note that $\pi''(\varepsilon)\succsim_B\pi \succsim_B  \pi'(\varepsilon)$ for all $\varepsilon>0$.

Define three equilibrium strategy profiles for the first three agents in $(\mathcal D_r,\pi)$, denoted by $\bm{\sigma}^0$, $\bm{\sigma}^+$, and $\bm{\sigma}^-$. 
First, let $\bm{\sigma}^0$ be an equilibrium in which every agent chooses $a_0$ whenever she is indifferent between the two actions.
Under $\bm{\sigma}^+$, agent 2 follows her private signal whenever she is indifferent: she chooses $a_0$ after observing $s_l$ and $a_1$ after observing $s_h$. Agent 3 chooses $a_0$ whenever she is indifferent.
Under $\bm{\sigma}^-$, agent 2 takes the action opposite to her private signal whenever she is indifferent: she chooses $a_1$ after observing $s_l$ and $a_0$ after observing $s_h$. Agent 3 chooses $a_0$ whenever she is indifferent.

It is straightforward to verify that agent~3 chooses $a_1$ if and only if the realized signal profile is $(s_h,s_h,s_h)$, $(s_h,s_h,s_l)$, or $(s_h,s_l,s_h)$ under $\bm{\sigma}^0$; $(s_h,s_h,s_h)$, $(s_h,s_h,s_l)$, $(s_h,s_l,s_h)$, or $(s_l,s_h,s_h)$ under $\bm{\sigma}^+$; and $(s_h,s_h,s_h)$ or $(s_h,s_l,s_h)$ under $\bm{\sigma}^-$.
Hence, it can be calculated that
\[
V_3^{\mathcal{D}_r}(\pi,\bm{\sigma}^0)=\frac{513}{10648}, \quad
V_3^{\mathcal{D}_r}(\pi,\bm{\sigma}^+)=\frac{663}{10648}, \quad
V_3^{\mathcal{D}_r}(\pi,\bm{\sigma}^-)=\frac{363}{10648}.
\]
By Proposition \ref{prop: continuity}, we know that
\begin{align*}
\lim_{\varepsilon \searrow 0}V_3^{\mathcal{D}_r}(\pi'(\varepsilon))=\lim_{\varepsilon \searrow 0}V_3^{\mathcal{D}_r}(\pi''(\varepsilon))=V_3^{\mathcal{D}_r}(\pi,\bm{\sigma}^0).
\end{align*}
Therefore, for sufficiently small $\varepsilon>0$, it follows that
$V_3^{\mathcal{D}_r}(\pi'(\varepsilon))>V_3^{\mathcal{D}_r}(\pi,\bm{\sigma}^-)$ although $\pi\succsim_B \pi'(\varepsilon)$, and $V_3^{\mathcal{D}_r}(\pi''(\varepsilon))<V_3^{\mathcal{D}_r}(\pi,\bm{\sigma}^+)$ although $\pi''(\varepsilon)\succsim_B \pi$. \qed
\end{example}
\end{remark}

%%%%%%%%%%%%%%%%%%%%%%%%%%%%%%%%%%%%%%%%%%%%%%%%%%%%%%%%%%
% %%%%%%%%%%%%%%%%%%%%%%%%%%%%%%%%%%%%%%%%%%%%%%%%%%%%%%%%%%%%%%%%%%%%%%%%%%%%%%
\end{document}

\endgroup

\end{document}